\newif\iflncs
\newif\ifsubmission 
\newif\ifdisplayextendedabstract 
\newif\ifshowrevisions 
\newif\ifshowcomments 

\lncsfalse
\submissionfalse
\displayextendedabstractfalse
\showrevisionsfalse
\showcommentsfalse

\ifsubmission
  \showrevisionsfalse
  \showcommentsfalse
\fi

\iflncs
  \documentclass{llncs}
\else
  \documentclass{article}
\fi

\iflncs
  \usepackage{amsmath,amssymb,mathtools}
\else
  \usepackage[margin=1in]{geometry}
  \usepackage[T1]{fontenc}
  \usepackage{amsmath,amssymb,amsthm,mathtools}
\fi
\usepackage{graphicx}

\iflncs
\else
  \usepackage[numbers,sort&compress]{natbib}
\fi

\usepackage{algorithm}
\usepackage{algpseudocode}
\usepackage{mdframed}
\usepackage{xcolor}
\usepackage{xurl}
\usepackage{hyperref}
\usepackage{cleveref}
\hypersetup{
  colorlinks=true,
  linkcolor=red,    
  citecolor=blue
}
\iflncs
  \hypersetup{hidelinks}
\fi
\ifsubmission
  \hypersetup{pdfauthor={},pdfsubject={},pdfkeywords={}}
\fi

\makeatletter
\renewcommand{\theHALG@line}{\thealgorithm.\arabic{ALG@line}}
\makeatother

\iflncs
  \spnewtheorem{fact}{Fact}{\bfseries}{\itshape}
  \let\LNCSproof\proof
  \let\endLNCSproof\endproof
  \RenewDocumentEnvironment{proof}{o}
    {\IfNoValueTF{#1}{\LNCSproof}{\renewcommand{\proofname}{#1}\LNCSproof}}
    {\qed\endLNCSproof}
\else
  \newtheorem{theorem}{Theorem}[section]
  \newtheorem{lemma}[theorem]{Lemma}

  \newtheorem{corollary}[theorem]{Corollary}
  \theoremstyle{definition}
  \newtheorem{definition}[theorem]{Definition}
  \newtheorem{remark}[theorem]{Remark}
  \AddToHook{env/lemma/begin}{\crefalias{theorem}{lemma}}
  \AddToHook{env/proposition/begin}{\crefalias{theorem}{proposition}}
  \AddToHook{env/fact/begin}{\crefalias{theorem}{fact}}
  \AddToHook{env/corollary/begin}{\crefalias{theorem}{corollary}}
  \AddToHook{env/definition/begin}{\crefalias{theorem}{definition}}
  \AddToHook{env/remark/begin}{\crefalias{theorem}{remark}}
\fi

\crefname{fact}{fact}{facts}
\Crefname{fact}{Fact}{Facts}

\newcommand{\ket}[1]{\lvert #1\rangle}
\newcommand{\bra}[1]{\langle #1\rvert}
\newcommand{\braket}[2]{\langle #1\mid #2\rangle}
\newcommand{\proj}[1]{\lvert #1\rangle\!\langle #1\rvert}
\newcommand{\abs}[1]{\left\lvert #1\right\rvert}
\newcommand{\norm}[1]{\left\lVert #1\right\rVert}
\newcommand{\E}{\mathbb{E}}
\newcommand{\C}{\mathbb{C}}
\newcommand{\calH}{\mathcal{H}}
\newcommand{\cA}{\mathcal A}
\newcommand{\cC}{\mathcal C}
\newcommand{\cD}{\mathcal D}
\newcommand{\cF}{\mathcal F}

\newcommand{\cL}{\mathcal L}
\newcommand{\cM}{\mathcal M}
\newcommand{\cO}{\mathcal O}

\newcommand{\Id}{I}
\newcommand{\dd}{\mathrm{d}}
\DeclareMathOperator{\Tr}{Tr}
\DeclareMathOperator{\Sym}{Sym}
\DeclareMathOperator{\negl}{negl}
\DeclareMathOperator{\poly}{poly}
\DeclareMathOperator{\im}{im}
\newcommand{\Bank}{\mathsf{Bank}}
\newcommand{\Ver}{\mathsf{Ver}}
\newcommand{\pk}{\mathsf{pk}}
\newcommand{\sk}{\mathsf{sk}}
\newcommand{\Adv}{\cA}
\newcommand{\eps}{\varepsilon}
\newcommand{\supp}{\operatorname{supp}}
\newcommand{\TI}{\mathsf{TI}}
\newcommand{\API}{\mathsf{API}}
\newcommand{\ATI}{\mathsf{ATI}}

\newcommand{\dum}{\mathsf{dum}}

\newcommand{\acc}{\mathsf{acc}}
\newcommand{\HClone}{\mathsf{HClone}}
\newcommand{\Pir}{\mathsf{CopyProtect}}
\newcommand{\Learn}{\mathsf{Learn}}
\newcommand{\Setup}{\mathsf{Setup}}
\newcommand{\Gen}{\mathsf{Generate}}
\newcommand{\Eval}{\mathsf{Eval}}

\newcommand{\one}[1]{\mathbf 1\!\left[#1\right]}
\newcommand{\learn}{\mathrm{learn}}
\newcommand{\clone}{\mathrm{clone}}

\newcommand{\col}{\mathsf{col}}
\newcommand{\negg}{\mathsf{neg}}
\newcommand{\bal}{\mathsf{bal}}

\NewDocumentCommand{\revised}{+g}{\IfNoValueTF{#1}
    {\ifshowrevisions\color{brown}\fi}{\ifshowrevisions{\color{brown}#1}\else#1\fi}}

\title{More is Less:\\
Optimal Security for Haar Quantum Money and More}

\author{Zihan Hao\thanks{z1hao@ucsd.edu}  \\ \small{UC San Diego} \and Xingjian Li\thanks{lxj22@mails.tsinghua.edu.cn}  \\ \small{Tsinghua University} \and Qipeng Liu\thanks{qipengliu0@gmail.com} \\ \small{UC San Diego} \and Wei Zhan\thanks{weizhan@purdue.edu} \\ \small{Purdue University}}
\date{}

\iflncs
  \author{}
  \institute{}
  \authorrunning{}
  \titlerunning{}
\fi

\ifsubmission
  \author{}
\fi

\begin{document}

\maketitle

\thispagestyle{plain}
\pagestyle{plain}

\begin{abstract}
Quantum cryptography leverages unclonability to enable a wide range of
cryptographic applications that are impossible classically, such as digital
currency protected against counterfeiting by quantum mechanics. Security
requires that no efficient user can produce even one additional valid
banknote beyond those already in their possession. However, existing
security bounds weaken as the number of banknotes available to a user
increases.

In this paper, we study a construction of quantum money
whose asymptotic query security does not deteriorate as long as the number
of banknotes available to a user remains below the scale required for state
tomography. In particular, we show that the construction based on an
$n$-qubit Haar-random state and a reflection oracle achieves optimal query
security: unless a user holds $\Omega(2^n)$ banknotes, their existing banknotes
cannot asymptotically speed up counterfeiting --- their best possible attack is the same as if they do not have any banknotes. To establish this result, we develop a framework based on the
compressed-oracle technique for defining and analyzing progress measures
for quantum tasks such as Haar state cloning. Furthermore, we prove a tight
lower bound for generating $r$ additional copies, give a matching attack,
and apply our results to quantum copy-protection.
\end{abstract}

\newpage

\section{Introduction}

Quantum states enable cryptographic functionalities that have no classical
counterpart. A foundational example is quantum money, first proposed by
Wiesner~\cite{wiesnerConjugateCoding1983}: a bank issues quantum banknotes
that can be verified and spent, but cannot be efficiently counterfeited.
Security requires that an efficient adversary given $k$ honestly minted
banknotes cannot produce $k+1$ valid banknotes, except with negligible
probability. Public-key quantum money further allows anyone to verify a
banknote without knowing the bank's secret
key~\cite{aaronsonQuantumCopyProtectionQuantum2009,aaronsonQuantumMoneyHidden2012}.
A particularly natural and stronger form of quantum money is \emph{quantum coins}, in
which all honestly minted coins are identical
copies of a single pure
state~\cite{moscaQuantumCoins2010,beheraAlmostPublicQuantumCoins2020}.
Security must therefore hold even when an adversary possesses many copies
of the same state.

Ideally, the cost of producing one additional banknote should not decrease as
an adversary acquires more banknotes.
If this deterioration reflects a genuine advantage, it would give wealth
an unintended role: possessing more money would make it easier to attack the system. The concern is not simply that a
wealthier adversary can afford more computational resources (which they always can), but that
the banknotes themselves might provide a shortcut to counterfeiting. This
would give the familiar phrase ``the rich get richer'' an unwelcome
cryptographic meaning.
A monetary system need not promise equal wealth, but it should not turn
wealth into an advantage in evading its rules. The principle we seek is
therefore simple: greater wealth should not grant greater counterfeiting power.

This property is not automatic. For
example, measuring $O(n)$ identical copies of an $n$-qubit hidden-subspace
banknote in the computational basis reveals a spanning set for the secret
subspace with high probability, allowing the adversary to prepare further
copies~\cite[Section~5.1]{aaronsonQuantumMoneyHidden2012}. For Haar-state
constructions with reflection-oracle verification, previous no-cloning
bounds also weaken as the number $k$ of input copies increases. Aaronson's
complexity-theoretic no-cloning theorem establishes security with
polynomially many input copies, but its quantitative bound depends on
$k$~\cite{aaronsonQuantumCopyProtectionQuantum2009}. Aaronson and Christiano
subsequently proved an $\Omega(\sqrt{2^n/k})$ query lower bound for exact
$k\to k+1$ cloning with certainty in the worst
case~\cite[Appendix~10, Theorem~43]{aaronsonQuantumMoneyHidden2012}.
These bounds leave open whether additional copies actually reduce the
query cost of counterfeiting. Improving the parameters of the
complexity-theoretic no-cloning theorem was explicitly posed as an open
problem by
Aaronson~\cite[Section~5]{aaronsonQuantumCopyProtectionQuantum2009}.

In this work, we show that the Haar-state reflection-oracle construction
underlying these proposals~\cite{aaronsonQuantumCopyProtectionQuantum2009,moscaQuantumCoins2010}
has the desired property: as long as $k=o(2^n)$, producing one additional
valid banknote with constant probability requires $\Theta(2^{n/2})$ queries,
the same asymptotic cost as preparing a banknote with no input copies. Thus,
throughout this regime, possessing more banknotes provides no asymptotic
advantage in counterfeiting query complexity.

\subsection{Our Results}

We first state our bound for producing one additional copy. The
algorithm receives $k$ copies of a state $\ket{\psi}$ and query access
to the controlled reflection
\[
  C_{\ket{\psi}}Z
  = I\otimes(I-\proj{\psi})+Z\otimes\proj{\psi}
  = I-2\proj{1}\otimes\proj{\psi}.
\]
This oracle implements the projective measurement
$\{\proj{\psi},I-\proj{\psi}\}$ with one query. The algorithm succeeds
if all $k+1$ output registers pass the projection onto $\ket{\psi}$; its
output may be entangled. We count oracle queries and allow arbitrary
oracle-independent operations.

\begin{theorem}[Tight bounds for $k\to k+1$ cloning]
\label{thm:intro-k-to-k-plus-one}
Let $N=2^n$, and let $\ket{\psi}$ be a Haar-random state in $\C^N$.
Any algorithm given $k$ copies of $\ket{\psi}$ that makes at most $T$
queries to $C_{\ket{\psi}}Z$ and outputs a state $\rho_\psi$ on $k+1$
registers satisfies
\[
  \E_{\ket{\psi}}
  \Tr\!\left[(\proj{\psi})^{\otimes(k+1)}\rho_\psi\right]
  \leq
  O\left(\frac{T^2}{N}+\frac{k+1}{N}\right),
\]
where $\rho_\psi$ includes the algorithm's internal randomness.
Consequently, when $k=o(N)$, the query complexity of producing one
additional copy with probability at least $2/3$, averaged over the
Haar-random state, is $\Theta(\sqrt N)$. Moreover, the bound is asymptotically tight.
\end{theorem}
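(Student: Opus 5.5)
The plan is to prove the upper bound by a hybrid argument that turns the $k$ input copies and the reflection oracle into a single symmetric-subspace object, and then to run a compressed-oracle-style progress measure. The first step is purification of the input: a Haar-random $\ket{\psi}$ together with $k$ copies corresponds to the state $\E_\psi \proj{\psi}^{\otimes k}$. Averaged over $\psi$, this is the maximally mixed state on $\Sym^k(\C^N)$. We therefore introduce a ``database'' register that holds the projection onto $\ket{\psi}^{\otimes m}$ in symmetric form. Concretely, we model the joint state of the adversary and the oracle as a purification in which $\ket{\psi}$ is replaced by a uniform superposition over an orthonormal basis of $\Sym^{K}(\C^N)$ for a large $K$. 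This is a compressed-oracle analogue for Haar states: the reflection $C_{\ket{\psi}}Z$ acts on the adversary register and symmetrizes it into the database.

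Next we define a progress measure $P_t$, namely the squared norm of the component of the joint state after $t$ queries in which the database ``knows'' more than the adversary started with. Equivalently, it is the weight on the part of $\Sym^{k+j}$ with $j\ge1$ that is not explained by the initial $k$ copies. The key steps are:
\begin{enumerate}
\item $P_0=0$.
\item One query increases $\sqrt{P_t}$ by at most $O(1/\sqrt N)$. The reflection is $I-2\proj{1}\otimes\proj{\psi}$. On the part of the adversary's register orthogonal to the span of the existing copies, averaging over the Haar measure gives overlap amplitude $\sqrt{1/N}$ per query, uniformly in $k$. This uniformity is exactly the point of the theorem.
\item Success requires that the final state have overlap with $\proj{\psi}^{\otimes(k+1)}$. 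We bound this overlap by $(\sqrt{P_T}+\sqrt{(k+1)/N})^2$.
\end{enumerate}
The $(k+1)/N$ term comes from the zero-query case: the optimal $k\to k+1$ cloner gives success $\dim\Sym^k/\dim\Sym^{k+1}=(k+1)/(N+k)$, by the standard Werner symmetric-cloning computation. Combining these with $\sqrt{P_T}\le O(T/\sqrt N)$ and $(a+b)^2\le 2a^2+2b^2$ yields $O(T^2/N+(k+1)/N)$.

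The main obstacle is step 2, making the per-query increment independent of $k$. Naively, the $k$ copies let the adversary build states with large overlap with $\ket\psi$, so the reflection could act nontrivially with amplitude $\Theta(1)$. I would handle this by decomposing the query input as $\Pi_{\mathrm{sym}}$ (the component lying in the symmetric span with the existing database copies) plus its complement. On the symmetric component, the reflection acts as a fixed operator determined by the database, so it is a unitary that does not increase progress. On the complement, the new amplitude is bounded via the identity $\E_\psi \proj\psi^{\otimes(m+1)}=\Pi_{\Sym^{m+1}}/\binom{N+m}{m+1}$. The ratio of dimensions, $\frac{m+1}{N+m}$, is $O(1/N)$ only after subtracting the symmetric part. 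Checking this bookkeeping carefully is where I expect the work to lie.

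For tightness and the $\Theta(\sqrt N)$ consequence there are two cases. If $k=o(N)$, the bound forces $T=\Omega(\sqrt N)$ for success $2/3$. For the matching attack, use amplitude amplification with the reflection $C_{\ket\psi}Z$ starting from a fixed state $\ket0$, whose overlap with $\ket\psi$ is about $1/\sqrt N$ in expectation. After $O(\sqrt N)$ Grover iterations this produces a fresh copy of $\ket\psi$ with constant probability, and boosting via verification-and-retry gives $2/3$. Outputting this copy alongside the untouched $k$ inputs succeeds, giving the $O(\sqrt N)$ upper bound. The $(k+1)/N$ term is tight by the symmetric cloner at $T=0$, and the $T^2/N$ term is tight by $T$ Grover steps, which yield success $\Theta(T^2/N)$.
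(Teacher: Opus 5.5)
Your outline has the right architecture: purify the oracle, use a progress measure that only queries can change, bound the per-query increment by $O(1/\sqrt N)$, and pay a $\sqrt{(k+1)/N}$ conversion loss via the triangle inequality. However, the steps that carry the content are missing, and the mechanism you suggest for the crucial one would not work.

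First, there is no concrete model. ``A uniform superposition over a basis of $\Sym^K(\C^N)$'' does not say how $C_{\ket{\psi}}Z$ becomes an explicit unitary on adversary plus database. ``The part of $\Sym^{k+j}$ not explained by the initial $k$ copies'' is not a projector, let alone one acting only on the database, which is what makes it invariant under the adversary's local operations. The paper gets such a model in three moves:
\begin{itemize}
\item It uses left invariance of the Haar measure, conjugating the copies and the oracle by a Haar-random $V$, to reduce to phase states $\ket{\psi_f}$ with a random $f:[N]\to[M]$.
\item It purifies $f$ in the Fourier basis. Labels become signed multisets of invariant balance $k$, and a query acts as $\ket{a}\ket{\widehat P}\mapsto\ket{a}\ket{\widehat P}-\frac2N\sum_b\ket{b}\ket{\widehat{P-e_a+e_b}}$.
\item Progress is the negative-size $\negg(P)$, which is a projector on $\mathsf D$ alone.
\end{itemize}

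Second, and more seriously, the $k$-independence of the per-query increment is exactly the content of the theorem, and you defer it as bookkeeping. As you note, the ratio $\frac{m+1}{N+m}$ with $m\approx k$ is $\Theta(k/N)$. An increment of order $\sqrt{k/N}$ would only reproduce the $\Omega(\sqrt{N/k})$ bound of prior work. You never say what ``subtracting the symmetric part'' means. Moreover, the query register depends on $\psi$ through the copies, so Haar-averaging the overlap of a fixed state is not the relevant quantity. What must be controlled is the norm of a superposition of many pre-query branches, where constructive interference is the danger.

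The paper controls this by orthogonality, not averaging. If $\negg(P)=0$ and $\negg(P-e_a+e_b)=1$, then $\overline{a}$ is the unique negative element of $Q=P-e_a$. So $Q$ determines $(a,P)$, i.e.\ $|\col(Q)|\le 1$, and distinct branches land on orthogonal labels. This gives $\|\Gamma_1R\Gamma_0\ket{\Phi}\|\le\frac{2}{\sqrt N}\|\Gamma_0\ket{\Phi}\|$ for every $k$.

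Your step 3 also does not follow from Werner's zero-query cloner. The zero-progress component of $\ket{\Phi_T}$ is an arbitrary, generally non-isometric linear image of $\ket{\psi}^{\otimes k}$. You therefore need an operator-norm bound for $\Pi_{k+1}$ on the whole zero-progress subspace. The paper proves it by a collision count: for a balance-$k$ label with $\negg(P)=0$, the label $Q=P-\mu(\vec x)$ has a negative element that $\vec x\in[N]^{k+1}$ must cancel. This fixes one coordinate, so $|\col(Q)|\le(k+1)N^k$ and the weight is at most $\frac{k+1}{N}$.

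A minor point on tightness: starting from a fixed $\ket{0}$, the overlap $|\braket{0}{\psi}|^2\sim\mathsf{Beta}(1,N-1)$ is fixed for each $\psi$. Retries must therefore use fresh random starts or random iteration counts. The paper instead amplifies from the maximally entangled state $\ket{\Phi_N}$, whose overlap with $\ket{\psi}_A\otimes\calH_B$ is exactly $1/N$ for every $\psi$.
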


The lower bound is proved in \Cref{thm:haar-k-to-k-plus-one}. The
constant-success query bound is matched by the one-copy case of
\Cref{thm:0-to-k}. Indeed, we prove \Cref{thm:intro-k-to-k-plus-one} (as well as \Cref{thm:intro-k-to-k-plus-r}) for pseudorandom states based on random functions~\cite{JLS18} and then extend the bounds to Haar-random states by random self-reduction and the left invariance of Haar measure; thus, the main message conveyed by both theorems applies to collections of quantum states with classical descriptions.

\begin{remark}
Given $k$ copies of a Haar-random pure state in $\C^N$, the optimal
expected squared overlap between the estimated state and the true state
is $(k+1)/(k+N)$~\cite{brussOptimalStateEstimation1999}.
The optimal universal $k\to k+1$ cloner achieves the same joint
acceptance probability without any oracle
queries~\cite{wernerOptimalCloningPureStates1998}. Thus, $k=\Theta(N)$
is the threshold for 
constant-success cloning without oracle queries. Our first theorem says that $k=\Theta(N)$ is still the threshold for constant-success cloning even with oracle access.
\end{remark}

\medskip

Our lower-bound framework extends to producing several additional copies.
We also give an algorithm with matching query complexity, obtaining the
following generalization to $k\to k+r$ cloning.

\begin{theorem}[Tight bounds for $k\to k+r$ cloning]
\label{thm:intro-k-to-k-plus-r}
Let $N=2^n\geq2$, and let $k,r\geq1$ be integers with $r\leq N$.
For a Haar-random state $\ket{\psi}\in\C^N$, any algorithm given $k$
copies of $\ket{\psi}$ that makes at most $T$ queries to $C_{\ket{\psi}}Z$
and outputs a state $\rho_\psi$ on $k+r$ registers satisfies
\[
  \E_{\ket{\psi}}
  \Tr\!\left[(\proj{\psi})^{\otimes(k+r)}\rho_\psi\right]
  \leq
  \left(
    O\left(\frac{T^2}{{rN}}+{\frac{k+r}{N}}
  \right)\right)^{r}.
\]
The output $\rho_\psi$ may be entangled
and includes the algorithm's internal randomness.

Conversely, there is an algorithm that, for every state
$\ket{\psi}\in\C^N$, takes $k$ input copies, makes $O(\sqrt{rN})$
queries to $C_{\ket{\psi}}Z$, and outputs exactly $\ket{\psi}^{\otimes(k+r)}$
with probability at least $2/3$. Consequently, when $k+r=o(N)$, the
query complexity of producing $r$ additional copies with constant success
probability, averaged over the Haar-random state, is $\Theta(\sqrt{rN})$.
\end{theorem}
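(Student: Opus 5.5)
The theorem has a lower bound and an upper bound, and I would prove them separately, following the route announced after \Cref{thm:intro-k-to-k-plus-one}. I first prove the lower bound for the random-function pseudorandom state family
\[
\ket{\psi_f}=N^{-1/2}\sum_x \omega^{f(x)}\ket{x},
\]
with $f$ a uniformly random function. I then transfer it to Haar states by a random self-reduction. Sample a Haar-random unitary $U$ and set $\ket{\psi}=U\ket{\psi_f}$. Left invariance makes $\ket{\psi}$ Haar distributed. Moreover, the algorithm can simulate $C_{\ket{\psi}}Z=(I\otimes U)\,C_{\ket{\psi_f}}Z\,(I\otimes U^\dagger)$ with one query to $C_{\ket{\psi_f}}Z$. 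It can likewise turn its input copies into copies of $\ket{\psi}$, and apply $U^{\dagger\otimes(k+r)}$ to the output before the test. So any Haar cloner yields a phase-state cloner with the same $T$ and the same success probability, up to the small statistical distance between phase states and Haar states on $k+r$ copies. That distance can be absorbed into the $(k+r)/N$ term, or avoided by using exact designs.

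For the phase-state lower bound I would purify $f$ with the compressed-oracle technique. A query to $C_{\ket{\psi_f}}Z$ can be written as $I-2\proj{1}\otimes\proj{\psi_f}$. Expanding $\proj{\psi_f}$ in the Fourier basis of the purified function register, each query changes the compressed database by a bounded number of entries, with amplitude about $N^{-1/2}$ per entry. The $k$ input copies $\ket{\psi_f}^{\otimes k}$ correspond to an initial database supported on at most $k$ recorded points. As progress measure I would take the projection onto databases with at least $m$ recorded entries beyond some baseline. The plan is to show:
\begin{enumerate}
\item Each query increases the square root of this progress by only $O(1/\sqrt N)$ per level.
\item Passing $(\proj{\psi_f})^{\otimes(k+r)}$ on the output requires, up to a $((k+r)/N)^{r}$ loss, the database to hold about $r$ new degrees of freedom beyond what the $k$ copies give.
\end{enumerate}
The $k$ copies supply essentially a symmetric-subspace state on $k$ registers. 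Projecting onto $\Sym^{k+r}$ relative to $\Sym^k\otimes(\C^N)^{\otimes r}$ costs a factor $\binom{k+r}{r}/\binom{N+k+r-1}{r}\approx((k+r)/N)^r$, which accounts for the $(k+r)/N$ term. Climbing $r$ levels in $T$ queries then gives amplitude at most $\binom{T}{r}N^{-r/2}\le (eT/\sqrt{rN}\cdot\sqrt{1/r}\cdots)$. More precisely, the probability is bounded by $(O(T^2/(rN)))^r$ via the standard $\binom{T}{r}^2/N^r\le (eT^2/(r^2N))^r$ estimate, and this is strengthened to $T^2/(rN)$ by a tighter sum over orderings. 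A triangle inequality combining the two contributions gives $(O(T^2/(rN)+(k+r)/N))^r$.

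The main obstacle is step 2: defining the progress measure so that it is cleanly multiplicative in $r$ while the $k$ free copies contribute only the $((k+r)/N)^r$ term. Here I would first try a representation-theoretic decomposition of the database by the symmetric-group and unitary action (Schur–Weyl), tracking the ``excess'' copies.

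For the upper bound I would use amplitude amplification in $\Sym^{r}$. Starting from the maximally mixed state on $\Sym^r(\C^N)$, or from a product of random states, the overlap with $\ket{\psi}^{\otimes r}$ is about $1/\binom{N+r-1}{r}$, which is too small. Instead I would generate the copies sequentially. For the $j$-th new copy, Grover-style fixed-point amplification from a random state reaches $\ket{\psi}$ in $O(\sqrt N)$ queries, giving $O(r\sqrt N)$ in total, which is too many. To achieve $O(\sqrt{rN})$, I would use $r$ parallel amplifications sharing the reflection, i.e.\ amplitude amplification that targets ``at least one of $r$ registers equals $\ket{\psi}$''. That single success event costs $O(\sqrt{N/r})$ queries. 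Repeating it $r$ times, each time on the remaining registers, costs $\sum_j\sqrt{N/j}=O(\sqrt{rN})$. Each success is verified by the projective measurement and converted into an exact copy, and error is controlled by exact amplitude amplification with known angles. This gives success probability at least $2/3$, and the $k$ input copies are passed through untouched.
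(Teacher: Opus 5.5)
Both halves of your proposal have genuine gaps, and the upper-bound construction cannot work as designed.

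\textbf{Upper bound.} Amplifying ``at least one of the $j$ registers equals $\ket{\psi}$'' means reflecting about $I-(I-\proj{\psi})^{\otimes j}$. With $C_{\ket{\psi}}Z$, the natural implementation computes and uncomputes the indicator of every register, so each Grover iteration costs $2j$ queries. A round then costs $O(\sqrt{jN})$ rather than $O(\sqrt{N/j})$, and the total exceeds even the naive $O(r\sqrt{N})$. No cleverer implementation can rescue the claimed cost. Your first round would output one new copy with constant probability after $O(\sqrt{N/r}+r)$ queries, while leaving the $k$ inputs untouched. For $1\ll r\ll\sqrt{N}$ and $k=o(N)$, this contradicts the $r=1$ case of the lower bound you are proving, $O((T^2+k+1)/N)$. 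Exact amplification ``with known angles'' is also unavailable, since the overlap of a random start with $\ket{\psi}$ is unknown.

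The missing idea is that the batch saving comes from copies already in hand. The paper measures $m$ copies with the covariant POVM $M_m(\dd g)=D_m(\proj{g})^{\otimes m}\dd g$. This yields a classical description of $\ket{g}$ with $P=|\braket{g}{\psi}|^2\sim\mathsf{Beta}(m+1,N-1)$, so $\E[P^{-1}]=(m+N-1)/m$. It then runs fixed-point amplification from $\ket{g}$ under the Markov-based promise $p_{\min}=m/(c^2(m+N-1))$. Each fresh copy costs $O(c\sqrt{N/m})$ queries, so going from $m$ to $2m$ copies costs $O(c\sqrt{mN})$. Repeated doubling with $c_j=4(L-j)$, or a single doubling of $r$ inputs when $r<k$, gives $O(\sqrt{rN})$ queries with success probability at least $2/3$.

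\textbf{Lower bound.} Your twirl reduction is the paper's, and it is exact because $V\ket{\psi_f}$ is exactly Haar. Drop the statistical-distance caveat: an additive term of that kind would not have the $r$-th-power form of the bound anyway. The core argument is missing, however, and your step~1 is false.

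The right progress measure is the number $\negg(P)$ of \emph{negative} entries of the Fourier label, read as a signed multiset; let $\Gamma_\ell$ project onto $\negg(P)=\ell$. The inputs contribute only positive labels $\mu(\vec{x})$. Each query $\widehat{P}\mapsto\widehat{P-e_a+e_b}$ preserves $\bal(P)=k$, and $\negg$ grows by at most one per query. Since the multiset size is $k+2\negg(P)$, ``entries beyond a baseline'' must be counted this way, not by support size.

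The per-query transfer is $\|\Gamma_\ell R\Gamma_{\ell-1}\ket{\Phi}\|^2\le(4\ell/N)\|\Gamma_{\ell-1}\ket{\Phi}\|^2$, and the factor $\ell$ is tight. Up to $\ell$ basis states $\ket{a}\ket{\widehat{Q+e_a}}$ with $Q(a)<0$ all share the same query term $-\frac{2}{N}\sum_b\ket{b}\ket{\widehat{Q+e_b}}$, and these can interfere constructively. A level-independent $O(1/\sqrt{N})$ factor, combined with your climbing count, would give $(O(T^2/(r^2N)))^r$, which contradicts the $O(\sqrt{rN})$ algorithm. Also, passing from $T^2/(r^2N)$ to $T^2/(rN)$ weakens the bound rather than strengthening it. The correct recursion is $\|\Gamma_{\ge\ell}\ket{\Phi_T}\|\le\|\Gamma_{\ge\ell}\ket{\Phi_{T-1}}\|+\sqrt{4\ell/N}\,\|\Gamma_{\ge\ell-1}\ket{\Phi_{T-1}}\|$. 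It gives $\|\Gamma_{\ge\ell}\ket{\Phi_T}\|^2\le\binom{T}{\ell}^{2}\ell!\,(4/N)^\ell\le\binom{r}{\ell}(4eT^2/(rN))^\ell$.

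Step~2 also cannot be one global $\Sym^{k+r}$-versus-$\Sym^{k}$ factor attached to the climbing bound by a single triangle inequality. The winning amplitude mixes histories in which queries create $\ell$ of the $r$ required negative elements, and the remaining $r-\ell$ are matched by chance in the output registers at cost $(k+r)/N$ each. What is needed is the level-wise lemma $\|\Pi_{k+r}\Gamma_\ell\ket{\Phi}\|^2\le\binom{r}{\ell}\bigl(\frac{k+r}{N}\bigr)^{r-\ell}\|\Gamma_\ell\ket{\Phi}\|^2$. It follows from balance: $Q=P-\mu(\vec{x})$ has $\bal(Q)=-r$, so $\vec{x}$ must cancel $r-\ell$ of $r$ fixed negative elements of $Q$. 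Hence every collision set has size at most $\binom{r}{\ell}\binom{k+r}{r-\ell}(r-\ell)!\,N^{k+\ell}$. Pairing each level's conversion factor with its weight and summing by the binomial theorem bounds $\|\Pi_{k+r}\ket{\Phi_T}\|$ by $\bigl(2\sqrt{e}\,T/\sqrt{rN}+\sqrt{(k+r)/N}\bigr)^{r}$. No Schur--Weyl analysis is needed.
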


The lower bound is proved in \Cref{thm:haar-k-to-k-plus-r}, and the
matching upper bound is proved in \Cref{cor:k-to-k-plus-r-upper}.

\begin{remark}
The dependence on $r$ reveals a further feature of state generation.
When $k+r=o(N)$, producing $r$ additional copies together with constant
joint success probability has optimal query complexity $\Theta(\sqrt{rN})$,
or $\Theta(\sqrt{N/r})$ queries per new copy on average.
In other words, repeatedly applying the optimal algorithm from \Cref{thm:intro-k-to-k-plus-one} does not yield the optimal algorithm for the case of $k \to k+r$. Thus, the task benefits from batch generation: the more copies produced, the lower the average cost per copy.
Thus, producing more copies together reduces the average cost per new
copy, while possessing more copies in advance gives no asymptotic
advantage for the same cloning task.
\end{remark}

\paragraph{Cryptographic applications.}

Our cloning bounds yield two cryptographic applications. First, they
immediately give optimal query security for Haar quantum money in the
reflection-oracle model~\cite{aaronsonQuantumCopyProtectionQuantum2009,moscaQuantumCoins2010}. Since all honest
banknotes are copies of one Haar-random state and verification projects onto
that state, producing $r$ additional valid banknotes is precisely the
$k\to k+r$ cloning task. Thus, when $k+r=o(N)$, counterfeiting $r$
additional banknotes with constant probability requires $\Theta(\sqrt{rN})$
queries, independently of the number $k$ of banknotes already held. In particular, for the most interesting case $r=1$, it gives quantum money the ideal property: greater wealth should not confer greater counterfeiting power.

\begin{theorem}[Tight security of quantum money, informal]
\label{thm:intro-quantum-money}
Let $N=2^n$ where $n$ is the number of qubits of each banknote. There exists a quantum money scheme relative to an oracle such that any counterfeiter given $k\geq1$ valid banknotes and making at most $T$
oracle queries produces $k+1$ banknotes that all pass verification
with probability at most
\[
  O\left(\frac{T^2+{k+1}}{N}\right).
\]
Consequently, when $k=o(N)$, counterfeiting one additional banknote
with success probability at least $2/3$ has query complexity
$\Theta(\sqrt N)=\Theta(2^{n/2})$, independently of $k$.
\end{theorem}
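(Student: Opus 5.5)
The plan is to reduce \Cref{thm:intro-quantum-money} directly to \Cref{thm:intro-k-to-k-plus-one}, which we take as given. We instantiate the scheme as the Haar quantum money of~\cite{aaronsonQuantumCopyProtectionQuantum2009,moscaQuantumCoins2010}. The bank samples a Haar-random $\ket{\psi}\in\C^N$, and the oracle is $C_{\ket{\psi}}Z$. Minting outputs copies of $\ket{\psi}$; since the mint is not part of the counterfeiter's interface, we treat the oracle as the public verification oracle. Verification of a register is the projective measurement $\{\proj{\psi},I-\proj{\psi}\}$, which can be implemented with one query to $C_{\ket{\psi}}Z$ by phase kickback (prepare an ancilla in $\ket{+}$, apply the oracle, measure the ancilla in the Hadamard basis). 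Completeness is immediate, because honest banknotes $\ket{\psi}$ always pass.

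For soundness, we consider a counterfeiter that receives $\ket{\psi}^{\otimes k}$, makes $T$ queries, and outputs $k+1$ registers, possibly entangled. The event that all $k+1$ registers pass verification has probability $\Tr[(\proj{\psi})^{\otimes(k+1)}\rho_\psi]$. This holds because the verifications act on disjoint registers, so the projectors commute and their joint acceptance is the tensor product projector; sequential verification gives the same product of projectors. Averaging over the bank's choice of $\ket{\psi}$ and applying \Cref{thm:intro-k-to-k-plus-one} yields the bound $O((T^2+k+1)/N)$.

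Two points need care. First, the counterfeiter's queries must be exactly of the type allowed in \Cref{thm:intro-k-to-k-plus-one}. If the scheme exposed a separate verification oracle, that oracle would be simulated by $C_{\ket{\psi}}Z$ at one query per call, so $T$ changes by at most a constant factor. Second, if the security definition lets the adversary choose which registers to submit, or submit them adaptively, we argue that the final submitted registers still form some state $\rho_\psi$ on $k+1$ registers produced within the query budget.

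For the ``consequently'' statement, success probability at least $2/3$ forces $T^2+k+1=\Omega(N)$. When $k=o(N)$ this gives $T=\Omega(\sqrt N)$. The matching upper bound $O(\sqrt N)$, independent of $k$, comes from the attack in \Cref{thm:intro-k-to-k-plus-r} with $r=1$, or from the one-copy generation result referenced there: the counterfeiter ignores its banknotes and prepares a fresh $\ket{\psi}$ by amplitude amplification using the reflection oracle. I expect no real obstacle, since the analytic work lies in \Cref{thm:intro-k-to-k-plus-one}. The main thing to check is that the formal money definition matches the cloning experiment, in particular that verification of all $k+1$ notes corresponds to the product projector even when the outputs are entangled.
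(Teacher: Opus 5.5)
Your proposal is correct and follows the paper's argument. The paper likewise observes that passing all $k+1$ verifications is exactly the projection onto $\ket{\psi}^{\otimes(k+1)}$, so the bound is \Cref{thm:haar-k-to-k-plus-one}, i.e.\ $(2T+\sqrt{k+1})^2/N$, and the matching $O(\sqrt N)$ attack comes from the one-copy case of \Cref{thm:0-to-k}. The only difference is cosmetic: the paper formally instantiates the scheme with the Aaronson--Christiano generator/serial-checker/controlled-reflection oracles to get a mini-scheme and then public-key money, whereas you use the single-state coin scheme with oracle $C_{\ket{\psi}}Z$ directly, which suffices for this informal statement.
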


Second, we adapt the copy-protection approach of Aaronson, Liu, Liu,
Zhandry, and Zhang~\cite{aaronsonNewApproachesQuantum2021}, which uses
hidden-subspace states, to Haar-random program states with a quantum oracle. Our construction works for all unlearnable function families whose verification does not require oracle calls to the original function. We specify this property later; many cryptographic applications, such as signing and decryption algorithms (both studied in~\cite{aaronsonNewApproachesQuantum2021}) satisfy this property.

\begin{theorem}[Security of our copy-protection scheme, informal]
\label{thm:intro-copy-protection}
Let $N=2^n$. There is a copy-protection construction for all function families with oracle-free verification in the
quantum-oracle model using $n$-qubit Haar random states with
the following guarantee. 
Any pirate given $k$ identical program copies and making $T_{\cA}$ evaluation-oracle queries satisfies, for every $0<\eta<1$,
\[
  \Pr[\text{piracy succeeds}]
  \leq \mu(T_\cA)+\HClone_{N,k}(T^*)+\eta,
  \qquad T^*=T_{\cA}+P(k,n,1/\eta).
\]
Success means producing $k+1$ programs that jointly pass the
threshold correctness tests. Here $\mu(q)$ is the maximum
success probability of a $q$-query learner for the underlying function
family at the specified correctness threshold, under the same input
distribution; $\HClone_{N,k}(q)$ is the optimal success probability
for $k\to k+1$ Haar-state cloning using $q$ reflection-oracle queries.
The polynomial $P$ depends only on the fixed
correctness parameters and output-program query bound, and is independent of $T_{\cA}$.
\end{theorem}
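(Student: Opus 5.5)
The plan is a reduction from piracy to Haar-state cloning, via a hybrid argument in the style of Aaronson--Liu--Liu--Zhandry--Zhang, with Haar states replacing hidden subspaces. The construction I have in mind is as follows. The program state is a Haar-random $\ket{\psi}$, and the evaluation oracle $\cO$ answers $x$ with $f(x)$ only when the control register is projected onto $\ket{\psi}$. Concretely, $\cO$ applies the measurement $\{\proj{\psi},I-\proj{\psi}\}$ to a supplied register and, on acceptance, writes $f(x)$. Honest evaluation consumes one query and returns $\ket{\psi}$ undisturbed, so it can be repeated.

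\textbf{Step 1 (threshold tests to cloning).} The first step handles the $k+1$ output programs. Each must pass a threshold correctness test: on random inputs from the specified distribution it evaluates $f$ correctly with probability above the threshold. I will use approximate threshold implementation ($\ATI$), which applies since verification is oracle-free. Jointly measuring $\ATI$ on all $k+1$ registers yields, except with probability $\eta$, a collapsed state in which every program is ``good'' at a slightly lower threshold. The cost is a polynomial $P(k,n,1/\eta)$ of additional queries, since the output programs themselves call $\cO$ and their query bound is fixed. This accounts for the $+\eta$ term and the shift $T^*=T_\cA+P$.

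\textbf{Step 2 (hybrid).} Next I replace $\cO$ by an oracle that is simulated from the reflection $C_{\ket\psi}Z$ together with classical access to $f$. Each query to $\cO$ is implemented with one controlled reflection plus a query to $f$. The combined experiment, adversary followed by $\ATI$-tested programs, then uses at most $T^*$ reflection queries.

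\textbf{Step 3 (dichotomy).} Consider the event that each of the $k+1$ good programs, when run, actually queries $\cO$ with a control register having large overlap with $\ket\psi$. In that case, running each program coherently up to its first accepting $\cO$ query and extracting that control register produces $k+1$ registers that jointly pass projection onto $\ket\psi$. This bounds the probability by $\HClone_{N,k}(T^*)$. Otherwise, some good program computes $f$ above the threshold on inputs where $\cO$ returns nothing. That program never used $\cO$ nontrivially, so the adversary's $T_\cA$ queries to $f$ alone yield a learner, bounded by $\mu(T_\cA)$. A union bound then gives the claimed estimate.

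\textbf{Main obstacle.} The hard part is Step 3: extracting the $k+1$ copies jointly and coherently from entangled programs without disturbing the others. I would first try a gentle-measurement sequence. Process the programs in order, and in each one measure whether its $\cO$-query accepts, using the $\ATI$ guarantee to lower-bound the joint acceptance. I would then argue that the non-accepting branch is bounded by the learner term, so that the errors add rather than multiply.
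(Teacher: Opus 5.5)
Your architecture matches the paper's: ATI to isolate good programs, simulation of $\cO$ from $C_{\ket\psi}Z$ with a self-sampled $f$, then a learn-or-clone dichotomy. But Step~3 has a genuine gap, and the gentle-measurement repair you sketch would not close it.

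First, ``each program queries $\cO$ with large overlap with $\ket\psi$'' is not an event. The $k+1$ output programs form an entangled superposition of high- and low-query-weight components, so you cannot union-bound over it. The missing idea is to make query weight an observable and split the \emph{subspace} rather than the probability space. The paper does this in four moves:
\begin{itemize}
\item It takes the dilated ATI circuit $V$, which runs the program many times with $T=T_{\ATI}$ queries.
\item It defines $Q=\frac1T\sum_t V^{<t\dagger}(\proj{\psi}\otimes I)V^{<t}$.
\item It restricts to the accepted subspace via $\widetilde Q=\widehat M Q\widehat M$.
\item It writes $\bigotimes_i\widehat M^{(i)}=\Pi_{\clone}+\Pi_{\learn}$, where $\Pi_{\clone}=\bigotimes_i\one{\widetilde Q^{(i)}\ge\beta}$ and $\Pi_{\learn}$ is decomposed orthogonally by the first register lying in $\Lambda_{<\beta}=\widehat M-\one{\widetilde Q\ge\beta}$.
\end{itemize}
Neither the learner nor the cloner ever measures which branch it is in. The split is used only in the analysis, and errors add because each branch is bounded by an operator-norm statement on one fixed purified state.

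On the learning side, ``never used $\cO$ nontrivially'' must be quantified by a BBBV hybrid, $\|(I-\widehat M^{\dum})\Lambda_{<\beta}\|\le 2T_{\ATI}\sqrt\beta$. The learner simulates $\cA$ with its own Haar state $\ket\phi$ and locates a good register by sequential ATI tests of the dummy programs. That test step, not your Step~1, is where oracle-free verification gives $q_{\text{learn}}=q_\cA$.

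Second, keeping that BBBV loss below $\delta$ over all $k+1$ registers forces $\beta=\delta^2/(4(k+1)T_{\ATI}^2)$. So the cloning branch only guarantees inverse-polynomially small weight on $\ket\psi$. Your extraction fails here. Running each program and grabbing the control register at its first accepting query, or measuring acceptance register by register, succeeds with probability comparable to that small weight per register. The joint success can therefore be exponentially small in $k+1$, not $\Tr[\Pi_{\clone}\sigma]-O((k+1)\delta)$, and the intermediate measurements also disturb the remaining run.

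What you need is a coherent amplification that is a near-isometry on the entire high-weight subspace, so it can be applied to all registers in parallel regardless of entanglement. The paper builds this as follows:
\begin{itemize}
\item The history-state unitary $U_{\text{hist}}$ prepares a uniform superposition over the query time $t$.
\item The operator $A=\Pi_{\text{state}}U_{\text{hist}}\Pi_{\text{init}}$ satisfies $A^\dagger A=\widetilde Q\otimes\proj{0}$, and its left singular vectors have the form $\ket\psi\otimes\ket{w_j}$.
\item Singular-vector transformation with cutoff $\sqrt\beta$ then costs $O(T_{\ATI}\beta^{-1/2}\log(1/\delta))$ reflection queries per register, with total error $(k+1)\delta$.
\end{itemize}
This cost is exactly the source of the polynomial $P(k,n,1/\eta)$ in $T^*$.
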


The three terms account for learning the protected function, cloning
the program state, and the reduction's approximation error,
respectively~(\Cref{thm:copy-protect}). Writing
$P=P(k,n,1/\eta)$, our cloning bound gives
\[
  \HClone_{N,k}(T^*)
  \leq O\!\left(\frac{T_{\cA}^{2}+P^{2}+k+1}{N}\right).
\]
In particular, the contribution $T_{\cA}^{2}/N$ has no multiplicative
loss depending on $k$. 
For $k$ below a suitable threshold $k^*$, the term $\negl := ({P^2+k+1})/{N}$ is negligible. This gives the following asymptotic bound for breaking the underlying copy-protection protocol:
\[
  \mu(T_\cA)+\frac{T_{\cA}^2}{N}+\eta+\negl.
\]
This is the sense in which many copies do not lower the
cloning barrier: one must either break the learning game (w.r.t. $\mu(T_\cA)$) or the cloning game (w.r.t. $\frac{T_\cA^2}{N}$).

\begin{remark}
\c{C}akan and Goyal~\cite{cakanUnclonableCryptographyUnbounded2024}
already established unbounded collusion-resistant copy-protection for
unlearnable functionalities relative to a classical oracle. Their
construction issues freshly randomized programs, which can be transformed into identical pure-state copies using the constructions in~\cite{ananthLessMoreCopy2026,cakanMultiCopySecurity2025}; we analyze identical
pure-state copies in a quantum-oracle construction. Our focus is whether security weakens (i.e., whether the attacker's query complexity decreases) when the
number of copy-protection programs available to a pirate increases. 
\end{remark}

\medskip

Finally, our proofs of \Cref{thm:intro-k-to-k-plus-one,thm:intro-k-to-k-plus-r}
also establish the cloning bounds for phase states in the QROM, so the
above cryptographic applications extend to phase states as well.
Because phase states are specified by classical functions, they suggest
a possible route toward efficient plain-model constructions or heuristic
instantiations. Developing such instantiations and establishing their
security remain open directions.

\subsection{Technical Overview}

\subsubsection{Lower Bounds}
\label{sec:technical-overview-lower-bounds}

We sketch the proof idea of the lower bounds on generating extra copies.
We mainly focus on the more intuitive $k\to k+1$ case, while briefly mentioning how to generalize our methods to analyze the $k\to k+r$ case for arbitrary $r>0$. 

The proof consists of two steps: (1) use random self-reduction to switch to phase states in the quantum random oracle model (QROM), and (2) derive a framework for analyzing the cloning task for phase states in this model. Previously, almost all applications of random oracle techniques addressed classical problems, such as post-quantum security or problems with classical outputs; we provide one of the few cases where the random oracle model quantifies the advantage in an inherently quantum task, namely cloning phase states in the QROM.

\paragraph{Recap of the compressed oracle technique \cite{zhandry19}.}
We start by recalling the compressed oracle technique introduced by
Zhandry in \cite{zhandry19} and how it could be used to prove average-case lower
bounds on problems such as collision finding. In this setting, an algorithm has query access to a random oracle $f:[N]\to[N]$ and aims to output a pair of inputs $(x_1,x_2)$ such that $f(x_1)=f(x_2)$. The analysis follows this framework:

\begin{enumerate}
    \item \textbf{Purify the oracle.}
    The algorithm is given a random oracle $f$, which is equivalent to
    initializing an oracle register with the superposition
    $\frac{1}{\sqrt{N^N}}\sum_f\ket{f}$, while the oracle register controls
    the algorithm's queries. Further, viewing the oracle register in the
    Fourier basis, it begins with the all-zero table
    $\ket{\widehat{0}}^{\otimes N}$, and each query modifies one entry of
    the table. Only queried positions of the table are non-zero.

    \item \textbf{Define progress.}
    Let $\Gamma$ project onto the states with a collision among the queried
    positions. This approximates the winning condition, up to a
    $1/N$ probability of finding a collision by pure chance. We use the
    weight of the current state on $\Gamma$ to evaluate the progress.
    Since $\Gamma$ acts only on the oracle register, this weight is
    invariant under local operations.

    \item \textbf{Bound the increment.}
    One can first show that the chance of falling into $\Gamma$ from
    $I-\Gamma$ after a single query is small. Then, with $T$ queries, by
    the triangle inequality on amplitudes, the progress on $\Gamma$ is at most $T$ times the
    progress from one query.
\end{enumerate}

We borrow this framework and customize it to our problem with a
state reflection oracle and a Haar random distribution.

\paragraph{Reduce to a discrete distribution.}
If we directly apply the above lower-bounding framework, we immediately
face a challenge when purifying the oracle. In our problem, the Haar-random
distribution is continuous and needs infinite space to purify. To simplify
the analysis, we first reduce the problem to a discrete distribution
$\mathcal{D}_n$ by a random self-reduction argument.

For a state $\ket{\psi}\sim\mathcal{D}_n$, we sample a random oracle
$f:[N]\to[M]$ and encode $f$ into the phases,
$\ket{\psi}=\frac{1}{\sqrt{N}}\sum_x\omega^{f(x)}\ket{x}$, where
$\omega=e^{2\pi i/M}$. With such a distribution, we can label the oracle
by $f$ and therefore purify it using a superposition in a standard way.
One may notice that $\mathcal{D}_n$ and the Haar measure are already
statistically close with polynomial copies and queries, as shown in \cite{JLS18}; here, we can make the
reduction without any loss, through the following random self-reduction trick.

The random self-reduction can potentially reduce the problem to any
distribution $\mathcal{D}$, by the left invariance of the Haar measure.
If we have an algorithm $\mathcal{A}$ that performs well on a Haar-random
input state $\ket{\psi}$, then we can create an algorithm $\mathcal{A}'$
that first randomly twirls the oracle state, runs $\mathcal{A}$, and then
undoes the twirl on the output. Specifically, it samples a Haar-random
unitary $V$ and applies it to each state copy and the reflection oracle, so it is
effectively running $\mathcal{A}$ on the state $V\ket{\psi}$, which
follows the Haar-random distribution. Therefore, the lower bound for
$\mathcal{D}$ works for the Haar-random distribution as well. We can also replace the controlled reflection $C_{\ket{\psi}}Z$ with the plain reflection
$R_\psi=I-2\ket{\psi}\bra{\psi}$. More details can be found in
the proof of \Cref{thm:haar-k-to-k-plus-r}.

\paragraph{Purify the oracle.}
Now that each oracle state takes the form $\ket{\psi_f}$, with $f:[N]\to[M]$ being a random oracle, we can purify it in the standard way. Then each oracle query becomes a joint unitary $R$, with
\begin{equation*}
\begin{split}
    \ket{a}\ket{f}
    \xrightarrow{\ R\ }\;
    \ket{a}\ket{f}
    -\frac{2}{\sqrt{N}}\omega^{-f(a)}\ket{\psi_f}\ket{f} 
    =\;
    \ket{a}\ket{f}
    -\frac{2}{N}\sum_{b\in [N]}\omega^{f(b)-f(a)}\ket{b}\ket{f}.
\end{split}
\end{equation*}
The phase kickback acts on the entire system. Therefore, viewing the
oracle register in the Fourier basis, we have
\begin{equation}
\label{eq:overview-fourier-query}
    \ket{a}\ket{\widehat{P}}
    \xrightarrow{\ R\ }
    \ket{a}\ket{\widehat{P}}
    -\frac{2}{N}\sum_{b\in [N]}
    \ket{b}\ket{\widehat{P-e_a+e_b}},
\end{equation}
where $\ket{\widehat{P}}$, with $P\in\mathbb{Z}_M^N$, is a Fourier basis
state, and $e_a,e_b$ are the indicator vectors for $a,b$.

For each Fourier label $P$, we view each of its $N$ entries as a signed
integer in $\{-M/2,\ldots,M/2-1\}$. We can then view $P$ as a
multiplicity-recording table for a multiset, with each entry recording a
multiplicity. Specifically, we consider two sets of elements for the
multiset: positive elements $0,1,\ldots,N-1$ and negative elements
$\overline{0},\overline{1},\ldots,\overline{N-1}$. The elements of an opposing
pair cancel each other, i.e., $\{a,\overline{a}\}=\varnothing$.
$|P(a)|$ stands for the multiplicity of element $a$ when $P(a)>0$ and the multiplicity of element $\overline{a}$ when $P(a)<0$.
In this way, we can also define addition as union, and subtraction as
union with opposite elements. Let $P'$ denote the multiset corresponding
to $P$. Then $P-e_a+e_b$ represents $P'\cup\{\overline{a},b\}$.
If $P'$ contains $a$ or $\overline{b}$, we also need to cancel out the
opposite pairs and obtain a unique minimal multiset.

Notice that, from $P'$ to $P'\cup\{\overline{a},b\}$, the number of
positive elements minus the number of negative elements remains unchanged
(we call this the \textbf{balance of the multiset}), and the size of the multiset
either changes by $-2$ or $+2$, or remains unchanged. Therefore, as in the
original compressed oracle technique, each query only makes local changes,
and we only need to focus on the non-zero entries in the Fourier label,
which are already represented by elements in the multiset.

We further track the entire execution of the algorithm. If $k=0$, the algorithm is uniform. In this case, the initial state is
$\ket{0}\ket{\widehat{0}}$, where the Fourier label is an all-zero table,
representing $\varnothing$. When $k>0$, the initial state becomes
\begin{equation}
    \frac{1}{\sqrt{M^N}}
    \sum_f\ket{\psi_f}^{\otimes k}\ket{f}
    =
    \frac{1}{\sqrt{N^k}}
    \sum_{\vec{x}\in[N]^k}
    \ket{\vec{x}}\ket{\widehat{\mu(\vec{x})}},
\label{eq:overview-initial-state}
\end{equation}
where $\mu(\vec{x})$ is the Fourier label representing
$\{x_1,x_2,\ldots,x_k\}$. Thus, the initial oracle register is in a
superposition of size-$k$, all-positive multisets. Later, as the algorithm
runs, by our observation about each query and the fact that local unitaries
do not affect the oracle register, the state is always supported on
multisets with balance $k$, and every multiset in the support has at most
$T$ negative elements after $T$ queries.

\paragraph{Define progress.}
To proceed further with the compressed oracle framework, we need to define
a proper projector $\Gamma$ so that we can use the weight on it as a
measure of the algorithm's progress. The projector $\Gamma$ should have
two properties:
\begin{enumerate}
    \item $\Gamma$ commutes with any local unitary.
    \item $\Gamma$ should be close to the winning subspace $\Pi_{k+1}$.
\end{enumerate}
The first property ensures that only queries can affect the progress and
essentially restricts $\Gamma$ to act only on the oracle register. The
second property ensures that it can be a good approximation to winning.

We define $\Gamma$ to project onto states whose supporting multisets have
at least one negative element.
The intuition comes from comparing the initial state and the final winning
state. As \Cref{eq:overview-initial-state} shows, the initial
state is supported on multisets without negative elements, and the winning
subspace is defined by
\begin{equation*}
\begin{split}
    \Pi_{k+1}
    =
    \sum_f
    \bigl(\ket{\psi_f}\bra{\psi_f}\bigr)^{\otimes k+1}
    \otimes\ket{f}\bra{f} =
    \frac{1}{N^{k+1}}
    \sum_{\vec{x},\vec{x}'\in[N]^{k+1}}\sum_P
    \ket{\vec{x}}\bra{\vec{x}'}
    \otimes
    \ket{\widehat{P+\mu(\vec{x})}}
    \bra{\widehat{P+\mu(\vec{x}')}}.
\end{split}
\end{equation*}
The eigenstates are $\{\ket{\operatorname{eig}(P)}\}_P$, with
\begin{equation*}
    \ket{\operatorname{eig}(P)}
    =
    \frac{1}{\sqrt{N^{k+1}}}
    \sum_{\vec{x}\in[N]^{k+1}}
    \ket{\vec{x}}\ket{\widehat{P+\mu(\vec{x})}}.
\end{equation*}
Since the state evolves in the subspace of multisets with balance $k$, we
only need to focus on eigenstates $\ket{\operatorname{eig}(P)}$ with $P$
of balance $-1$, which means that $P$ has at least one negative element.
Further, observe that the eigenstate is a uniform superposition over all
length-$(k+1)$ vectors. Therefore, most vectors $\vec{x}$ in the support
cannot cancel out the negative elements in $P$. Specifically, a vector
$\vec{x}$ that cancels them out has one of its entries fixed, so such
vectors account for at most a $(k+1)/N$ proportion.

We follow this intuition and proceed with the projector $\Gamma$ based on
the number of negative elements in the multiset. We can then formally show
that for a state supported on multisets with balance $k$ and without negative elements, its amplitude onto the subspace of $\Pi_{k+1}$ is at most $\sqrt{(k+1)/N}$, making $\Gamma$
a sufficient measure of winning with a $\sqrt{(k+1)/N}$ loss. The details are in the proof of \Cref{lem:final-projection-overlap}.

\paragraph{Bound the increment.}
Now we only need to bound the weight $\bigl\|\Gamma\ket{\Phi_T}\bigr\|$,
where $\ket{\Phi_T}$ is the final state after $T$ queries.
We decompose the projected state $\Gamma\ket{\Phi_T}$ into $T$ parts, with the $t$-th part corresponding to the history that moves from
$I-\Gamma$ to $\Gamma$ for the first time after the $t$-th query.
We can then bound $\bigl\|\Gamma\ket{\Phi_T}\bigr\|$ by summing the
increments in each of the $T$ queries.

To bound the increment of weight on $\Gamma$ during a single query, we
recall the view in \Cref{eq:overview-fourier-query}. This weight refers to the
case when the query inserts the first negative element into the multiset. Since the pre-query state is not in the subspace of $\Gamma$, we focus on basis states $\ket{a}\ket{\widehat{P}}$, where $P'$ does
not contain negative elements and $a\notin P'$. If the pre-query state is
supported on only one basis state, then the $2/N$ factor already bounds the
increment by $2/\sqrt{N}$.

When the pre-query state is a superposition of several basis states, the
possible convolution may cause an accumulation of amplitudes. However, in
our scenario, we have an additional orthogonality property. The previous
requirements on $a$ guarantee that $\overline{a}$ is the only negative
element in $P'\cup\{\overline{a}\}$. Therefore, the query
preserves orthogonality among the relevant basis states, and the $2/\sqrt{N}$ bound holds. A detailed proof
can be found in \Cref{lem:each-query-progress}, taking the special case
$\ell=1$.

\paragraph{Generating more copies.}
So far, we have adapted the compressed oracle framework to our problem
for the $k\to k+1$ case. When it comes to the $k\to k+r$ case with
arbitrary $r$, we still follow the framework above: purify the oracle,
define the progress projector, and bound the increment per query. When it
comes to the details, the main difference is that we need a set of
progress projectors instead of one.

If we stick to having one projector, a first thought may be
$\Gamma_{\geq r}$, projecting onto states whose multisets have at least
$r$ negative elements. This follows the intuition that the initial state
has no negative elements in its multisets, while in the eigenstates of
the winning subspace $\Pi_{k+r}$, many multisets in the support have $r$
negative elements. Such a projector is indeed a good approximation to
$\Pi_{k+r}$; however, its increment bound is too loose. With each increment
as large as $\sqrt{r/N}$, we can only obtain an $\Omega(\sqrt{N/r})$ bound
on the number of queries, which is even worse than the $k\to k+1$ bound.

The key reason that a single projector $\Gamma_{\geq r}$ is not good
enough is that it takes $r$ steps to climb from $0$ negative elements to
$r$ negative elements, while $\Gamma_{\geq r}$ can only track the last
step. It therefore inherently assumes that the state can come to the stage of having a multiset
with $r-1$ negative elements without any effort, which causes a huge loss.
The same issue occurs for any single projector.

To handle this issue and track the growth in the number of negative
elements more precisely, we define a set of projectors
$\Gamma_0,\Gamma_1,\ldots,\Gamma_{r-1},\Gamma_{\geq r}$, setting $r+1$
different stages for multisets: having $0$ negative elements, $1$ negative element, $\ldots$, $r-1$ negative elements, and no less than $r$ negative elements. With a single query, each multiset in the support of the state can only climb up one stage, and there is an exponential loss in every stage-climbing step. Therefore, after all the oracle queries, these projectors partition the state into a pyramid pattern, where we can bound the weight on each stage.

Finally, when we convert the weights on these projectors to the weight on the winning subspace $\Pi_{k+r}$, the conversion rate also differs for each stage. Intuitively, $\Gamma_{\geq r}$ converts to $\Pi_{k+r}$ at a rate of almost $1$, while $\Gamma_0$ converts at a rate of around $N^{-r}$, and there is a smooth decay in the conversion rate when the stage descends. Subsequently, with the joint effect of weight loss from climbing up stages and conversion loss, we obtain a tight bound on the winning-subspace weight, with exponential decay in $r$. More details can be found in the proof of \Cref{thm:discrete-k-to-k-plus-r}.

\subsubsection{Algorithm}

We provide a matching upper bound for generating $k+r$ copies from $k$
copies of $\ket{\psi}$ using the controlled reflection oracle
$C_{\ket{\psi}}Z$, which costs $O(\sqrt{rN})$ queries. Our key algorithm is a doubling algorithm that produces
$2k$ copies from $k$ copies of $\ket{\psi}$ with expected query complexity $O(\sqrt{kN})$.

For $r=1$, choose a Haar-random state
$\ket{g}$ and let $P=|\braket{g}{\psi}|^2$. A direct application of
amplitude amplification produces an exact copy with constant probability using
$O(P^{-1/2})$ queries~\cite{BHMT02}. It is known that $P\sim\mathsf{Beta}(1,N-1)$, and thus $\E[P^{-1/2}]=O(\sqrt N)$.

For our doubling algorithm, we will measure the $k$ copies using the following POVM.
\[
M_k(\dd g)=\binom{N+k-1}{k}(\proj{g})^{\otimes k} \dd g
\]

where $g$ ranges over the Haar measure on $n$ qubits.  Introduced 
by Hayashi~\cite{hayashiAsymptoticEstimationTheory1998}, the POVM is shown to be optimal for
the following pure state estimation problem: given $k$ copies of an unknown pure state $\ket{\psi}$,
a procedure $M$ generates a state $\ket{g}$ that maximizes the expected fidelity 
\begin{align*}
  \E_{\ket{g}\leftarrow M(\ket{\psi}^{\otimes k})}\left[|\braket{g}{\psi}|^2\right]
\end{align*}
for every $\ket{\psi}$. The optimal fidelity is $\frac{k+1}{k+N}$. Moreover,
it can be shown that $P=|\braket{g}{\psi}|^2$ follows the beta distribution $\mathsf{Beta}(k+1,N-1)$,
from which we derive $\E[P^{-1/2}]=O(\sqrt{N/k})$ when $k<N$.

Given $\ket{g}$ from the optimal state estimation POVM, we apply amplitude amplification starting with $\ket{g}$ a total of $2k$ times.
As $\E[P^{-1/2}]=O(\sqrt{N/k})$, each amplitude amplification costs $O(\sqrt{N/k})$ expected queries, and the total expected query complexity is $O(\sqrt{kN})$.
In order to prepare $r$ copies of the state, we can repeat the doubling procedure $\log r$ times.
Since the expected query counts in successive rounds form a geometric series, the total expected query complexity is $O(\sqrt{rN})$.
In~\Cref{sec:algorithms}, we also give a direct fixed-query construction achieving the same bound.

\subsubsection{Cryptographic Applications}
We propose two applications of our cloning lower bound in unclonable cryptography.

\paragraph{Quantum money.}
The first application revisits the quantum money construction discussed by Aaronson and Christiano~\cite{aaronsonQuantumMoneyHidden2012}.
They proved a query lower bound of $\Omega(\sqrt{N/k})$ for generating $k+1$ copies of a money state from $k$ copies in their complexity-theoretic no-cloning theorem.
For the same construction, we improve their lower bound to $\Omega(\sqrt{N})$ and generalize it to the $k\to k+r$ security setting with tight bounds.

\paragraph{Copy protection.}
Our second application adapts the approach of Aaronson, Liu, Liu, Zhandry,
and Zhang~\cite{aaronsonNewApproachesQuantum2021} to Haar-random program
states. Our program contains a copy of $\ket{\psi}$, and a public evaluation
oracle $\cO_{\ket{\psi},f}$ as follows:
\[
\cO_{\ket{\psi},f}=(I-\proj{\psi})\otimes I+ \proj{\psi}\otimes {O_f},
\]
where $O_f$ is the standard basis oracle for $f$. The security of the copy-protection
scheme forms a dichotomy: a pirate producing
$k+1$ useful programs from $k$ copies must either learn enough about $f$
to evaluate it without the oracle, or successfully clone $k+1$ copies of the Haar-random state
$\ket{\psi}$ from $k$ copies by querying the reflection oracle.

To make this intuition precise, we characterize the goodness of a program through the threshold implementation
introduced in~\cite{aaronsonNewApproachesQuantum2021}.
 Let $M$ be the average correctness
operator of a program circuit $C_{f,x}$, defined as
\[
  M=\E_{x}[C_{f,x}^\dagger\Pi_{f(x)}C_{f,x}],
\] 
with eigenstates $\ket{v_j}$ and eigenvalues $\lambda_j$. 
For a general state $\ket{u}$, we can write it as a superposition of eigenstates $\ket{u}=\sum_j \alpha_j\ket{v_j}$.
The threshold implementation of $M$ is a projection that selects the
eigenstates whose correctness is at least $\gamma$:
\[
  \TI_{\gamma}(M)\ket{u}
  =\sum_{j:\lambda_j\geq\gamma}\alpha_j\ket{v_j}.
\]
Conditioned on acceptance and after normalization, every state
remaining in this subspace is supported on eigenspaces with eigenvalues at least $\gamma$.
The threshold implementation provides stronger guarantees than simply $\bra{u}M\ket{u}\geq \gamma$,
as it selects the eigenstates with high correctness and is useful for extracting the good components in
$\ket{u}$. 

A copy protection scheme is $\gamma$-anti-piracy secure in the $k\to k+1$ model
 if, for any adversary $\Adv$, its output state $\sigma$ has negligible weight in the subspace
 $\otimes_{i=1}^{k+1}\TI_{\gamma}(M)$. 
 
 Ideally, if we can implement the threshold operator $\TI_{\gamma}(M)$ efficiently by querying the oracle $\cO_{\ket{\psi},f}$,
 we would have the following security dichotomy for the threshold implementation $\TI_{\gamma}(M^{(i)})$ of $\Adv$'s output program circuit:
 \begin{itemize}
  \item Either there is an $i$ such that $(\sigma_i,\TI_{\gamma}(M^{(i)}))$ has small query weight on the oracle $\cO_{\ket{\psi},f}$ on input $\ket{\psi}$.
  \item Or every threshold implementation has large query weight on the oracle $\cO_{\ket{\psi},f}$ with input $\ket{\psi}$.
 \end{itemize}
We will try to construct a learner with bounded query complexity for $f$ in the first case, and a $k\to k+1$ cloner for $\ket{\psi}$ in the second case.

A problem with the threshold implementation is that it is not efficiently implementable in general, 
but as shown in~\cite{aaronsonNewApproachesQuantum2021}, it can be approximately implemented with a circuit.
The circuit itself is no longer a projector on the state space, but we can always extend it to a projector on a larger space with ancillas initialized to $\ket{0}$ by dilation.
We use $\widehat{M}=V^\dagger \Pi V$ to denote the dilated operator.

Let $Q$ measure the average weight on $\ket{\psi}$ at a randomly chosen
query to $\cO_{\ket{\psi},f}$ made by the circuit $V$.
Let $T$ be the number of evaluation queries made by $V$, and let $V^{<t}$ be the
circuit that runs $V$ up to just before query $t$. Let $\Pi_\psi$ project its oracle's
state register onto $\ket\psi$. We can express $Q$ as
\[
  Q=\E_x\left[\frac1T\sum_{t=1}^{T}
   V^{<t\dagger}(\proj{\psi}\otimes I) V^{<t}\right].
\]
For the rest of the discussion, we will split the successful pirate's space $\Pi_{\cA}=\otimes_{i=1}^{k+1}\widehat{M}^{(i)}$
into subspaces based on the threshold implementation of $Q_{\cA}=\widehat{M}Q\widehat{M}$, i.e., the query weight restricted to
$\gamma$ good states.
\begin{align*}
  \Pi_{\clone}=\bigotimes_{i=1}^{k+1}\TI_{\beta}(Q^{(i)}_{\cA}),\qquad \Pi_{\learn}=\Pi_{\cA}-\Pi_{\clone}.
\end{align*}

For states in the space $\Pi_{\learn}$, 
we decompose the subspace into orthogonal components according to the first register with low query weight.
 For each of the components, there is a register $i$ whose state $\sigma_i$ 
 that belongs to
$\widehat{M}-\TI_{\beta}(Q_{\cA})$. This implies that the state $\sigma_i$ has $\Tr[Q_{\cA}\sigma_i]\leq O(\beta)=O((\delta/T)^2)$.
  A~\cite{BBBV97}-type hybrid argument
allows us to replace the oracle calls of $V$ to $\cO_{\ket{\psi},f}$ by the identity, resulting in
a circuit $V^{\dum}$, and causing an error of at most $O(\delta)$ in the output. Thus, we can output $\widehat{M}\sigma_i\widehat{M},C_{f,x}^{\dum}$ as our learned program.

In the other case, the joint output lies in
$\bigotimes_{i=1}^{k+1}\TI_{\beta}(Q^{(i)}_{\cA})$. To explain how to extract a
copy of $\ket\psi$ from each program, first consider a single program.
A unitary $U_{\mathrm{hist}}$ coherently 
prepares the input and query-time registers for $Q$. 
Let $\Pi_{\mathrm{init}}$ project all ancillary registers onto
their initial zero state. Orthogonality of the input and time labels gives
\[
  A:=\Pi_\psi U_{\mathrm{hist}}\Pi_{\mathrm{init}},
  \qquad A^{\dagger}A=\widehat{M}Q\widehat{M}\otimes\proj0.
\]
In particular, if $Q\ket{v_j}=\lambda_j\ket{v_j}$ with $\lambda_j>0$,
then $\ket{v_j}\ket0$ is a right singular vector of $A$ with singular
value $\sqrt{\lambda_j}$. Its corresponding left singular vector is
\[
  \ket{\ell_j}
  =\frac{A(\ket{v_j}\ket0)}{\sqrt{\lambda_j}}
  =\ket\psi\otimes\ket{w_j},
\]
because the image of $A$ lies in the range of $\Pi_\psi$.

Singular-vector transformation~\cite{gilyenQuantumSingularValue2019}
approximately replaces each singular value $\sqrt{\lambda_j}\geq
\sqrt\beta$ by $1$, while mapping its right singular vector to its left
singular vector. Consequently, it implements the map
\[
  \sum_{j:\lambda_j\geq\beta}a_j\ket{v_j}\ket0
  \longmapsto
  \sum_{j:\lambda_j\geq\beta}a_j\ket{\ell_j}
  =\ket\psi\otimes
    \sum_{j:\lambda_j\geq\beta}a_j\ket{w_j}
\]
up to a chosen operator-norm error $\zeta$.  Since the program acts on the eigenvectors of $Q$,
we can write a general state in $\Pi_{\clone}$ as a superposition of
eigenvectors with eigenvalues at least $\beta$. Thus, the whole state can be transformed to $k+1$ copies of $\ket\psi$
by parallel application of quantum singular vector transformation.
 
In~\Cref{sec:copy-protection}, we show that by replacing 
every threshold implementation $\TI_{\gamma}(M)$ with its dilated approximate version $\widehat{M}$, we can still obtain our
security dichotomy, with a set of carefully chosen error parameters.

\subsection{Other Related Work}

\paragraph{Collusion resistance and identical copies.}
Liu, Liu, Qian, and Zhandry~\cite{liuCollusionResistantCopyProtection2022}
gave plain-model constructions of copy protection for several
cryptographic functionalities with bounded collusion resistance. \c{C}akan and
Goyal~\cite{cakanUnclonableCryptographyUnbounded2024} subsequently achieved
unbounded collusion resistance for several functionalities in the plain
model, and for all unlearnable functionalities relative to a classical
oracle. These results allow pirates to obtain independently generated
programs for the same functionality. More recently, Ananth and
Goldin~\cite{ananthLessMoreCopy2026} and \c{C}akan, Goyal, Kitagawa,
Nishimaki, and Yamakawa~\cite{cakanMultiCopySecurity2025} developed
compilers that obtain security for identical pure-state copies from
security for independently generated states, under their respective
hypotheses. In particular, both works give plain-model quantum coins,
and Ananth and Goldin give a compiler for identical-copy security of
copy protection. 
In a related direction, Ananth and
Goldin~\cite{ananthLessMoreCopy2026} show that, under suitable stretch
and ancilla-size conditions, one-copy secure pseudorandom state
generators can be transformed into $t$-copy secure generators for any
polynomial $t$ fixed in advance.

\paragraph{Oracle techniques for quantum tasks.}
Our analysis uses the oracle-purification and Fourier-recording
viewpoint underlying Zhandry's compressed-oracle
technique~\cite{zhandry19}. Alagic, Majenz, and
Russell~\cite{alagicEfficientSimulation2020} developed efficient
stateful simulation of Haar-random states together with verification
and reflection access, including an alternative construction based on
phase states and compressed oracles. Poremba, Ragavan, and
Vaikuntanathan~\cite{porembaCloningGames2026} use purified random
functions and Fourier analysis to study binary-phase cloning games and
transfer their bounds to Haar cloning games. Their games involve
splitting encoded states among separated players who recover a
classical label with decoding-oracle access. Here, we study joint
verification of additional state copies produced by an algorithm with
reflection access.

\subsubsection*{AI Usage Statement}
We developed the entire lower-bound proof without AI assistance, including
the query lower bound for the PRS cloning task, the progress measure based
on database size, and the analysis bounding this measure.

We also developed the upper bound for $k\to k+1$ cloning without AI
assistance. For general $k\to k+r$ cloning, GPT Pro 5.6 provided a proof of
the upper bound matching our lower bound in response to a single prompt (while the earlier GPT Pro 5.5 did not succeed).
We subsequently worked through the argument in detail and rewrote the
analysis.

For the cryptographic applications, we first discussed the arguments and
wrote a proof outline of a few pages. We then supplied this outline to GPT
Pro to help identify suitable quantum singular value transformation (QSVT)
algorithms and parameter choices. This assistance helped us establish the
proof more quickly. We again worked through the argument in detail and rewrote the analysis.
\section{Preliminaries}
\label{sec:preliminaries}
We use $\calH_d$ to denote the $d$-dimensional complex Hilbert space $\C^d$.
For an orthogonal projector $\Pi$, define
\[
  C_\Pi Z:=(\Id-\Pi)\otimes\Id+\Pi\otimes Z.
\]

\subsection{POVMs with General Outcome Spaces}

\begin{definition}[POVM]
\label{def:povm}
Let $(\Omega,\mathcal F)$ be a measurable space and let $\mathcal H$ be a
finite-dimensional Hilbert space. A positive operator-valued measure, or
POVM, on $\mathcal H$ with outcome space $\Omega$ is a map
$M:\mathcal F\to \mathsf{Pos}(\mathcal H)$
such that $M(\emptyset)=0$, $M(\Omega)=\Id_{\mathcal H}$, and for every
countable collection of disjoint measurable sets $S_1,S_2,\ldots$,
$M\!\left(\bigcup_i S_i\right)=\sum_i M(S_i)$.
Measuring a state $\rho$ produces the
probability measure
\[
  \Pr[X\in S]=\Tr(M(S)\rho).
\]
\end{definition}

Equivalently, when $M$ is absolutely continuous with respect to a scalar
measure $\mu$, we write
\[
  M(\dd x)=E_x\,\dd\mu(x),
\]
meaning $M(S)=\int_S E_x\,\dd\mu(x)$. The normalization condition becomes
\[
  \int_\Omega E_x\,\dd\mu(x)=\Id_{\mathcal H}.
\]
For a pure state $\ket{\psi}$, the outcome density with respect to $\mu$ is
  $x\mapsto \bra{\psi}E_x\ket{\psi}$.

  We will also use POVMs supported on a subspace $V\subseteq\mathcal H$. If
$\Pi_V$ is the orthogonal projector onto $V$, then a POVM on $V$ is
normalized by $M(\Omega)=\Pi_V$. This is equivalent to a POVM on the full
space after adding one extra outcome with operator $\Id_{\mathcal H}-\Pi_V$.

\subsection{Beta Distributions and Haar Overlaps}
We will need some facts about beta distributions.

\begin{definition}[Beta distribution]
\label{def:beta}
For $\alpha,\beta>0$, the beta function is
\[
  B(\alpha,\beta)=
  \int_0^1 p^{\alpha-1}(1-p)^{\beta-1}\,dp
  =
  \frac{\Gamma(\alpha)\Gamma(\beta)}{\Gamma(\alpha+\beta)}.
\]
The beta distribution $\mathsf{Beta}(\alpha,\beta)$ is the probability distribution on $[0,1]$ with
probability density function
\[
  f_{\alpha,\beta}(p)=
  \frac{p^{\alpha-1}(1-p)^{\beta-1}}{B(\alpha,\beta)},
  \qquad 0\le p\le 1.
\]
\end{definition}

The following formulas for the moments of a beta distribution are standard.
For every real $t>-\alpha$,
\[
  \E[P^t]=\frac{B(\alpha+t,\beta)}{B(\alpha,\beta)}.
\]
In particular,
\[
  \E[P]=\frac{\alpha}{\alpha+\beta},
  \qquad
  \E[P^{-1}]=\frac{\alpha+\beta-1}{\alpha-1}
  \quad(\alpha>1).
\]

Now we show that for any Haar-random pure state $\ket{g}$, its fidelity with a fixed pure state follows a beta distribution.
\begin{lemma}[Haar overlap distribution]
\label{lem:haar-overlap}
Let $\ket{g}$ be a Haar-random pure state in $\calH_d$, and consider its fidelity with a fixed pure state
$\ket{\phi}\in\calH_d$. Then
\[
  |\braket{g}{\phi}|^2\sim \mathsf{Beta}(1,d-1).
\]
\end{lemma}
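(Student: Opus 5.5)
By unitary invariance of the Haar measure, I will first reduce to the case $\ket{\phi}=\ket{0}$. Choose a unitary $U$ with $U\ket{0}=\ket{\phi}$. Then $\braket{g}{\phi}=\bra{g}U\ket{0}=(U^\dagger\ket{g})^\dagger\ket{0}$. Since $U^\dagger\ket{g}$ is again Haar-distributed, $|\braket{g}{\phi}|^2$ has the same law as $|g_0|^2$, the squared modulus of the first coordinate of a Haar-random unit vector.

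Next I will use the standard Gaussian realization of the Haar measure on pure states. Let $z=(z_0,\ldots,z_{d-1})$ have i.i.d.\ standard complex Gaussian entries. The density of $z$ depends only on $\norm{z}$, so its law is invariant under unitaries, and hence $z/\norm{z}$ is Haar-distributed. Uniqueness of the unitarily invariant probability measure on the sphere (equivalently, on pure states up to phase) justifies this. Therefore
\[
  |g_0|^2 \overset{d}{=} \frac{|z_0|^2}{|z_0|^2+\sum_{j\ge 1}|z_j|^2}.
\]

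Each $|z_j|^2$ is an $\mathrm{Exp}(1)$ variable, i.e.\ $\mathrm{Gamma}(1,1)$, and these are independent. So the numerator is $X\sim\mathrm{Gamma}(1,1)$ and the rest of the denominator is $Y\sim\mathrm{Gamma}(d-1,1)$, independent of $X$. The classical Gamma--Beta relation gives $X/(X+Y)\sim\mathsf{Beta}(1,d-1)$, which is the claim. If I want to avoid quoting that fact, I will verify it directly. Substitute $p=x/(x+y)$ and $s=x+y$, whose Jacobian is $s$. Then integrate out $s$ from the product density $e^{-x}\,y^{d-2}e^{-y}/\Gamma(d-1)$. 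This leaves density $(d-1)(1-p)^{d-2}=f_{1,d-1}(p)$.

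The main point needing care is the identification of the Haar measure with normalized complex Gaussians, which rests on uniqueness of the invariant measure. Everything else is routine. As an alternative check, I will compute the moments directly: the Haar formula $\E|g_0|^{2t}=t!\,(d-1)!/(d+t-1)!$, obtained from $\E(\proj{g})^{\otimes t}=\Pi_{\Sym}/\binom{d+t-1}{t}$, matches the moments $B(1+t,d-1)/B(1,d-1)$ of $\mathsf{Beta}(1,d-1)$ recorded above. Since the variable is bounded in $[0,1]$, its moments determine its distribution, so this also gives the claim.
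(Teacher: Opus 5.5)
Your proof is correct and follows the paper's argument essentially step for step: reduce to a fixed basis vector by unitary invariance, realize $\ket{g}$ as a normalized complex Gaussian vector, and identify $|z_0|^2/\sum_j |z_j|^2$ as $\mathsf{Beta}(1,d-1)$. The paper phrases that last step as the first marginal of a Dirichlet$(1,\ldots,1)$ vector, whereas you use the equivalent Gamma--Beta relation and check it directly via the change of variables $p=x/(x+y)$, $s=x+y$; your moment check through $\E(\proj{g})^{\otimes t}=\Pi_t/D_t$ is also valid and gives an independent proof that avoids the Gaussian identification entirely.
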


\begin{proof}
By unitary invariance of the Haar measure, assume $\ket{\phi}=\ket{e_1}$. 
It is known that $\ket{g}$ can be generated by taking a standard complex 
Gaussian vector $Z\in\C^d$ and normalizing it to unit 
length~\cite[Proposition 7.17]{watrousTheoryQuantumInformation2018}.
Sample $\ket{g}=Z/\|Z\|$, where $Z_1,\ldots,Z_d$ are independent standard complex
Gaussians. Then $|Z_1|^2,\ldots,|Z_d|^2$ are independent exponential random
variables, and
\[
  |\braket{g}{e_1}|^2
  =
  \frac{|Z_1|^2}{|Z_1|^2+\cdots+|Z_d|^2}.
\]
The vector of normalized exponential variables is Dirichlet with all
parameters equal to $1$, so its first coordinate has distribution
$\mathsf{Beta}(1,d-1)$.
\end{proof}

\subsection{Symmetric Subspaces}

For $m\ge 0$, the symmetric subspace
$\Sym^m(\C^d)\subseteq(\C^d)^{\otimes m}$ is the subspace fixed by all
permutations of the $m$ tensor factors. We write $\Pi_m$ for its
orthogonal projector. Explicitly,
\[
  \Pi_m=\frac{1}{m!}\sum_{\pi\in S_m} W_\pi,
\]
where $W_\pi$ permutes tensor factors according to $\pi$. Its dimension is
\[
  D_m:=\dim \Sym^m(\C^d)=\binom{d+m-1}{m}.
\]
Every tensor power $\ket{\psi}^{\otimes m}$ lies in $\Sym^m(\C^d)$, and
these tensor powers span the symmetric subspace. 

The following lemma from~\cite[Lemma 7.24]{watrousTheoryQuantumInformation2018} states that 
the Haar average of a tensor power is proportional to the projector onto the symmetric subspace.
\begin{lemma}[Haar moment on the symmetric subspace]
\label{lem:symmetric-haar-moment}
For $m\ge 0$,
\[
  \int (\proj{g})^{\otimes m} \dd g
  =
  \frac{\Pi_m}{D_m}.
\]
\end{lemma}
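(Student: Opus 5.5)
The plan is to show that the operator $A:=\int (\proj{g})^{\otimes m}\,\dd g$ commutes with every $U^{\otimes m}$, lives inside the symmetric subspace, and is therefore a scalar multiple of $\Pi_m$. The scalar then follows by taking the trace.

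\textbf{Step 1 (invariance).} By left invariance of the Haar measure on pure states (the substitution $\ket g\mapsto U\ket g$), we get $U^{\otimes m}A\,U^{\dagger\otimes m}=A$ for every unitary $U$ on $\C^d$.

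\textbf{Step 2 (support).} Each $(\proj g)^{\otimes m}$ is supported on $\Sym^m(\C^d)$, since $\ket g^{\otimes m}\in\Sym^m(\C^d)$. Hence $A=\Pi_m A\Pi_m$.

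\textbf{Step 3 (Schur's lemma).} The restriction of $A$ to $\Sym^m(\C^d)$ commutes with the representation $U\mapsto U^{\otimes m}|_{\Sym^m}$. This is the main step requiring justification: we need that representation to be irreducible. The direct argument I would give runs as follows. Any operator $B$ on $\Sym^m$ that commutes with all $U^{\otimes m}$ also commutes with the linear span of $\{U^{\otimes m}\}$. Differentiating at $U=I$ and complexifying shows this span contains all $X^{\otimes m}$ for arbitrary $X\in \mathrm{End}(\C^d)$. Since $X^{\otimes m}$ for rank-one $X=\ket{\phi}\bra{\chi}$ gives $\ket{\phi}^{\otimes m}\bra{\chi}^{\otimes m}$, and tensor powers span $\Sym^m$, this span is all of $\mathrm{End}(\Sym^m)$. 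Therefore $B$ is scalar.

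An alternative route avoids Lie theory. Apply the invariance in Step 1 to diagonal unitaries $U=\mathrm{diag}(e^{i\theta_j})$. This forces $A$ to be diagonal in the orthonormal basis of normalized symmetrized occupation-number states. Permutation unitaries together with a two-mode rotation then relate the diagonal entries and force them to be equal. I would go with the spanning argument since it is cleaner.

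\textbf{Step 4 (normalization).} Write $A=c\,\Pi_m$. Taking traces, $\Tr A=\int \Tr(\proj g)^{\otimes m}\,\dd g=1$, while $\Tr \Pi_m=D_m$. Hence $c=1/D_m$, which gives $A=\Pi_m/D_m$ as claimed.
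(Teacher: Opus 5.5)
Your argument is correct. The paper does not prove this lemma itself; it imports it from Watrous (cited as Lemma~7.24). Your write-up is therefore a self-contained substitute for that citation rather than a reconstruction of an argument in the text.

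Your route is the representation-theoretic one. Unitary invariance and support on $\Sym^m(\C^d)$ reduce the claim to the commutant of $U\mapsto U^{\otimes m}|_{\Sym^m}$ being trivial, and the constant then comes for free from $\Tr\Pi_m=D_m$.

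A different standard route is a direct computation, and it fits a tool the paper already uses: the proof of \Cref{lem:haar-overlap} samples $\ket g=Z/\norm{Z}$ with $Z$ a standard complex Gaussian. Normalize so that $\E|Z_i|^2=1$. Wick's theorem gives $\E[(ZZ^\dagger)^{\otimes m}]=\sum_{\pi\in S_m}W_\pi=m!\,\Pi_m$. Since $\norm{Z}$ is independent of $Z/\norm{Z}$ and $\E\norm{Z}^{2m}=d(d+1)\cdots(d+m-1)$, dividing gives $\Pi_m/D_m$. That route needs no irreducibility but requires a moment computation. Yours needs no computation but requires the irreducibility step.

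That step is slightly compressed as written. Differentiating $U^{\otimes m}$ at $U=\Id$ produces $\sum_{j=1}^{m}\Id^{\otimes(j-1)}\otimes X\otimes\Id^{\otimes(m-j)}$, not $X^{\otimes m}$. To reach $X^{\otimes m}$, you need one more sentence:
\begin{itemize}
\item The span of $\{U^{\otimes m}\}$ is a finite-dimensional, hence closed, algebra.
\item So it contains the exponentials of these operators, namely $(e^{X})^{\otimes m}$.
\item Since $\exp$ maps onto $\mathrm{GL}_d(\C)$, the span contains $Y^{\otimes m}$ for every invertible $Y$, and by density for every $Y$.
\end{itemize}
With that sentence in place, your rank-one spanning argument completes the proof as you describe.
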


\subsection{Quantum Singular Value Transformation}
Quantum Singular Value Transformation (QSVT) is a powerful framework for developing quantum algorithms.
We will have multiple applications of QSVT in our paper. We will introduce the theorems we need in this 
section. For more details about QSVT, we refer readers to~\cite{gilyenQuantumSingularValue,gilyenQuantumSingularValue2019}.

Specifically, we will use the quantum singular vector transformation theorem~\cite[Theorem~26]{gilyenQuantumSingularValue2019}.

\begin{theorem}[Quantum singular-vector transformation
{\cite[Theorem~26]{gilyenQuantumSingularValue2019}}]
\label{thm:qsvt}
Let \(U\) be a unitary, let \(\Pi,\widetilde\Pi\) be orthogonal projectors, and set
\[
  A:=\widetilde\Pi U\Pi
  =L\Sigma R^\dagger
  =\sum_i\sigma_i\ket{\ell_i}\!\bra{r_i}.
\]
Here \(\ket{r_i}\in\supp(\Pi)\) are the right singular vectors and
\(\ket{\ell_i}\in\supp(\widetilde\Pi)\) are the corresponding left singular
vectors.

For every cutoff \(s>0\) and \(\delta>0\), there is a unitary circuit $\widetilde U$ using
\(O\!\left(\frac{1}{s}\log\frac{1}{\delta}\right)\)
applications of \(U,U^\dagger\) and $C_{\Pi}Z$ and $C_{\widetilde\Pi}Z$. 
On the right singular subspace
\(\operatorname{span}\{\ket{r_i}:\sigma_i\geq s\}\), its action is within operator
norm \(\delta\) of the partial isometry
\[
  W_{\geq s}=\sum_{i:\,\sigma_i\geq s}\ket{\ell_i}\!\bra{r_i}
\]
to the corresponding left singular subspace
\(\operatorname{span}\{\ket{\ell_i}:\sigma_i\geq s\}\).
\end{theorem}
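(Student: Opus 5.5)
The plan is to follow the standard quantum signal processing (QSP) route to QSVT. I would first build an odd polynomial $P$ that approximates the sign function. Then I would implement the singular value transform $P^{(SV)}(A)$ with an alternating sequence of $U$, $U^\dagger$ and projector-controlled phase rotations. Finally I would turn the "block-encoding" guarantee into the stated operator-norm guarantee on the full output state.

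\textbf{Step 1 (polynomial).} Fix $\delta'>0$, chosen below. I would construct a real odd polynomial $P$ of degree $d=O\!\left(\frac1s\log\frac1{\delta'}\right)$ with $|P(x)|\le 1$ on $[-1,1]$ and $|P(x)-1|\le\delta'$ on $[s,1]$. A convenient construction starts from $\operatorname{erf}(\kappa x)$ with $\kappa=\Theta(\frac1s\sqrt{\log(1/\delta')})$. One truncates its Chebyshev (or Jacobi--Anger) expansion to degree $d$ and rescales slightly so that the bound $|P|\le1$ holds exactly.

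\textbf{Step 2 (QSP phases).} By the QSP characterization, every such bounded real odd polynomial is achieved by some phase sequence $\Phi=(\phi_1,\ldots,\phi_d)$. More precisely, the alternating product
\[
  U_\Phi=e^{i\phi_1(2\widetilde\Pi-\Id)}U\,e^{i\phi_2(2\Pi-\Id)}U^\dagger\cdots U
\]
satisfies $\widetilde\Pi U_\Phi\Pi=P^{(SV)}(A)=\sum_i P(\sigma_i)\ket{\ell_i}\!\bra{r_i}$. A real polynomial needs the usual trick of averaging $\Phi$ with $-\Phi$ via one extra ancilla qubit, which I absorb into the ancilla. The proof of this step reduces to the $2\times2$ invariant subspaces $\operatorname{span}\{\ket{r_i},U^\dagger\ket{\ell_i}\}$ given by Jordan's lemma, where the sequence acts as a single-qubit QSP. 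Each rotation $e^{i\phi(2\Pi-\Id)}$ costs $O(1)$ uses of $C_\Pi Z$ together with one ancilla qubit. Conjugating $C_\Pi Z$ by Hadamards gives a $\Pi$-controlled NOT onto the ancilla; one then applies a $Z$-rotation on the ancilla and uncomputes. The analogous construction with $C_{\widetilde\Pi}Z$ gives $e^{i\phi(2\widetilde\Pi-\Id)}$. The total cost is therefore $O(d)$ calls, as claimed.

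\textbf{Step 3 (from block to isometry).} Set $\widetilde U:=U_\Phi$. Take a unit vector $v=\sum_{\sigma_i\ge s}a_i\ket{r_i}$. Since the $\ket{\ell_i}$ are orthonormal,
\[
  \norm{\widetilde\Pi\widetilde Uv-W_{\ge s}v}=\Bigl(\sum|a_i|^2|P(\sigma_i)-1|^2\Bigr)^{1/2}\le\delta'.
\]
Hence $\norm{\widetilde\Pi\widetilde U v}\ge1-\delta'$. By unitarity this gives $\norm{(\Id-\widetilde\Pi)\widetilde Uv}^2\le1-(1-\delta')^2\le2\delta'$. Combining the two bounds, $\norm{(\widetilde U-W_{\ge s})v}\le\delta'+\sqrt{2\delta'}$. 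Choosing $\delta'=\Theta(\delta^2)$ makes this at most $\delta$, while the degree stays $O(\frac1s\log\frac1\delta)$.

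I expect the main obstacle to be Step 2, the existence of the phase factors for an arbitrary bounded real odd polynomial. It needs the QSP "completion" lemma, namely finding a complementary polynomial $Q$ with $|P|^2+(1-x^2)|Q|^2=1$, together with the real-part averaging trick. For the write-up I would invoke this from \cite{gilyenQuantumSingularValue2019} rather than reprove it. Step 1 is a routine approximation-theory estimate.
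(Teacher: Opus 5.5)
Your proposal is correct and follows essentially the same route as the standard proof of \cite[Theorem~26]{gilyenQuantumSingularValue2019}, which the paper imports by citation without proof: an odd sign-approximating polynomial of degree $O(s^{-1}\log(1/\delta'))$, real-polynomial QSVT via the $\pm\Phi$ ancilla trick, and readout of the singular-vector map. Your Step~3 unitarity argument, with $\delta'=\Theta(\delta^2)$ and $\widetilde\Pi$ read as $\proj{+}\otimes\widetilde\Pi$ once the ancilla is absorbed, is exactly what converts the cited block-level guarantee into the full operator-norm phrasing the paper uses, with no asymptotic cost.
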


An important corollary of the singular-vector transformation is 
the following fixed-point variant from~\cite[Theorem~27]{gilyenQuantumSingularValue2019}.

\begin{theorem}[Fixed-point amplitude amplification]
\label{thm:fix-aa}
Let $U$ be a unitary, let $\ket{\psi_0}$ be a unit vector, and let $\Pi$ be an
orthogonal projector. Suppose that
\[
  \Pi U\ket{\psi_0}=\sqrt{p} \ket{\psi_G},
\]
where $\ket{\psi_G}\in\mathsf{im}(\Pi)$ is a unit vector and
$p\ge p_{\min}>0$. Then, for every $\varepsilon\in(0,1)$, there is a unitary
circuit $\widetilde U$ such that
\[
  \left\lVert \widetilde U\ket{\psi_0}-\ket{\psi_G}\right\rVert
  \le \varepsilon.
\]
The circuit uses
 \( O\!\left(\frac{\log(1/\varepsilon)}{\sqrt{p_{\min}}}\right)\)
applications of $U$, $U^\dagger$, $C_\Pi Z$, $C_{\proj{\psi_0}}Z$.
\end{theorem}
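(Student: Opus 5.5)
The statement is Theorem~\ref{thm:fix-aa}, and I would derive it as a corollary of the singular-vector transformation, Theorem~\ref{thm:qsvt}. The plan is to instantiate that theorem with right projector $\Pi_0:=\proj{\psi_0}$, left projector $\widetilde\Pi:=\Pi$, and the given unitary $U$.

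First, compute the block $A=\Pi U\Pi_0=\Pi U\ket{\psi_0}\bra{\psi_0}$. By hypothesis this equals $\sqrt p\,\ket{\psi_G}\bra{\psi_0}$. If $p>0$, which holds because $p\ge p_{\min}>0$, this is already a singular value decomposition with exactly one nonzero singular value $\sigma_1=\sqrt p$. Its right singular vector is $\ket{r_1}=\ket{\psi_0}\in\supp(\Pi_0)$, and its left singular vector is $\ket{\ell_1}=\ket{\psi_G}\in\im(\Pi)$. Any phase ambiguity is absorbed by the chosen normalization of $\ket{\psi_G}$.

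Next, set the cutoff $s:=\sqrt{p_{\min}}$ and the accuracy $\delta:=\varepsilon$. Since $\sigma_1=\sqrt p\ge s$, the right singular subspace $\operatorname{span}\{\ket{r_i}:\sigma_i\ge s\}$ is exactly $\operatorname{span}\{\ket{\psi_0}\}$. Theorem~\ref{thm:qsvt} then gives a unitary $\widetilde U$ whose action on this subspace is within operator norm $\varepsilon$ of the partial isometry $W_{\ge s}=\ket{\psi_G}\bra{\psi_0}$. Applying this to the unit vector $\ket{\psi_0}$ gives
\[
\bigl\lVert \widetilde U\ket{\psi_0}-\ket{\psi_G}\bigr\rVert\le\varepsilon .
\]

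The resource count follows directly from Theorem~\ref{thm:qsvt}. The circuit uses $O(\frac1s\log\frac1\delta)=O(\log(1/\varepsilon)/\sqrt{p_{\min}})$ applications of $U$, $U^\dagger$, $C_{\proj{\psi_0}}Z$, and $C_\Pi Z$, which is exactly the list of operations in the statement.

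The only delicate point is the interpretation of ``action on the singular subspace within operator norm $\delta$.'' It must mean $\lVert(\widetilde U-W_{\ge s})\Pi_{\ge s}\rVert\le\delta$, where $\Pi_{\ge s}$ projects onto that right singular subspace, possibly after compressing the output by $\widetilde\Pi$. If the guarantee is only on the compressed output $\widetilde\Pi\widetilde U$, I would add one step. Because $\widetilde U$ is unitary and $\lVert\Pi\widetilde U\ket{\psi_0}-\ket{\psi_G}\rVert\le\varepsilon$, we get $\lVert(I-\Pi)\widetilde U\ket{\psi_0}\rVert^2=1-\lVert\Pi\widetilde U\ket{\psi_0}\rVert^2\le 2\varepsilon$. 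Hence the full error is $O(\sqrt\varepsilon)$. Rescaling $\varepsilon\to\varepsilon^2$ changes the query cost only by a constant factor in $\log(1/\varepsilon)$, so the bound in the statement still holds.
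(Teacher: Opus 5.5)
Your proposal is correct and follows essentially the paper's route. The paper gives no separate proof; it cites this as the fixed-point corollary of the singular-vector transformation (Theorem~27 of Gily\'en et al.), which is obtained by exactly your instantiation $\Pi\leftarrow\proj{\psi_0}$, $\widetilde\Pi\leftarrow\Pi$, $s=\sqrt{p_{\min}}$, using that $\Pi U\proj{\psi_0}=\sqrt{p}\,\ket{\psi_G}\bra{\psi_0}$ has rank one. Your fallback for a guarantee that holds only on the compressed output is also sound: you bound $\lVert(I-\Pi)\widetilde U\ket{\psi_0}\rVert$ by unitarity and rescale $\varepsilon\to\Theta(\varepsilon^2)$, which costs only a constant factor in $\log(1/\varepsilon)$.
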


\subsection{Projective and Threshold Implementation}

In the setting of quantum programs, defining what constitutes a good quantum program is important for characterizing security. For example,
by the uncertainty principle, repeated evaluations of a quantum program might irreversibly disturb the program and make it unusable for future evaluations.
Quantum entanglement can also cause local programs with the same behavior to interfere and exhibit different global behavior. Further discussion can be found in
~\cite{aaronsonNewApproachesQuantum2021}.

We introduce the threshold implementation from~\cite{zhandrySchrodingersPirateHow2020,aaronsonNewApproachesQuantum2021} to address the above issues. 
A common setting in security games of quantum cryptography involves the following process. Given a state $\rho$ from the adversary, 
the challenger samples a random string $r$ from some distribution $D$, and applies a binary projective measurement $\{P_r,I-P_r\}$ to $\rho$ representing accept/reject.
To characterize the success probability of the adversary, we introduce the following Hermitian operator pair $(P_D, I-P_D)$ as the mixture of projective measurements.
\begin{definition}[Mixture of projective measurements $P_D$]
  \label{def:mix-proj}
  For a set of binary projective measurements $\{P_r,I-P_r\}_{r\in R}$ and a distribution $D$ over $R$, the mixture of projective measurements $(P_D, I-P_D)$ is defined as 
  \[
    P_D:=\E_{r\sim D}[P_r],\, I-P_D:=\E_{r\sim D}[I-P_r].
  \]
\end{definition}
Note that the acceptance probability of the adversary state $\rho$ is exactly $\Tr[P_D\rho]$. We omit the distribution $D$ when it is clear from the context.

Given a mixture of projective measurements $(P_D, I-P_D)$, we define its projective and threshold implementations as follows.
\begin{definition}[Projective Implementation]
    \label{def:proj-imp}
    Write $P_D=\sum_p p\Pi_p$, where $\Pi_p$ projects onto the eigenspace
    of eigenvalue $p$. The projective implementation
    $\mathsf{ProjImp}(P_D)$ is a real-valued measurement that 
    measures $\{\Pi_p\}_p$ and returns $p\in[0,1]$.
    On input $\rho$,
    \[
      \Pr[p]=\Tr[\Pi_p\rho],\qquad
      \rho_p=\frac{\Pi_p\rho\Pi_p}{\Tr[\Pi_p\rho]}.
    \]
    The conditional state is defined for outcomes of positive probability.
    Subsequently accepting with probability $p$ reproduces the POVM
    $(P_D,I-P_D)$, since $\sum_p p\Pi_p=P_D$.
\end{definition}
The projective implementation filters out the eigenspaces of $P_D$ and returns the corresponding eigenvalue.

\begin{definition}[Threshold Implementation]
  \label{def:thres-imp}
  For a mixture of projective measurements $(P_D, I-P_D)$ and a threshold $\gamma$, we define its threshold implementation as a pair of projectors
  $(\TI_{\geq\gamma}(P_D), I-\TI_{\geq\gamma}(P_D))$. Explicitly,
  \[
  \TI_{\geq\gamma}(P_D)=\one{P_{D}\geq \gamma}.
  \]
  Since $P_D$ is a Hermitian operator, the indicator function $\one{P_{D}\geq \gamma}$ is well-defined by the spectral decomposition of $P_D$.
  Similar to $\TI_{\geq\gamma}(P_D)$, we can also define variants $\TI_{>\gamma}(P_D)$, $\TI_{\leq\gamma}(P_D)$, and $\TI_{<\gamma}(P_D)$. In order to simplify our
  notation and follow conventions in~\cite{aaronsonNewApproachesQuantum2021}, we will use $\TI_{\gamma}(P_D)$ to denote $\TI_{\geq\gamma}(P_D)$ in the rest of the paper.
\end{definition}
Intuitively, the threshold implementation $\TI_{\gamma}(P_D)$ will identify the subspace of states that have high acceptance probability with respect to the mixture of projective measurements $P_D$. 

However, it is infeasible to implement the threshold implementation $\TI_{\gamma}(P_D)$ efficiently in general. Nevertheless, we can approximate the functionality of
the threshold implementation $\TI_{\gamma}(P_D)$ using an efficient algorithm as shown in~\cite{aaronsonNewApproachesQuantum2021}. To introduce the properties of the approximate implementations,
we introduce the following definitions.

\begin{definition}[Shift distance]
    \label{def:shift-dist}
    For real-valued distributions $\mu,\nu$, define
    \[
      \Delta_{\mathsf{Shift}}^{\eps}(\mu,\nu)
      :=\inf\left\{\delta\geq0:\ \begin{array}{l}
        \Pr_{X\sim\mu}[X\leq x]\leq\Pr_{Y\sim\nu}[Y\leq x+\eps]+\delta,\\
        \Pr_{Y\sim\nu}[Y\leq x]\leq\Pr_{X\sim\mu}[X\leq x+\eps]+\delta
      \end{array}\quad\forall x\in\mathbb R\right\}.
    \]
    For real-valued measurements on the same system, set
    \[
      \Delta_{\mathsf{Shift}}^{\eps}(\cM,\mathcal N)
      :=\sup_{\rho}\Delta_{\mathsf{Shift}}^{\eps}(\cM(\rho),\mathcal N(\rho)),
    \]
    where $\cM(\rho)$ denotes the outcome distribution.
\end{definition}

\begin{definition}[$(\eps,\delta)$ Almost Projective]
    We say a real-valued measurement $\cM$ is $(\eps,\delta)$-almost projective if for every state $\rho$, two successive applications to the same system, with the second acting on the first's post-measurement state, give outcomes $X,Y$ satisfying
     \[
      \Pr[|X-Y|\leq\eps]\geq1-\delta.
    \]
\end{definition}

It is shown in~\cite{zhandrySchrodingersPirateHow2020} that there exists an efficient approximate projective implementation for any mixture of projective measurements $P_D$.
\begin{theorem}[Approximate Projective Implementation]
    \label{thm:approx-proj-imp}
    For $0<\eps,\delta<1$, there exists an approximate projective implementation for $P_D$
    $\API^{\eps,\delta}(P_D)$. It is a real-valued $(\eps,\delta)$-almost projective measurement with outcomes in $[0,1]$ satisfying
    \[
      \Delta_{\mathsf{Shift}}^{\eps}
      \bigl(\API^{\eps,\delta}(P_D),\mathsf{ProjImp}(P_D)\bigr)\leq\delta.
    \]
   The implementation has expected running time
    $T_{P,D}\cdot\poly(1/\eps,\log(1/\delta))$, where $T_{P,D}$ includes
    sampling $D$, constructing $P_r$, and performing its binary measurement.
\end{theorem}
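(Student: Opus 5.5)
The plan is to follow the Marriott--Watrous style of alternating projections, as in~\cite{zhandrySchrodingersPirateHow2020}. First, purify the randomness $r\sim D$. Introduce a register holding $\ket{D}:=\sum_r\sqrt{D(r)}\ket{r}$ and define two projectors on the joint space:
\[
  \Pi_{\mathsf{rand}}:=\proj{D}\otimes\Id,
  \qquad
  \Pi_{\mathsf{acc}}:=\sum_r\proj{r}\otimes P_r .
\]
A direct computation gives $(\bra{D}\otimes\Id)\Pi_{\mathsf{acc}}(\ket{D}\otimes\Id)=\E_{r\sim D}[P_r]=P_D$. By Jordan's lemma, the joint space splits into invariant blocks of dimension at most two for the pair $(\Pi_{\mathsf{rand}},\Pi_{\mathsf{acc}})$. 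In each block the restriction of $\Pi_{\mathsf{rand}}$ is spanned by $\ket{D}\otimes\ket{v_j}$ with $\ket{v_j}$ an eigenvector of $P_D$, and the squared cosine of the principal angle equals the eigenvalue $p_j$. Both $C_{\Pi_{\mathsf{acc}}}Z$ and the binary measurement of $\Pi_{\mathsf{acc}}$ cost one sample of $D$, one construction of $P_r$ and one run of its measurement, i.e.\ $T_{P,D}$. Measuring $\Pi_{\mathsf{rand}}$ only requires preparing $\ket{D}$, which is also within $T_{P,D}$.

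The algorithm $\API^{\eps,\delta}(P_D)$ has two phases.
\begin{enumerate}
\item Attach $\ket{D}$ to $\rho$ and alternately measure $\Pi_{\mathsf{acc}},\Pi_{\mathsf{rand}},\Pi_{\mathsf{acc}},\ldots$ for $t=O(\eps^{-2}\log(1/\delta))$ rounds. Inside block $j$, the outcome sequence is a Markov chain whose consecutive outcomes agree with probability exactly $p_j$ at every step, independently. Output the empirical fraction $\hat p$ of agreements. Since every measurement preserves each Jordan block, the procedure acts as a mixture over blocks. Conditioned on block $j$, Hoeffding's inequality gives $\Pr[|\hat p-p_j|>\eps]\le\delta$.
\item Restore the state: keep alternating until a $\Pi_{\mathsf{rand}}$ measurement succeeds, then discard the register, which is now in $\ket{D}$. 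Within a block, each pair of alternations returns to $\Pi_{\mathsf{rand}}$ with probability bounded below on average, so the expected number of extra rounds is $O(1)$. (If needed, cap it and accept $\delta$ failure mass.)
\end{enumerate}
The total expected cost is therefore $T_{P,D}\cdot\poly(1/\eps,\log(1/\delta))$.

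The two claimed properties then follow. For the shift distance: $\mathsf{ProjImp}(P_D)$ on $\rho$ outputs $p_j$ with probability $\Tr[\Pi_{p_j}\rho]$. The procedure never mixes blocks, so $\API$ outputs a value within $\eps$ of that same $p_j$ except with probability $\delta$. Coupling the two outcomes through the block label then yields $\Delta^{\eps}_{\mathsf{Shift}}\le\delta$. For almost projectivity: after the restoration phase the post-measurement state lies in the span of the same Jordan blocks, and hence in the span of the eigenvectors it was on. A second run therefore produces an estimate concentrated around the same $p_j$. To get $\Pr[|X-Y|\le\eps]\ge1-\delta$, rerun the argument with $\eps/2,\delta/2$ and apply a union bound.

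I expect the main obstacle to be the restoration step and, with it, making the block-by-block argument rigorous. The cross-block coherence must not matter, and the post-measurement state must genuinely return to $\mathrm{im}(\Pi_{\mathsf{rand}})$ within the same blocks. Blocks of dimension one ($p_j\in\{0,1\}$) and the expected-time bound need to be handled separately. My first attempt here is to verify explicitly, inside a $2\times2$ block with angle $\theta_j$, that the restoration loop terminates in expected $O(1)$ steps uniformly over $\theta_j$.
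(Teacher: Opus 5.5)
The paper does not prove this statement. It imports it from Zhandry's work, and your plan is essentially the argument given there: purify the randomness, take the Jordan decomposition with $\Pi_{\mathsf{rand}}\Pi_{\mathsf{acc}}\Pi_{\mathsf{rand}}=\proj{D}\otimes P_D$, observe i.i.d.\ Bernoulli($p_j$) agreements, apply Hoeffding, then restore. The shift-distance and almost-projectivity parts are fine. Every projection preserves each block and the blocks are orthogonal. After restoration the block-$j$ component is a scalar multiple of $\ket{D}\ket{v_j}$. Hence the joint law of two successive runs is a mixture, with weights $\Tr[\Pi_p\rho]$, of two \emph{independent} estimates in block $p$. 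That is the precise form of your ``concentrated around the same $p_j$''.

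Two steps need correcting.

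\textbf{Restoration cost.} The restoration claim is false as stated, and the uniform check over $\theta_j$ you propose would fail. Inside a block with parameter $p$, start from the $\Pi_{\mathsf{rand}}$-rejecting vector. One round ($\Pi_{\mathsf{acc}}$ then $\Pi_{\mathsf{rand}}$) returns to $\ket{D}\ket{v_j}$ with probability only $q:=2p(1-p)$. So, conditioned on needing restoration, the expected number of rounds is $1/q$, which is unbounded as $p\to 0$ or $p\to 1$.

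What saves you is that needing restoration is also unlikely for such $p$. The $\Pi_{\mathsf{rand}}$ outcome follows a symmetric two-state chain with flip probability $q$. After $t$ rounds it equals $0$ with probability $\tfrac{1-(1-2q)^t}{2}$. The expected restoration length is therefore
\[
  \frac{1-(1-2q)^t}{2q}\le t,
\]
by Bernoulli's inequality, using $1-2q=(1-2p)^2\in[0,1]$. This is $\Theta(t)$ for $p\approx 1/(4t)$, not $O(1)$. It still gives total expected cost $O(t)\cdot T_{P,D}=T_{P,D}\cdot\poly(1/\eps,\log(1/\delta))$. For a general input the expectation again splits over blocks, because the stopping time is a function of the outcome sequence, whose law is the block mixture. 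One-dimensional blocks ($p\in\{0,1\}$, so $q=0$) never need restoration. No cap is needed, since the statement only claims expected running time.

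\textbf{Preparing $\ket{D}$.} Measuring $\Pi_{\mathsf{rand}}=\proj{D}\otimes\Id$ with $\ket{D}=\sum_r\sqrt{D(r)}\ket{r}$ requires preparing and un-preparing this coherent sample. That is not implied by being able to sample $D$, which is all $T_{P,D}$ covers. As in Zhandry's construction, put the sampler's coins $s$ in a uniform superposition instead, and set $\Pi_{\mathsf{acc}}=\sum_s\proj{s}\otimes P_{\mathsf{Samp}(s)}$. Compressing $\Pi_{\mathsf{acc}}$ to the uniform-coin state still gives $P_D$, so the rest of your argument is unchanged.
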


Thresholding the approximate projective implementation gives us the approximate threshold implementation.

\begin{theorem}[Approximate Threshold Implementation]
  \label{thm:approx-thres-imp}
  For a mixture of projective measurements $(P_D, I-P_D)$, a threshold $\gamma$, and approximation parameters $\eps,\delta$, there is an approximate threshold
  implementation $\ATI^{\eps,\delta}_{\gamma}(P_D)$ with the following properties:
  \begin{align*}
  \Tr[\ATI_{\gamma-\eps}^{\varepsilon,\delta}(P_D)\rho]
  &\geq \Tr[\TI_{\gamma}(P_D)\rho]-\delta,\\
  \Tr[\TI_{\gamma-\eps}(P_D)\rho]
  &\geq  \Tr[\ATI_{\gamma}^{\varepsilon,\delta}(P_D)\rho]-\delta.
  \end{align*}
  For any quantum state $\rho$, let $\rho'$ be the post-measurement state conditioned on acceptance of $\ATI^{\eps,\delta}_{\gamma}(P_D)$ applied to $\rho$. Then
  \begin{align*}
    \Tr[\TI_{\gamma-2\eps}(P_D)\rho']\geq 1-2\delta.   
  \end{align*}
  Moreover, the approximate threshold implementation $\ATI^{\eps,\delta}_{\gamma}(P_D)$ has the same running time as $\API^{\eps,\delta}(P_D)$.
\end{theorem}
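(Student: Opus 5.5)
The plan is to define $\ATI^{\eps,\delta}_{\gamma}(P_D)$ in the obvious way. Run $\API^{\eps,\delta}(P_D)$ from \Cref{thm:approx-proj-imp}, obtain an outcome $X\in[0,1]$, accept iff $X\geq\gamma$, and keep the post-measurement state. The running-time claim is then immediate, since the only extra step is a classical comparison. The remaining claims follow from the two guarantees of \Cref{thm:approx-proj-imp}: the shift-distance bound against $\mathsf{ProjImp}(P_D)$, and $(\eps,\delta)$-almost projectivity. Throughout, for a state $\rho$ let $X$ be distributed as the outcome of $\API^{\eps,\delta}(P_D)$ on $\rho$, and $Y$ as the outcome of $\mathsf{ProjImp}(P_D)$ on $\rho$. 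Then $\Pr[Y\geq c]=\Tr[\TI_{c}(P_D)\rho]$ by \Cref{def:proj-imp,def:thres-imp}, and $\Pr[X\geq c]=\Tr[\ATI^{\eps,\delta}_{c}(P_D)\rho]$ by construction.

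For the first inequality, I will use the second line in \Cref{def:shift-dist} in strict form: $\Pr[X<x]\leq\Pr[Y<x+\eps]+\delta$. This follows from the non-strict version by taking $x'\uparrow x$ and using continuity from below of the distribution functions, so that $\Pr[X\le x']\to\Pr[X<x]$ and $\Pr[Y\le x'+\eps]\to\Pr[Y<x+\eps]$. Setting $x=\gamma-\eps$ gives $\Pr[X<\gamma-\eps]\leq\Pr[Y<\gamma]+\delta$. Taking complements yields $\Tr[\ATI_{\gamma-\eps}\rho]\geq\Tr[\TI_\gamma\rho]-\delta$. The second inequality is symmetric: the first line of the shift definition at $x=\gamma-\eps$, in strict form, gives $\Pr[Y<\gamma-\eps]\leq\Pr[X<\gamma]+\delta$. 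Complementing yields $\Tr[\TI_{\gamma-\eps}\rho]\geq\Tr[\ATI_\gamma\rho]-\delta$.

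For the post-measurement claim, let $\rho'$ be the state after $\ATI^{\eps,\delta}_\gamma$ accepts, that is, after $\API$ returned some $X\geq\gamma$. Applying $\API$ a second time to this state gives an outcome $X'$. Almost projectivity says $|X-X'|\leq\eps$ except with probability $\delta$, so $X'\geq\gamma-\eps$ except with probability $\delta$. Equivalently, $\Tr[\ATI_{\gamma-\eps}\rho']\geq1-\delta$. Now apply the already-proved second inequality to $\rho'$ with threshold $\gamma-\eps$:
\[
\Tr[\TI_{\gamma-2\eps}\rho']\geq\Tr[\ATI_{\gamma-\eps}\rho']-\delta\geq1-2\delta .
\]

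The main obstacle is the conditioning in this last step. Almost projectivity controls the joint distribution of $(X,X')$ for an arbitrary input $\rho$, not the distribution conditioned on the event $X\geq\gamma$. Naively, $\Pr[X'<\gamma-\eps\mid X\geq\gamma]\leq\delta/\Pr[X\geq\gamma]$, which loses a factor of the acceptance probability. I would first handle this as in \cite{zhandrySchrodingersPirateHow2020,aaronsonNewApproachesQuantum2021}, whose $\API$ construction is almost projective on each conditional branch: it estimates, then re-estimates, on the post-measurement state for every observed outcome. On that reading the bound holds pointwise in the first outcome and hence conditionally on acceptance. If only the unconditional guarantee is available, I would state the result with $\delta$ replaced by $\delta/\Tr[\ATI_\gamma\rho]$, which is harmless wherever the acceptance probability is non-negligible.
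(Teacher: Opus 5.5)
The paper gives no proof here. The statement is imported from \cite{zhandrySchrodingersPirateHow2020,aaronsonNewApproachesQuantum2021}, with the one-line justification that thresholding $\API^{\eps,\delta}(P_D)$ yields $\ATI^{\eps,\delta}_{\gamma}(P_D)$. Your construction is exactly that. Your derivations of the two inequalities (strict forms of the shift-distance conditions via continuity from below, then complements) and of the running time are correct. You swapped the labels of the two lines of \Cref{def:shift-dist}, which is harmless since both hold.

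The genuine gap is the third bullet. In its normalized form it does not hold for thresholded $\API$, and your rescue (a) is false. Zhandry's $\API$ alternates two projective measurements: the controlled test $\sum_r\proj{r}\otimes P_r$, and the projection of the control register onto its initial superposition. Both leave invariant the two-dimensional Jordan block containing the initial control state tensored with an eigenvector $\ket{v}$ of $P_D$. The repair step ends only when the control register is back in its initial state. Now take $\rho=\proj{v}$ with $P_D\ket{v}=p\ket{v}$ and $0<p<\gamma-2\eps$. The post-measurement state is $\proj{v}$ for \emph{every} outcome. The outcome itself is an estimate built from finitely many measurement results, each with probability strictly between $0$ and $1$, so it is $\geq\gamma$ with small but positive probability. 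Conditioned on acceptance, $\rho'=\proj{v}$ and $\Tr[\TI_{\gamma-2\eps}(P_D)\rho']=0$. The same example shows that almost projectivity fails pointwise in the first outcome: after an outlier, a re-run returns roughly $p$.

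The bullet also cannot follow from the guarantees listed in \Cref{thm:approx-proj-imp}. Fix an eigenvector $\ket{w}$ of $P_D$ with eigenvalue below $\gamma-2\eps$. Consider the instrument that with probability $\delta/2$ outputs $1$ and replaces the state by $\ket{w}$, and otherwise performs $\mathsf{ProjImp}(P_D)$. It has shift distance at most $\delta$ from $\mathsf{ProjImp}(P_D)$ and is $(\eps,\delta)$-almost projective. Yet on input $\proj{w}$ it accepts only in the replacing branch, so $\rho'=\proj{w}$ and $\Tr[\TI_{\gamma-2\eps}(P_D)\rho']=0$.

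What your argument does establish is exactly fallback (b): $\Tr[\TI_{\gamma-2\eps}(P_D)\rho']\geq 1-\delta-\delta/\Tr[\ATI^{\eps,\delta}_{\gamma}(P_D)\rho]$. Equivalently, $\Tr[(I-\TI_{\gamma-2\eps}(P_D))\widetilde\rho]\leq 2\delta$ for the unnormalized accepted state $\widetilde\rho$. You should state and prove the third property in this form rather than appeal to (a). It is also the only form the paper needs downstream. The learning-case bound $\sum_i p_i\Tr[(I-\TI_{\gamma-3\eps}(M^{(i),\dum}))\widetilde\sigma_i]\leq 2(k+1)\delta$ is a sum of such unnormalized bounds.
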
 

We have the following multiparty generalization of the above properties, shown in~\cite[Corollary 3]{aaronsonNewApproachesQuantum2021}.
\begin{theorem}
  \label{thm:multi-ati}
  For each $i\in[q]$, let $\{P_{i,r},I-P_{i,r}\}_{r\in R_i}$ be binary
  projective measurements on $\calH_i$, with $D_i$ a distribution on $R_i$,
  and write $P_{D_i}:=\E_{r\sim D_i}[P_{i,r}]$.
  Fix $0<\eps,\delta,\gamma<1$. The local approximate threshold
  implementations have the following joint guarantees:
  \begin{itemize}
    \item Every state $\rho$ on $\calH_1\otimes\cdots\otimes\calH_q$,
    including mixed or entangled states, satisfies
    \[
      \Tr\!\left[\left(\bigotimes_{i=1}^{q}
        \ATI_{\gamma-\eps}^{\eps,\delta}(P_{D_i})\right)\rho\right]
      \geq
      \Tr\!\left[\left(\bigotimes_{i=1}^{q}
        \TI_{\gamma}(P_{D_i})\right)\rho\right]-q\delta.
    \]
    \item Apply $\bigotimes_{i=1}^{q}
    \ATI_{\gamma-\eps}^{\eps,\delta}(P_{D_i})$ to $\rho$.
    If joint acceptance has positive probability, denote the normalized
    state conditioned on that event by $\rho'$. Then
    \[
      \Tr\!\left[\left(\bigotimes_{i=1}^{q}
        \TI_{\gamma-2\eps}(P_{D_i})\right)\rho'\right]
      \geq 1-2q\delta.
    \]
  \end{itemize}
\end{theorem}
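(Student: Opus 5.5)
The plan is to derive both bullets from the single-party guarantees of \Cref{thm:approx-thres-imp}, using two facts. First, the local measurements act on distinct tensor factors, so their acceptance operators and instruments commute. Second, each single-party guarantee holds for \emph{every} state on $\calH_i$, which lets us pass to entangled states through reduced states.

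For the first bullet, write $A_i$ for the acceptance POVM element of $\ATI_{\gamma-\eps}^{\eps,\delta}(P_{D_i})$ on $\calH_i$ (ancillas traced out) and $B_i:=\TI_\gamma(P_{D_i})$. The first step is an ``operator-weighted'' version of the single-party inequality. For any state $\rho$ and any $0\le X\le I$ on the other factors, set $\sigma:=\Tr_{\neg i}[(I\otimes X)\rho]$; this is a subnormalized state on $\calH_i$ of trace $c:=\Tr[(I\otimes X)\rho]\le1$. Applying \Cref{thm:approx-thres-imp} to $\sigma/c$ gives
\[
\Tr[(A_i\otimes X)\rho]\ge\Tr[(B_i\otimes X)\rho]-c\delta\ge\Tr[(B_i\otimes X)\rho]-\delta .
\]
Then run a hybrid over $j=0,\dots,q$ with $H_j:=A_1\otimes\cdots\otimes A_j\otimes B_{j+1}\otimes\cdots\otimes B_q$. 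Passing from $H_{j-1}$ to $H_j$ is exactly the weighted inequality with $X=A_1\otimes\cdots\otimes A_{j-1}\otimes B_{j+1}\otimes\cdots\otimes B_q$. This $X$ satisfies $0\le X\le I$ because it is a tensor product of operators in $[0,I]$. Summing the $q$ losses gives $q\delta$.

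For the second bullet, first reduce to single parties. The projectors $\Pi_i:=\TI_{\gamma-2\eps}(P_{D_i})$ commute, so $I-\bigotimes_i\Pi_i\le\sum_i(I-\Pi_i)$. Hence it suffices to show $\Tr[\Pi_i\rho']\ge1-2\delta$ for each $i$, and a union bound finishes. To get this, fix $i$. Since the local instruments commute, we may first apply all parties other than $i$, condition on their acceptance, and obtain a normalized state $\rho''$. The state $\rho'$ is then the result of applying party $i$'s instrument to $\rho''$ and conditioning on acceptance. Both the instrument and $\Pi_i$ act only on $\calH_i$. Therefore the reduced state of $\rho'$ on $\calH_i$ equals the post-acceptance state obtained by applying $\ATI_{\gamma-\eps}^{\eps,\delta}(P_{D_i})$ to the reduced state $\rho''_i$. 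The post-measurement guarantee of \Cref{thm:approx-thres-imp}, applied to $\rho''_i$, then yields the bound.

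The step I expect to be delicate is this last one, for two reasons. First, the reduction must correctly identify the reduced post-measurement state when the instrument uses ancillas; I would handle this by working with the Stinespring dilation and tracing out. Second, the threshold bookkeeping needs care. Applied literally at threshold $\gamma-\eps$, \Cref{thm:approx-thres-imp} only gives $\TI_{\gamma-3\eps}$. I would therefore follow the parametrization of~\cite[Corollary 3]{aaronsonNewApproachesQuantum2021}, i.e.\ the sharper post-measurement guarantee $\TI_{\gamma'-\eps}$ for $\ATI_{\gamma'}$ that the approximate projective implementation actually provides via the $(\eps,\delta)$-almost-projectivity and shift-distance bounds of \Cref{thm:approx-proj-imp}. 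Alternatively, one can absorb the extra $\eps$ by rescaling the accuracy parameter.
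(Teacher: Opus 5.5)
Your first bullet is correct. The weighted single-party inequality uses the positive operator $\Tr_{\neg i}[(I\otimes X)\rho]$, which has trace at most one, and the hybrid over $H_0,\dots,H_q$ then sums $q$ losses of $\delta$. This is the standard argument for such product statements. The paper gives no proof of its own; it imports the statement from \cite[Corollary~3]{aaronsonNewApproachesQuantum2021}. Your reduction of the second bullet to single parties is also sound: the local instruments commute, you correctly identify the reduced post-selected state, and $I-\bigotimes_i\Pi_i\le\sum_i(I-\Pi_i)$ holds for commuting projectors.

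The gap is the final threshold step, which you flagged but did not close. Applied to $\ATI^{\eps,\delta}_{\gamma-\eps}$, the post-selection item of \Cref{thm:approx-thres-imp} gives only $\TI_{\gamma-3\eps}$. Your remedy (a) is not available: almost projectivity and the shift bound of \Cref{thm:approx-proj-imp} each cost one $\eps$, and no argument from those two properties can recover one of them. Here is a counterexample. Take $\calH=\C^2$ and $P_D=0.2\proj0+0.3\proj1$ (e.g.\ $P_r=\proj0,\proj1,0$ with probabilities $0.2,0.3,0.5$), with $\gamma=1/2$ and $\eps=1/10$. Let $\cM$ measure in the computational basis, output $p+\eps$ on the eigenvector with eigenvalue $p$, and then swap $\ket0\leftrightarrow\ket1$. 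Then:
\begin{itemize}
\item $\cM$ is $(\eps,0)$-almost projective, since successive outputs are $0.4,0.3$ or $0.3,0.4$.
\item $\Delta^{\eps}_{\mathsf{Shift}}(\cM,\mathsf{ProjImp}(P_D))=0$.
\item Thresholding $\cM$ satisfies every item of \Cref{thm:approx-thres-imp} with zero error.
\end{itemize}
Yet thresholding at $\gamma-\eps=0.4$ accepts $\ket1$ with certainty and outputs $\ket0$, which $\TI_{\gamma-2\eps}=\TI_{0.3}$ rejects. So the second bullet as printed does not follow from the properties the paper lists, already for $q=1$.

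A sharper bound would need structural facts about Zhandry's concrete construction, for instance that it preserves the eigenspaces of $P_D$ and concentrates within $\eps$ on each. You neither state nor prove these. Remedy (b) is legitimate but proves a different statement, e.g.\ one using $\ATI^{\eps/2,\delta}_{\gamma-\eps}$; bullet 1 survives that change because $\TI_\gamma\le\TI_{\gamma-\eps/2}$. The clean fix is to state the second bullet with $\TI_{\gamma-3\eps}$, or equivalently $\ATI_{\gamma}\Rightarrow\TI_{\gamma-2\eps}$. Your proof then goes through verbatim. This weaker form is all the paper later uses: its learning-case argument works with $\TI_{\gamma-3\eps}$ and sets $3\eps=\Delta$.

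A further caveat is inherited from the black box, not introduced by you. No post-selection bound of the form $1-2\delta$ for the normalized state can hold uniformly in the acceptance probability. An input supported well below threshold may be accepted only through a rare estimation error. It then stays in that eigenspace, so conditioned on acceptance it fails $\TI$ with certainty. What is true is weaker: the unnormalized accepted state has at most $2\delta$ weight outside $\TI$ per party. Your reduction, weighting by the other parties' acceptance operator exactly as in your first bullet, yields $2q\delta$ in that form.
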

\section{Algorithms}
\label{sec:algorithms}

In this section we give a fixed-query algorithm for producing copies of the
target state. The algorithm has a deterministic upper bound on its number of
oracle queries. Conditioned on success, every output copy
is exact, since we can verify each prepared register using the reflection
oracle. 

We first introduce the algorithm that generates $k$ copies by only using oracle queries.  
The key subroutine is a doubling algorithm that generates $2k$ copies from $k$ copies.
We also extend the same
procedure to the \(k\to k+r\) setting.

\begin{theorem}[Fixed-query preparation for $0\to k$]
\label{thm:0-to-k}
Let \(N=2^n\) and \(1\le k\le N\). Given oracle access to
\(C_{\ket{\psi}}Z\), there is an algorithm that makes
\[
  O\!\left(\sqrt{kN}\right)
\]
queries and outputs \(\ket{\psi}^{\otimes k}\) with probability at least
\(2/3\). Conditioned on success, the output state is exactly
\(\ket{\psi}^{\otimes k}\).
\end{theorem}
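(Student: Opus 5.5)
The plan is to build the $k$ copies by repeated doubling, replacing the expected-query argument from the overview with a fixed-query version obtained by truncation and Markov's inequality. Write $\Pi=\proj{\psi}$. The oracle $C_{\ket\psi}Z=C_\Pi Z$ gives both the projective measurement $\{\Pi,I-\Pi\}$ with one query and the reflection needed in \Cref{thm:fix-aa}.

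\emph{Base step $0\to1$.} Pick a Haar-random $\ket g$ (oracle-independent). By \Cref{lem:haar-overlap}, $P=|\braket{g}{\psi}|^2\sim\mathsf{Beta}(1,N-1)$. Take $U=I$ and $\ket{\psi_0}=\ket g$, so that $\Pi U\ket{\psi_0}=\braket{\psi}{g}\ket\psi$. With threshold $p_{\min}=c/N$, \Cref{thm:fix-aa} uses $O(\sqrt{N}\log(1/\varepsilon))$ queries, since $C_{\proj{g}}Z$ costs no queries. It lands within $\varepsilon$ of $\ket\psi$ whenever $P\ge c/N$, which holds with probability $(1-c/N)^{N-1}\ge 1-c$. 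Afterward we measure $\{\Pi,I-\Pi\}$ with one query. On acceptance the register is exactly $\ket\psi$, up to a global phase. This measurement is what makes the output exact conditioned on success, and it is done for every produced copy.

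\emph{Doubling step $j\to 2j$ (for $j<N$).} Apply Hayashi's POVM $M_j(\dd g)=\binom{N+j-1}{j}(\proj g)^{\otimes j}\dd g$ to the $j$ copies. It is a valid POVM on $\Sym^j$ by \Cref{lem:symmetric-haar-moment}. The outcome density for input $\ket\psi^{\otimes j}$ is $\binom{N+j-1}{j}|\braket g\psi|^{2j}$. Combining this with the $\mathsf{Beta}(1,N-1)$ law of \Cref{lem:haar-overlap} gives density proportional to $p^{j}(1-p)^{N-2}$, so $P\sim\mathsf{Beta}(j+1,N-1)$. We then run $2j$ independent copies of the base step starting from $\ket g$. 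Note that the measurement destroys the $j$ input copies, which is why $2j$ new copies are needed.

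To get fixed query counts, choose $p_{\min}=c_j\,(j+1)/(N+j)$, a constant fraction of $\E[P]$. A Beta lower-tail bound then shows $\Pr[P<p_{\min}]\le \delta_j$. Each copy costs $O(\sqrt{N/j}\log(1/\varepsilon_j))$ queries, so the round costs $O(\sqrt{jN}\log(1/\varepsilon_j))$. Alternatively, avoid tail bounds: use $\E[P^{-1/2}]=B(j+1/2,N-1)/B(j+1,N-1)=O(\sqrt{N/j})$ with Markov to bound the expected iterations, and cap the budget at a constant multiple.

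\emph{Schedule and error.} Start with $1$ copy, double $\lceil\log_2 k\rceil$ times, and discard extras. If $k$ is close to $N$, cap the last round so that $j<N$ is handled by running base steps directly from the current copies. The costs $\sqrt{2^iN}$ form a geometric series, giving $O(\sqrt{kN})$ in total.

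The main obstacle is making the failure probabilities sum to at most $1/3$ without extra log factors. The round producing $2^{i}$ copies should be allowed failure $\delta_i\propto 2^{-(L-i)/2}$, where $L=\lceil\log_2 k\rceil$. This needs amplification cost $\sqrt{2^iN}\log(1/\delta_i)\cdot\log$-type factors. These can be absorbed because $\sum_i \sqrt{2^i}(L-i)=O(\sqrt{2^L})$.

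Within a round, the $2j$ amplifications must all succeed. Per-copy error $\varepsilon_j$ then costs $\log(j/\delta_i)$, which is too much. The fix is to use the post-verification: a copy that fails verification is simply retried from a fresh... — but $\ket g$ cannot be re-sampled. Instead, retry amplitude amplification on the rejected register, which after the failed projection is in the known complement state of $\ket g$'s amplification orbit. Simpler still, set $\varepsilon$ to a constant and count successes via Markov on the total number of retries, relying on the $(1-\varepsilon^2)$ acceptance probability per attempt. Since all attempts share the same $\ket g$ with $P\ge p_{\min}$, the retry count is geometric. A Chernoff or Markov bound on the sum of $2j$ geometric variables gives $O(j)$ attempts with failure $e^{-\Omega(j)}$, and these failures are summable across rounds.
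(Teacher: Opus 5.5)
Your proposal is correct and is essentially the paper's proof. It uses Hayashi's POVM giving $P\sim\mathsf{Beta}(j+1,N-1)$, fixed-point amplification with constant error applied to fresh preparations of the classically described $\ket g$ (so the worry that $\ket g$ cannot be re-sampled is moot), verification for exactness, $O(j)$ attempts per round with a Chernoff bound, and estimation-failure budgets that are smallest in the cheap early rounds, so the geometric cost sum stays $O(\sqrt{kN})$.

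The differences are cosmetic: the paper seeds with $\ket{\Phi_N}$, whose overlap is exactly $1/N$, and uses Markov on $P^{-1}$ with $c_j=4(L-j)$. The one inaccuracy is your claim of only logarithmic overhead. Driving the estimation failure to $\delta$ at input size $j$ costs a factor of order $\delta^{-1/(2(j+1))}$, which your schedule still absorbs. In fact the bound $\Pr[P<c\,\E P]\le(ce)^{j+1}$ lets a single small constant $c$ work with no schedule at all.
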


We regard all oracle-independent operations as free. In particular, this
includes preparation of a classically described state, reflection about such
a state, and implementation of the covariant POVM below. The oracle also
implements the verification measurement
\(\{\proj{\psi},\Id-\proj{\psi}\}\) with one query: apply
\(C_{\ket{\psi}}Z\) to an ancilla initialized in \(\ket{+}\), and measure the
ancilla in the \(X\) basis.

We begin with the optimal POVM used to obtain a classical estimate of
\(\ket{\psi}\)~\cite{hayashiAsymptoticEstimationTheory1998}.

\begin{definition}[Symmetric pure-state POVM]
\label{def:symmetric-povm}
Let \(N=2^n\) and \(m\ge1\). The operator-valued measure
\[
  M_m(\dd g):=D_m(\proj{g})^{\otimes m}\dd g
\]
is a POVM on \(\Sym^m(\C^N)\) by
\Cref{lem:symmetric-haar-moment}.
\end{definition}

The output of this POVM is a classical description of a pure state
\(\ket g\). The following lemma describes its overlap with the target state $\ket{\psi}$.

\begin{lemma}
\label{lem:inner-product-estimation}
For every \(\ket{\psi}\in\calH_N\), apply \(M_m\) to
\(\ket{\psi}^{\otimes m}\), denote the outcome by \(\ket g\), and set
\[
  P:=|\braket{g}{\psi}|^2.
\]
Then we have that $P\sim\mathsf{Beta}(m+1,N-1)$.
Consequently,
\[
  \E[P]=\frac{m+1}{m+N},
  \qquad
  \E[P^{-1}]=\frac{m+N-1}{m}.
\]
\end{lemma}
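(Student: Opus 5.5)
We compute the outcome density of $\ket g$ directly and then push it forward to $P$. By \Cref{def:symmetric-povm}, measuring $\ket{\psi}^{\otimes m}$ with $M_m$ produces $\ket g$ with density
\[
  \bra{\psi}^{\otimes m} D_m(\proj{g})^{\otimes m}\ket{\psi}^{\otimes m}
  = D_m\,|\braket{g}{\psi}|^{2m}
\]
relative to the Haar measure $\dd g$. \Cref{lem:symmetric-haar-moment} guarantees that this density integrates to $1$. It depends on $g$ only through $P=|\braket{g}{\psi}|^2$, so the law of $P$ is the Haar law of $P$ reweighted by $D_m p^m$.

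By \Cref{lem:haar-overlap}, under the Haar measure $P$ has density $(N-1)(1-p)^{N-2}$ on $[0,1]$. The outcome density of $P$ is therefore proportional to $p^m(1-p)^{N-2}$, which is the kernel of $\mathsf{Beta}(m+1,N-1)$.

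We can confirm the normalization explicitly rather than appealing to the fact that both sides are probability densities. The required identity is
\[
  D_m(N-1)B(m+1,N-1)=\binom{N+m-1}{m}(N-1)\frac{m!\,(N-2)!}{(N+m-1)!}=1,
\]
which holds by direct expansion.

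The moment formulas then follow from the standard identities stated after \Cref{def:beta}, with $\alpha=m+1$ and $\beta=N-1$:
\[
  \E[P]=\frac{m+1}{m+N},\qquad \E[P^{-1}]=\frac{\alpha+\beta-1}{\alpha-1}=\frac{m+N-1}{m},
\]
where the second formula applies because $\alpha=m+1>1$ when $m\ge1$.

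The only delicate point is the change of variables from $g$ to the scalar $P$. It is justified because the reweighting factor is a function of $P$ alone, so pushing forward the reweighted Haar measure is the same as reweighting the pushforward given by \Cref{lem:haar-overlap}. Everything else is routine computation.
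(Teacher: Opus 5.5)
Your proposal is correct and follows the paper's proof essentially step for step: you write the outcome density as $D_m|\braket{g}{\psi}|^{2m}$ relative to Haar measure, reweight the $\mathsf{Beta}(1,N-1)$ law from \Cref{lem:haar-overlap} by $D_m p^m$, and read off the moments from the standard beta identities. Your explicit check that $D_m(N-1)B(m+1,N-1)=1$ and your remark on pushing forward the reweighted measure are small clarifications that the paper leaves implicit.
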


\begin{proof}
On input \(\ket{\psi}^{\otimes m}\), the probability density function of the 
measurement outcome with respect to Haar
measure \(\dd g\) is
\[
  q_\psi(g)=D_m|\braket{g}{\psi}|^{2m}.
\]
By \Cref{lem:haar-overlap}, the squared overlap with a Haar-random outcome has
density \((N-1)(1-p)^{N-2}\). The POVM introduces a multiplicative factor of \(D_mp^m\), so
\(P\) has density
\[
  D_m(N-1)p^m(1-p)^{N-2}
  =
  \frac{p^m(1-p)^{N-2}}{B(m+1,N-1)}.
\]
This is the density of \(\mathsf{Beta}(m+1,N-1)\). The two moment formulas
follow from the beta-distribution identities in
\Cref{sec:preliminaries}.
\end{proof}

To apply fixed-point amplitude amplification, we require a promised lower bound on the
overlap for each invocation. The inverse-moment formula gives such a bound
with high probability.

For any \(c>1\), Markov's inequality applied to \(P^{-1}\) gives
\begin{align*}
  \Pr\!\left[P<\frac{m}{c^2(m+N-1)}\right]
  &=
  \Pr\!\left[P^{-1}>c^2\frac{m+N-1}{m}\right]\le \frac1{c^2}.
\end{align*}

We next introduce the doubling subroutine for our algorithm. 
We introduce an adjustable parameter \(c>1\) that controls the tradeoff between query complexity and success probability.
 The measurement in the first
line consumes the \(m\) input copies. Each accepted verification in the loop
produces one fresh register in the exact state \(\ket{\psi}\).

\begin{mdframed}[nobreak=true]
    \textbf{One fixed-query doubling step \(\mathsf{Double}(m,c)\)} 
    
    Given \(m\) copies of \(\ket{\psi}\), oracle access to
  \(C_{\ket{\psi}}Z\), and an adjustable parameter \(c>1\), proceed as follows:
    \begin{itemize}
      \item Measure the \(m\) copies using the symmetric POVM \(M_m\), obtaining the description of  \(\ket g\);
      \item Set \(p_{\min}\gets m/[c^2(m+N-1)]\) and \(\varepsilon\gets1/2\), and repeat the following process $12m$ times:
      \begin{itemize}
        \item Prepare a fresh register in the classically described state
        $\ket g$.
        \item  Apply fixed-point amplitude amplification from
    \Cref{thm:fix-aa} with \(U=\Id\), \(\Pi=\proj{\psi}\),
    \(p_{\min}\), and error \(\varepsilon\)
    \item Verify if the output is state $\ket{\psi}$.
      \end{itemize}
    The algorithm succeeds if we obtain at least $2m$ copies of $\ket{\psi}$.
    \end{itemize}
\end{mdframed}

\begin{lemma}
\label{lem:success-prob}
For \(m\ge1\), \(N\ge2\), and \(c>1\),
\(\mathsf{Double}(m,c)\) makes
\[
  O\!\left(c\sqrt{m(m+N)}\right)
\]
queries and outputs \(2m\) exact copies with probability at least
\[
  1-\frac1{c^2}-e^{-49m/18}.
\]
When \(m\le N\), the query bound is \(O(c\sqrt{mN})\).
\end{lemma}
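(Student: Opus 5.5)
The proof splits into a query count and a success-probability bound; the success analysis conditions on whether the POVM outcome $\ket g$ satisfies the promise $P\ge p_{\min}$.

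\emph{Query count.} The measurement $M_m$ is oracle-independent and free. Each of the $12m$ iterations runs fixed-point amplitude amplification from \Cref{thm:fix-aa} with $U=\Id$, $\ket{\psi_0}=\ket g$, $\Pi=\proj{\psi}$ and $\varepsilon=1/2$. This uses $O(1/\sqrt{p_{\min}})$ applications of $C_{\proj\psi}Z$; reflections about $\ket g$ are free. Verification adds one more query. So each iteration costs
\[
  O\!\left(c\sqrt{(m+N-1)/m}\right)
\]
queries, and the total is
\[
  O\!\left(12m\cdot c\sqrt{(m+N)/m}\right)=O\!\left(c\sqrt{m(m+N)}\right).
\]
When $m\le N$ this is $O(c\sqrt{mN})$. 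The bound is deterministic, since $p_{\min}$ does not depend on the outcome $\ket g$.

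\emph{Success probability.} Let $G$ be the event $P\ge p_{\min}$. By \Cref{lem:inner-product-estimation} and the Markov bound displayed before the procedure, $\Pr[\neg G]\le 1/c^2$. Condition on $G$ for the rest of the argument.

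Each iteration prepares a fresh $\ket g$. Since $\Pi U\ket g=\sqrt P\ket\psi$ with $P\ge p_{\min}$, \Cref{thm:fix-aa} yields a state $\ket\phi$ with $\norm{\ket\phi-\ket\psi}\le 1/2$. Then
\[
  \abs{\braket{\psi}{\phi}}\ge \mathrm{Re}\braket{\psi}{\phi}=1-\tfrac12\norm{\ket\phi-\ket\psi}^2\ge 7/8,
\]
so verification accepts with probability at least $(7/8)^2=49/64$. An accepted register collapses to exactly $\ket\psi$. The iterations act on fresh registers with a fixed classical $\ket g$, so, given $\ket g$, the $12m$ acceptance events are independent Bernoulli trials, each with success probability at least $49/64$.

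\emph{Concentration step.} The number of accepts $X$ stochastically dominates $\mathrm{Bin}(12m,49/64)$, whose mean $\mu\ge 12m\cdot 49/64$ is well above $2m$. Apply Hoeffding's inequality, or a multiplicative Chernoff bound $\Pr[X\le(1-\delta)\mu]\le e^{-\delta^2\mu/2}$ with $(1-\delta)\mu=2m$. Either gives failure probability at most $e^{-\Theta(m)}$. Matching the exact constant $49/18$ is only a matter of choosing which Chernoff form and which acceptance lower bound to use. A union bound over the bad POVM outcome and the concentration failure then gives success probability at least $1-1/c^2-e^{-49m/18}$.

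The main obstacle is this concentration constant. If the exponent does not come out exactly, I would use the cleaner Hoeffding bound with the deviation $12m(49/64)-2m$ and adjust the acceptance probability lower bound to match. The structural content of the argument is unaffected either way.
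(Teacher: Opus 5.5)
Your proposal is correct and follows the paper's proof essentially step for step: Markov on $P^{-1}$ for the good event, the distance-to-overlap bound after each fixed-point amplification, stochastic domination by a binomial, a multiplicative Chernoff bound, a union bound, and the same query count. The constant you left unverified does come out without hedging. If you relax your $49/64$ to $3/4$ as the paper does, the mean is $9m$ and $\delta=7/9$, so $e^{-\delta^2\mu/2}=e^{-49m/18}$ exactly; with $49/64$ itself the exponent is $13225m/4704>49m/18$.
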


\begin{proof}
We call the outcome \(\ket{g}\) good if
\[
  |\braket{g}{\psi}|^2
  \ge p_{\min}:=\frac{m}{c^2(m+N-1)}.
\]
By a direct application of Markov's inequality,
\begin{align*}
  \Pr[g\text{ is not good}]
  &=
  \Pr\!\left[P^{-1}>c^2\frac{m+N-1}{m}\right]\le \frac1{c^2}.
\end{align*}

Conditioned on the event that \(\ket g\) is good, each invocation of fixed-point
amplitude amplification produces a state \(\ket{\eta}\) satisfying
  $\|\ket{\eta}-\ket{\psi}\|\le\varepsilon$.
Therefore its verification accepts with probability
\[
  |\braket{\psi}{\eta}|^2
  \ge (1-\varepsilon^2/2)^2
  \ge 1-\varepsilon^2
  \ge\frac34.
\]
The \(12m\) attempts use fresh registers. Conditioned on \(\ket{g}\) being good, the number
\(X\) of accepted attempts therefore stochastically dominates a random
variable \(Y\sim\mathsf{Bin}(12m,3/4)\), whose mean is \(9m\). A Chernoff
bound gives
\[
  \Pr[X<2m]
  \le
  \Pr\!\left[Y<\frac29\E[Y]\right]
  \le
  \exp\!\left(-\frac{(7/9)^2\E[Y]}2\right)
  =
  e^{-49m/18}.
\]
Combining the two failure events provides the success bound.

By~\Cref{thm:fix-aa}, the algorithm uses
\[
  O\!\left(\frac1{\sqrt{p_{\min}}}\right)
  =
  O\!\left(c\sqrt{\frac{m+N-1}{m}}\right)
\]
queries, because \(\varepsilon=1/2\) is constant. Multiplying by \(12m\)
attempts and including verification gives an upper bound of
$O\!\left(c\sqrt{m(m+N)}\right)$.
If \(m\le N\), this simplifies to \(O(c\sqrt{mN})\).
\end{proof}

We now give the complete algorithm construction for~\Cref{thm:0-to-k}. Let
\[
  \ket{\Phi_N}
  =
  \frac1{\sqrt N}\sum_{x=1}^N\ket{x}_A\ket{x}_B.
\]
The squared overlap of \(\ket{\Phi_N}\) with the good subspace
\(\ket{\psi}_A\otimes\calH_B\) is exactly \(1/N\). Thus fixed-point amplitude
amplification with \(p_{\min}=1/N\), followed by verification of register
\(A\), produces an exact copy of \(\ket{\psi}\) with probability at least
\(3/4\) and uses \(O(\sqrt N)\) queries. We repeat this initialization 12 times,
 obtaining one verified copy, and declare failure if none succeeds.

Set
\[
  L:=\lceil\log_2 k\rceil,\qquad K:=2^L,
  \qquad c_j:=4(L-j)\quad (j=0,\ldots,L-1).
\]
At stage \(j\), apply \(\mathsf{Double}(2^j,c_j)\). Declare failure if
any stage fails; otherwise keep any \(k\) of the final \(K\) copies.

\begin{proof}[Proof of \Cref{thm:0-to-k}]
Each initialization attempt succeeds with probability at least \(3/4\), so
all 12 attempts fail with probability at most \(4^{-12}<1/100\). There are \(L\)
doubling stages, with input sizes \(m_j=2^j\) for \(j=0,\ldots,L-1\).
Conditioned on the success of all previous stages, the input to stage
\(j\) consists of $m_j$ exact copies of $\ket{\psi}$, so \Cref{lem:success-prob} applies.
The probability that some reached stage receives a bad outcome $\ket{g}$
is therefore at most
\[
  \sum_{j=0}^{L-1}\frac1{c_j^2}
  =\frac1{16}\sum_{s=1}^{L}\frac1{s^2}
  \leq\frac1{16}\left(1+\int_1^\infty\frac{\dd x}{x^2}\right)
  =\frac18.
\]
Moreover, with \(q=e^{-49/18}\),
\[
  \sum_{j=0}^{L-1}e^{-49\cdot2^j/18}
  =
  \sum_{j=0}^{L-1}q^{2^j}
  \le
  \sum_{j=0}^{\infty}q^{j+1}
  =
  \frac{q}{1-q}
  <\frac1{12}.
\]
Thus the total failure probability is less than
\(1/100+1/8+1/12<1/3\).

For every doubling stage, \(m\le K/2<k\le N\). By
\Cref{lem:success-prob}, the total number of queries is
\begin{align*}
  O(\sqrt N)
  +\sum_{j=0}^{L-1}O\!\left(c_j\sqrt{2^jN}\right)
  &=O(\sqrt N)+O\!\left(\sqrt{KN}
    \sum_{s=1}^{L}s\,2^{-s/2}\right)\\
  &=O\!\left(\sqrt{kN}\right),
\end{align*}
where \(s=L-j\), \(K<2k\), and
\(\sum_{s\geq1}s\,2^{-s/2}<\infty\).
All stage parameters and attempt counts are fixed in advance, independently
of measurement outcomes. 
\end{proof}

The following corollary extends the above algorithm to the \(k\to k+r\) setting.

\begin{corollary}[Fixed-query \(k\to k+r\) upper bound]
\label{cor:k-to-k-plus-r-upper}
Let \(N=2^n\geq2\), \(k,r\ge1\), and assume \(r\le N\). Given \(k\) copies of
\(\ket{\psi}\) and oracle access to \(C_{\ket{\psi}}Z\), there is a
fixed-query algorithm that outputs \(k+r\) exact copies with probability
at least \(2/3\) using
\[
  O(\sqrt{rN})
\]
queries.
\end{corollary}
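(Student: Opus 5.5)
The plan is to reduce to \Cref{thm:0-to-k} and the doubling subroutine $\mathsf{Double}$ of \Cref{lem:success-prob}, splitting into two cases according to whether $r\ge k$ or $r<k$.

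\emph{Case $r\geq k$.} We ignore the $k$ input copies entirely and run the algorithm of \Cref{thm:0-to-k} with target $k+r\le 2r$. If $2r>N$, we instead target $\min(k+r,\,\cdot)$; there is a small subtlety here because \Cref{thm:0-to-k} assumes its target is at most $N$. We handle it by running the algorithm twice with targets $\le N$ and taking $O(\sqrt{rN})$ queries in total, which is fine since $k\le r\le N$. This outputs $k+r$ exact copies with probability at least $2/3$ using $O(\sqrt{(k+r)N})=O(\sqrt{rN})$ queries. The $k$ inputs are simply discarded.

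\emph{Case $r<k$.} We keep $k-m$ of the input copies untouched, where $m:=2^{\lceil\log_2 r\rceil}$, so that $r\le m<2r\le 2k$. If $m>k$, we instead take $m=r$ and tweak the subroutine. We then feed $m$ copies into a single doubling step $\mathsf{Double}(m,c)$ for a constant $c$, say $c=3$. By \Cref{lem:success-prob}, this returns $2m\ge m+r$ exact copies with probability at least $1-1/9-e^{-49m/18}$. That is already $\ge 2/3$ except for very small $m$, and for small $m$ we can increase $c$ or the attempt count by a constant factor. The query cost is $O(c\sqrt{m(m+N)})=O(\sqrt{rN})$ when $m\le N$, which holds since $m<2r\le 2N$; the factor-$2$ slack is absorbed because $\sqrt{m(m+N)}\le\sqrt{2m\cdot 3N}$. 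Together with the retained $k-m$ copies, we then have at least $k+r$ exact copies and discard any excess.

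\emph{Main obstacle.} The expected difficulty is bookkeeping rather than substance. Two points need care. First, $\mathsf{Double}(m,c)$ is specified to produce exactly $2m$ copies, so using $m\ge r$ inputs and a generic $m$ (not a power of two) is fine because \Cref{lem:success-prob} holds for every $m\ge1$. In fact we may simply take $m=\min(k,r)$: when $r\le k$, $m=r$ yields $2r$ copies, i.e.\ $k-r+2r=k+r$ outputs, so the power-of-two rounding is unnecessary. Second, the success probability must be boosted to $2/3$ uniformly in $m$. With $c=3$, the bound $1-1/9-e^{-49/18}\approx 0.82$ already exceeds $2/3$ for all $m\ge1$, so no boosting is needed. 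Conditioned on success, all outputs are exact copies, since each is either an untouched input or passed verification. The query count is fixed in advance.
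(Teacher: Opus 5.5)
Your case $r<k$, once you settle on $m=r$, is exactly the paper's argument. You set aside $k-r$ inputs and run $\mathsf{Double}(r,c)$ once with a constant $c$. The paper uses $c=4$ and you use $c=3$; either gives success above $2/3$ at cost $O(\sqrt{r(r+N)})=O(\sqrt{rN})$.

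For $r\ge k$ you take a genuinely different route. You discard the inputs and call \Cref{thm:0-to-k} as a black box with target $k+r\le 2r$. The paper instead seeds a doubling chain $\mathsf{Double}(2^jk,4(L-j))$ with the $k$ inputs until it reaches $M=2^Lk\ge k+r$. It then bounds the total cost by $O(\sqrt{M(M+N)})$ using $M<4r\le 4N$. Your version makes clear that for $r\ge k$ the inputs are asymptotically worthless, since building $k+r\le 2r$ copies from scratch already costs $O(\sqrt{rN})$. The paper's version works directly with the bound $O(c\sqrt{m(m+N)})$ of \Cref{lem:success-prob}, which has no $m\le N$ restriction. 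So it never has to worry about the hypothesis of \Cref{thm:0-to-k}.

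That hypothesis is the one place your write-up needs repair. When $k+r>N$ (allowed, since only $k\le r\le N$ is assumed), you run the $0\to k$ algorithm twice with targets at most $N$. But two runs, each guaranteed success only $2/3$, give joint success $1/3$ by a union bound, or $4/9$ with independence, not the claimed $2/3$. The sentence with $\min(k+r,\cdot)$ is also incomplete. Either of two easy fixes closes this.

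\emph{Fix (i).} The proof of \Cref{thm:0-to-k} goes through verbatim for any target $k'\le 2N$. Every stage input satisfies $m<k'\le 2N$, so $\sqrt{m(m+N)}\le\sqrt{3mN}$. The failure-probability analysis ($1/100+1/8+1/12$) does not involve $N$ at all.

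\emph{Fix (ii).} Boost each run before combining. The $0\to k$ algorithm consumes no input copies and detects its own failure. Rerunning a failed attempt once therefore raises its success probability to $8/9$. A union bound over the two runs then gives at least $7/9$, still with $O(\sqrt{rN})$ queries in total.
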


\begin{proof}
First suppose that \(r<k\). Set aside \(k-r\) input copies and do not act on
them. Apply \(\mathsf{Double}(m,c)\) once to the remaining \(r\)
copies with \(c=4\). The step produces \(2r\) fresh verified copies.
Together with the \(k-r\) untouched copies, these give $(k-r)+2r=k+r$
copies. By \Cref{lem:success-prob}, the failure probability is at most
\(1/16+e^{-49r/18}<1/3\). Since \(r\le N\), the query cost is
\(O(\sqrt{rN})\).

Now suppose that \(r\ge k\). Starting with all \(k\) input copies, repeatedly
apply \(\mathsf{Double}(m,c)\) until the number of copies is at least
\(k+r\). Let
\[
  L:=
  \left\lceil\log_2\!\left(\frac{k+r}{k}\right)\right\rceil,
  \qquad
  M:=2^Lk.
\]
At stage \(j=0,\ldots,L-1\), apply
\(\mathsf{Double}(2^jk,c_j)\) with \(c_j:=4(L-j)\).
We declare failure if any stage fails.
Note that \(k+r\le M<2(k+r)\le4r\). As in the preceding proof, the sum of
estimation failure probabilities is at most \(1/8\), and the sum of
verification failure probabilities is at most
\(\sum_{j\geq0}e^{-49\cdot2^jk/18}<1/12\).
Thus all stages succeed with probability greater than \(2/3\).
Writing \(s=L-j\) and using \(2^jk=M2^{-s}\),
\Cref{lem:success-prob} bounds the total number of queries by
\begin{align*}
  \sum_{j=0}^{L-1}
  O\!\left(c_j\sqrt{2^jk(2^jk+N)}\right)
  &\leq O\!\left(\sqrt{M(M+N)}
    \sum_{s=1}^{L}s\,2^{-s/2}\right)\\
  &=O\!\left(\sqrt{M(M+N)}\right)
   =O(\sqrt{rN}),
\end{align*}
since \(M<4r\) and \(r\leq N\). 
\end{proof}
\section{Lower Bounds}

\subsection{Problem Setting}

We first restate the problem. Let $\ket{\psi}$ be an $n$-qubit quantum state drawn from the Haar measure $\mu_N$, with $N=2^n$. An algorithm $\cA$ is given $k$ copies of the state as input, along with oracle access to $C_{\ket{\psi}}Z$. The algorithm outputs a state $\rho$ on $k+r$ registers. It wins if $\rho$ passes the projection onto $\ket{\psi}^{\otimes k+r}$. The average winning probability of algorithm $\cA$ is
\begin{equation*}
    \Pr[\cA\text{ wins}]
    =
    \mathbb{E}_{\ket{\psi}\sim\mu_N}\ 
            \bra{\psi}^{\otimes k+r}
            \rho
            \ket{\psi}^{\otimes k+r}.
\end{equation*}

\paragraph{Discrete scenario.}

We first give the lower bounds for a slightly different scenario, where the algorithm has access to a weaker oracle $R_{\psi}=    I-2\ket{\psi}\bra{\psi}$ and $\ket{\psi}$ is drawn from a discrete distribution $\mathcal{D}_n$, which we define below. To see the relation between $R_\psi$ and $C_{\ket{\psi}} Z$, we have $C_{\ket{\psi}}Z    =    \ket{0}\bra{0}\otimes I    +    \ket{1}\bra{1}\otimes R_{\psi}$. In the end, we will show that the lower bounds also extend to the general case of controlled reflection queries $C_{\ket{\psi}}Z$ to a Haar-random state.

We now focus on the discrete scenario and describe the execution of the algorithm. Recall that $N=2^n$. Let $[N]=\{0,1,\cdots,N-1\}$, let $M$ be polynomial in $N$, and let $\omega=e^{2\pi i/M}$ be an $M$-th root of unity. Define $\mathcal{D}_n$ as the uniform distribution over the state family $\{\ket{\psi_f}\}_{f}$, where $f:[N]\to [M]$, and $\ket{\psi_f}$ is defined as
\begin{equation*}
    \ket{\psi_f}
    =
    \frac{1}{\sqrt{N}}
    \sum_{x\in[N]}
    \omega^{f(x)}\ket{x}.
\end{equation*}
We denote $R_f = R_{\psi_f}    =    I-2\ket{\psi_f}\bra{\psi_f}$ as the reflection oracle, and $\Pi_{k+r}^{(f)}    =    \left(        \ket{\psi_f}\bra{\psi_f}    \right)^{\otimes k+r}$ as the success projector. For a random function $f$, the algorithm $\cA$ is given $k$ copies of $\ket{\psi_f}$. After $T$ queries to $R_f$, it outputs $\rho^{(f)}$, and its winning probability is
\begin{equation*}
    \Pr[\cA\text{ wins}]    =    \mathbb{E}_{f}\ 
    \bra{\psi_f}^{\otimes k+r}
            \rho^{(f)}
            \ket{\psi_f}^{\otimes k+r}.
\end{equation*}

We then have the following lower bounds on the query complexity of generating extra copies.

\begin{theorem}[$k\to k+1$]
\label{thm:discrete-k-to-k-plus-one}
Let $f:[N]\to [M]$ be a random function, and let an algorithm $\cA$ start with $k$ copies of $\ket{\psi_f}$, make $T$ queries to $R_f$, and aim to output $k+1$ copies. Then
\begin{equation*}
    \Pr[\cA\text{ wins}]
    \le
    \frac{\left(2T+\sqrt{k+1}\right)^2}{N}.
\end{equation*}
Consequently, in order to generate one extra copy with constant probability, either $k=\Omega(N)$ or $T=\Omega(\sqrt{N})$.
\end{theorem}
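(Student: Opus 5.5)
We purify the random function $f$ and run the compressed-oracle style argument outlined in the technical overview. The algorithm's workspace is joined with an oracle register $\frac{1}{\sqrt{M^N}}\sum_f\ket{f}$, and each query $R_f$ becomes a joint unitary $R$ on (query register, oracle register). In the Fourier basis of the oracle register, \Cref{eq:overview-fourier-query} describes $R$. The initial joint state is $\frac{1}{\sqrt{N^k}}\sum_{\vec x}\ket{\vec x}\ket{\widehat{\mu(\vec x)}}$, supported on all-positive multisets of balance $k$. Local unitaries act only on the algorithm side, and a query replaces $P$ by $P-e_a+e_b$. Hence the state always stays in the span of Fourier labels with balance $k$. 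The winning probability is $\norm{\Pi_{k+1}\ket{\Phi_T}}^2$, where $\Pi_{k+1}=\sum_f(\proj{\psi_f})^{\otimes k+1}\otimes\proj{f}$ acts on the $k+1$ output registers and the oracle register. The plan is to show
\[
\norm{\Pi_{k+1}\ket{\Phi_T}}\le \frac{2T}{\sqrt N}+\sqrt{\frac{k+1}{N}},
\]
which gives the stated bound after squaring.

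Let $\Gamma$ be the projector, acting only on the oracle register, onto Fourier labels $P$ with at least one negative entry (viewing entries as signed integers in $\{-M/2,\dots,M/2-1\}$; take $M$ larger than $2(T+k)$, say, so no wraparound occurs). We split
\[
\Pi_{k+1}\ket{\Phi_T}=\Pi_{k+1}\Gamma\ket{\Phi_T}+\Pi_{k+1}(I-\Gamma)\ket{\Phi_T},
\]
which gives two tasks: (i) $\norm{\Gamma\ket{\Phi_T}}\le 2T/\sqrt N$, and (ii) $\norm{\Pi_{k+1}\ket{\phi}}\le\sqrt{(k+1)/N}\,\norm{\ket\phi}$ for every $\ket\phi$ of balance $k$ with no negative entries.

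For (ii), we use the eigenbasis $\ket{\operatorname{eig}(P)}=N^{-(k+1)/2}\sum_{\vec x\in[N]^{k+1}}\ket{\vec x}\ket{\widehat{P+\mu(\vec x)}}$. Since $\ket\phi$ has balance $k$, only those $P$ with balance $-1$ contribute. Any such $P$ has a negative entry at some position $a_P$ (fix one choice per $P$). A term $\ket{\vec x}\ket{\widehat{P+\mu(\vec x)}}$ with no negative entries needs some $x_i=a_P$, and at most a $(k+1)/N$ fraction of the $\vec x$ do this. Write $\ket\phi=\sum_{\vec x,Q}c_{\vec x,Q}\ket{\vec x}\ket{\widehat Q}$ with $Q\ge 0$. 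Then $\braket{\operatorname{eig}(P)}{\phi}=N^{-(k+1)/2}\sum_{\vec x:\,P+\mu(\vec x)\ge0}c_{\vec x,P+\mu(\vec x)}$. Each pair $(\vec x,Q)$ with $Q\geq 0$ determines a unique $P=Q-\mu(\vec x)$. Applying Cauchy--Schwarz to each $P$ and then summing over $P$ gives
\[
\sum_P\abs{\braket{\operatorname{eig}(P)}{\phi}}^2\le N^{-(k+1)}\max_P\#\{\vec x: P+\mu(\vec x)\ge 0\}\,\norm{\phi}^2\le \frac{k+1}{N}\norm{\phi}^2,
\]
using $\#\{\vec x\}\le (k+1)N^k$. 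The eig vectors are orthonormal, so this yields (ii).

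For (i), which I expect to be the main obstacle, we use the standard hybrid decomposition. Let $\ket{\Phi_t}$ denote the state after $t$ queries. Since $\Gamma$ commutes with all local unitaries and $\Gamma\ket{\Phi_0}=0$, the triangle inequality gives
\[
\norm{\Gamma\ket{\Phi_T}}\le\sum_{t}\norm{\Gamma R(I-\Gamma)\ket{\Phi_{t-1}'}},
\]
where $\ket{\Phi'_{t-1}}$ is the state just before the $t$-th query. Here we use that $R$ maps $\Gamma$-states to $\Gamma$-states up to an error we must also control. It is cleaner to write $\Gamma R=\Gamma R\Gamma+\Gamma R(I-\Gamma)$ and to check that $\norm{\Gamma R\Gamma}\le1$ suffices. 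So it remains to show $\norm{\Gamma R(I-\Gamma)}\le 2/\sqrt N$. By \Cref{eq:overview-fourier-query}, the identity part of $R$ preserves $I-\Gamma$. The only way to create a negative entry is the term $-\frac2N\ket b\ket{\widehat{P-e_a+e_b}}$ with $P(a)=0$ (and $b\ne a$), which produces the unique negative at $a$. The key orthogonality observation is that, for fixed $b$, distinct inputs $(a,P,\text{workspace})$ map to distinct outputs: $a$ is recovered as the unique negative position, and then $P$ is recovered as well. The map therefore sends orthonormal inputs to vectors of norm $\le\frac2N\sqrt N$ that remain orthogonal across inputs, and this gives the $2/\sqrt N$ bound. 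Summing over the $T$ queries yields (i), and combining (i) and (ii) finishes the proof.
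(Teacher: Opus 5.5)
Your proposal is correct and is the paper's proof specialized to $r=1$: your step (i) is \Cref{lem:each-query-progress} with $\ell=1$, where your injectivity observation is the paper's $|\col(Q)|\le 1$, summed over the $T$ queries by the same triangle-inequality recursion; your step (ii) is \Cref{lem:final-projection-overlap} with $\ell=0$, $r=1$, with the same $(k+1)N^k$ count. Summing the increments directly, as you do, gives exactly $2T/\sqrt N$ and hence the stated constant, whereas substituting $r=1$ into the paper's final simplified bound from \Cref{lem:negativie-size-progress} would give $2\sqrt{e}\,T/\sqrt N$; the constant $2$ really comes from the intermediate path bound $\bigl\|\Gamma_{\geq 1}\ket{\Phi_T}\bigr\|\le T\sqrt{4/N}$, which is what your argument uses.
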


\begin{theorem}[$k\to k+r$]
\label{thm:discrete-k-to-k-plus-r}
Let $f:[N]\to [M]$ be a random function, and let an algorithm $\cA$ start with $k$ copies of $\ket{\psi_f}$, make $T$ queries to $R_f$, and aim to output $k+r$ copies, where $r>0$. Then
\begin{equation*}
    \Pr[\cA\text{ wins}]
    \le
    \left[
        O\left(
            \frac{T}{\sqrt{rN}}
        \right)
        +
        \sqrt{\frac{k+r}{N}}
    \right]^{2r}.
\end{equation*}
Consequently, when $r=o(N)$, in order to generate $r$ extra copies, either $k=\Omega(N)$ or $T=\Omega(\sqrt{rN})$. If neither condition is met, the winning probability decays exponentially in $r$.
\end{theorem}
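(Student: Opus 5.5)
We follow the multi-stage compressed-oracle strategy from the technical overview and purify $f$ throughout. Write the joint state after $t$ queries as $\ket{\Phi_t}=\sum_{P}\ket{\phi_{t,P}}\ket{\widehat P}$, where $P$ is read as a signed multiset. By \eqref{eq:overview-initial-state}, $\ket{\Phi_0}$ is supported on all-positive multisets of size $k$, i.e.\ of balance $k$. Each query acts as in \eqref{eq:overview-fourier-query}, so it preserves the balance. It also raises the number of negative elements by at most one, since adding $\overline a$ either creates one new negative element or cancels a positive $a$. Local unitaries do not touch the oracle register.

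Define $\Gamma_j$, for $j<r$, as the projector onto multisets with exactly $j$ negative elements, and $\Gamma_{\ge r}$ as the projector onto those with at least $r$. All of these commute with local operations.

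\textbf{Step 1 (per-query climbing bound).} We aim for an analogue of \Cref{lem:each-query-progress} for general $\ell$:
\[
\norm{\Gamma_{\ell}\,R\,\Gamma_{\ell-1}\ket{\phi}}\le c\sqrt{\ell/N}\,\norm{\Gamma_{\ell-1}\ket\phi}.
\]
The $\sqrt{\ell}$ factor arises because an outgoing basis state with $\ell$ negatives has at most $\ell$ preimages, one for each choice of which negative element $\overline a$ was just inserted. Inside each fibre we reuse the orthogonality argument from the case $\ell=1$, and the $2/N$ amplitude together with Cauchy--Schwarz over the $\ell$ preimages gives the factor $2\sqrt{\ell/N}$.

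\textbf{Step 2 (amplitude recursion).} Let $w_j(t)$ upper-bound $\norm{\Gamma_j\ket{\Phi_t}}$, or $\norm{\Gamma_{\ge j}\ket{\Phi_t}}$ for the top stage. Decomposing by the time of first entry into stage $j$, as in the $k\to k+1$ argument, and using the triangle inequality gives
\[
w_j(T)\le \sum_{t\le T} 2\sqrt{j/N}\,w_{j-1}(t).
\]
Iterating this yields the pyramid
\[
w_j(T)\le \binom{T}{j}\prod_{i\le j}2\sqrt{i/N}\le \left(\tfrac{2eT}{j}\sqrt{j/N}\right)^{j}=\left(O\!\left(\tfrac{T}{\sqrt{jN}}\right)\right)^{j}.
\]

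\textbf{Step 3 (conversion to the winning projector).} Here $\Pi_{k+r}$ is defined on the purified space through its eigenstates $\ket{\mathrm{eig}(P)}$ with $P$ of balance $-r$. We extend \Cref{lem:final-projection-overlap} to show
\[
\norm{\Pi_{k+r}\Gamma_j}\le\left(\sqrt{(k+r)/N}\right)^{r-j}.
\]
For $\vec x$ to turn a stage-$j$ multiset into $P+\mu(\vec x)$, it must cancel at least $r-j$ of the negative elements of $P$. This fixes $r-j$ coordinates of $\vec x$, and there are at most $(k+r)^{r-j}$ ways to choose which coordinates and pair them, so the relevant fraction of vectors is at most $((k+r)/N)^{r-j}$. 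Orthogonality of the eigenstates then lets us bound the overlap norm rather than sum it crudely.

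\textbf{Step 4 (combine).} The success amplitude satisfies
\[
\norm{\Pi_{k+r}\ket{\Phi_T}}\le \sum_{j=0}^{r} w_j(T)\left(\sqrt{(k+r)/N}\right)^{r-j}.
\]
For $j\le r$ we have $\tfrac{T}{\sqrt{jN}}\le \sqrt{r/j}\cdot\tfrac{T}{\sqrt{rN}}$, and the $(r/j)^{j/2}$ growth is absorbed into the binomial weights up to $e^{O(r)}$. The sum is therefore at most $\left(O(T/\sqrt{rN})+\sqrt{(k+r)/N}\right)^{r}$, and squaring gives the stated bound.

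The main obstacle is Step 3 together with the exact form of the Step 1 constants, because the bounds must compose multiplicatively, with no cross terms between stages. If a cruder overlap bound loses factors, we would instead use that the $\Gamma_j$ are mutually orthogonal and apply Cauchy--Schwarz across stages, which costs only a factor $\sqrt{r+1}$. That factor is absorbed into the $O(\cdot)$ raised to the power $r$.
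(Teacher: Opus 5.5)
Your architecture matches the paper's: Step 1 is \Cref{lem:each-query-progress}, Step 2 is \Cref{lem:negativie-size-progress}, and Steps 3--4 are \Cref{lem:final-projection-overlap} plus a binomial recombination. However, the Step 3 bound $\norm{\Pi_{k+r}\Gamma_j}\le((k+r)/N)^{(r-j)/2}$ is false. Your count treats the cancelled negative elements as given. In fact $\vec x$ only has to cancel \emph{some} $r-j$ of the (first $r$) negative elements, and that free choice contributes an unavoidable factor $\binom{r}{j}$.

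Concretely, take $j=r-1$ and a label $Q$ consisting of $r$ distinct negatives $\overline{a_1},\dots,\overline{a_r}$, so $\bal(Q)=-r$. Let $\ket\Phi$ be the uniform superposition of $\ket{\vec x}\ket{\widehat{Q+\mu(\vec x)}}$ over all $\vec x\in[N]^{k+r}$ with $\negg(Q+\mu(\vec x))=r-1$, i.e.\ over $\vec x$ containing exactly one of the $a_i$. Every such basis state is mapped by $\Pi_{k+r}$ to the same vector $N^{-(k+r)}\sum_{\vec y}\ket{\vec y}\ket{\widehat{Q+\mu(\vec y)}}$. Hence $\norm{\Pi_{k+r}\Gamma_{r-1}\ket\Phi}^2/\norm{\Gamma_{r-1}\ket\Phi}^2=|\col(Q)|/N^{k+r}$, which is about $r(k+r)/N$ when $r(k+r)\ll N$, not $(k+r)/N$. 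The correct statement is the paper's: the squared ratio is at most $\binom{r}{j}\bigl(\frac{k+r}{N}\bigr)^{r-j}$.

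The theorem survives, but only if the binomial budget is split evenly between the two factors. Your Step 2 bound is $(2eT/\sqrt{jN})^j=(r/j)^{j/2}x^j$ with $x=2eT/\sqrt{rN}$. Since $(r/j)^j\le\binom{r}{j}$, this gives $w_j(T)\le\sqrt{\binom{r}{j}}\,x^j$, which is the form of \Cref{lem:negativie-size-progress}. Multiplying by the corrected conversion factor $\sqrt{\binom{r}{j}}\,a^{r-j}$ with $a=\sqrt{(k+r)/N}$, the $j$-th term is at most $\binom{r}{j}x^ja^{r-j}$, and the binomial theorem gives exactly $(x+a)^r$.

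Absorbing the losses into $e^{O(r)}$ instead, as your Step 4 and fallback suggest, puts a constant larger than $1$ in front of $\sqrt{(k+r)/N}$. At $T=0$ you would get $C^{2r}((k+r)/N)^r$ rather than $((k+r)/N)^r$, which is weaker than the stated inequality. The $\sqrt{r+1}$ Cauchy--Schwarz across stages does not address the missing binomial at all.

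Two smaller repairs are needed in Step 2. First, a query can also decrease the negative-size by one, so exact stage $j$ is fed from stage $j+1$ as well. The first-entry decomposition is therefore valid only for the tail projectors $\Gamma_{\ge j}$, which is what the paper recurses on, using $\Gamma_{j-1}\preceq\Gamma_{\ge j-1}$. Second, the right-hand side of the recursion should involve $w_{j-1}(t-1)$; that is what yields $\binom{T}{j}$.
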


\subsection{Purification of the Oracle}

Before proving the lower bounds, we first include the sampling of the random
function $f:[N]\to [M]$ as part of the quantum process, using the oracle purification idea in \cite{zhandry19}. Let register $\mathsf{A}$ denote all of the algorithm's qubits, and let register $\mathsf{D}$ store the
function $f$ in the computational basis. We assume that all operations of the
algorithm are unitary, while any measurements may be deferred to the end by adding
ancilla registers.

After purifying the oracle, the initial state is
\begin{equation*}
    \ket{\Phi_0}
    =
    \frac{1}{\sqrt{M^N}}
    \sum_{f}
    \bigl(\ket{\psi_f}^{\otimes k}\ket{0}\bigr)_{\mathsf{A}}
    \otimes
    \ket{f}_{\mathsf{D}},
\end{equation*}
where the $k$ copies are initialized in the first $kn$ qubits of $\mathsf{A}$,
and all other registers of the algorithm are initialized to $\ket{0}$. We omit the zero-initialized ancillary qubits in what follows.

We define the following controlled operators and projectors on the joint system
$\mathsf{A}\otimes\mathsf{D}$:
\begin{align*}
    R
    &=
    \sum_{f}
    (R_f)_{\mathsf{A}}
    \otimes
    \ket{f}\bra{f}_{\mathsf{D}},\\
    \Pi_{k+r}
    &=
    \sum_{f}
    \bigl(\Pi_{k+r}^{(f)}\bigr)_{\mathsf{A}}
    \otimes
    \ket{f}\bra{f}_{\mathsf{D}}.
\end{align*}
In particular, setting $\Pi=\Pi_1$ gives $R=I-2\Pi$. Here, $R_f$ only acts non-trivially on the first $n$ qubits of
$\mathsf{A}$, while $\Pi_{k+r}^{(f)}$ only acts non-trivially on the first
$(k+r)n$ qubits, but for convenience we extend these operators to the entire register
$\mathsf{A}$ by tensoring with the identity.

For an algorithm $\cA$ making $T$ reflection queries, the computation can be
viewed as an alternating sequence of arbitrary oracle-independent unitaries and
reflection queries:
\begin{equation*}
    U_0\otimes I_{\mathsf{D}},\ R,\
    U_1\otimes I_{\mathsf{D}},\ R,\ \ldots,\
    U_{T-1}\otimes I_{\mathsf{D}},\ R,\
    U_T\otimes I_{\mathsf{D}},
\end{equation*}
where $U_0,\ldots,U_T$ act only on $\mathsf{A}$. For $1\le t\le T$, the state
after $t$ queries is
\begin{align*}
    \ket{\Phi_t}
    &=
    (U_t\otimes I_{\mathsf{D}})
    R
    (U_{t-1}\otimes I_{\mathsf{D}})
    \cdots
    R
    (U_0\otimes I_{\mathsf{D}})
    \ket{\Phi_0}\\
    &=
    \frac{1}{\sqrt{M^N}}
    \sum_{f}
    \bigl(
        U_tR_fU_{t-1}\cdots R_fU_0
        \ket{\psi_f}^{\otimes k}
    \bigr)_{\mathsf{A}}
    \otimes
    \ket{f}_{\mathsf{D}}.
\end{align*}

The average winning probability of the algorithm is therefore 
\begin{equation*}
    \Pr[\cA\text{ wins}]
    =
    \left\|
        \Pi_{k+r}\ket{\Phi_T}
    \right\|^2.
\end{equation*}

\paragraph{Fourier Basis.}

The picture can be further simplified by viewing the oracle register
$\mathsf{D}$ in the Fourier basis. For a length-$N$ vector
$P\in\mathbb{Z}_M^N$, define the corresponding Fourier basis state by
\begin{equation*}
    \ket{\widehat{P}}_{\mathsf{D}}
    =
    \frac{1}{\sqrt{M^N}}
    \sum_{f}
    \omega^{\langle f,P\rangle}
    \ket{f}_{\mathsf{D}},
\end{equation*}
where $\langle f,P\rangle
    =
    \sum_{x\in[N]}f(x)P(x)\pmod M$
is the inner product of $f$ and $P$. For $a\in[N]$, let $e_a\in\mathbb{Z}_M^N$ be its indicator vector; for a tuple
$\vec{x}=(x_1,\ldots,x_s)\in[N]^s$, define its multiplicity vector
$\mu(\vec{x})\in\mathbb{Z}_M^N$ with
    $\mu(\vec{x})(a)
    =
    \big|\{i\in[s]:x_i=a\}\big|$.

Representing the initial state in the Fourier basis of the oracle register
gives
\begin{equation}
\label{eq:initial-Fourier-state}
\begin{split}
     \ket{\Phi_0}
     &= \frac{1}{\sqrt{M^N}}
    \sum_{f}
    \bigl(I\cdot \ket{\psi_f}^{\otimes k}\bigr)
    \otimes
    \ket{f}\\
    &=
    \frac{1}{\sqrt{M^N}}\sum_f \sum_{\vec{x}\in[N]^k} \big(\bigotimes_{i=1}^k\ \ket{x_i}\langle x_i | \psi_f\rangle\big)\otimes \ket{f}\\
    &=
    \frac{1}{\sqrt{M^N}}\sum_f \sum_{\vec{x}\in[N]^k} \frac{1}{\sqrt{N^k}}\omega^{f(x_1)+\cdots+f(x_k)}\ket{\vec{x}}\otimes \ket{f}\\
    &=   
    \frac{1}{\sqrt{N^k}}
    \sum_{\vec{x}\in[N]^k}
    \ket{\vec{x}}_{\mathsf{A}}
    \otimes
    \ket{\widehat{\mu(\vec{x})}}_{\mathsf{D}}.
\end{split}    
\end{equation}

For each reflection query on a basis state
$\ket{a}_{\mathsf{A}}\ket{\widehat{P}}_{\mathsf{D}}$, the reflection query $R$ acts as
\begin{equation}
\label{eq:each-query}
    \begin{split}
        R\ket{a}\ket{\widehat{P}} &= \ket{a}\ket{\widehat{P}} - 2\Pi \ket{a}\ket{\widehat{P}} \\
        &= \ket{a}\ket{\widehat{P}} -  2\sum_{f} \frac{1}{\sqrt{M^N}}\omega^{\langle f,P\rangle} \langle \psi_f|a\rangle\cdot \ket{\psi_f}\ket{f}\\
        &= \ket{a}\ket{\widehat{P}} - \frac{2}{N}\sum_{b\in [N]}\sum_{f} \frac{1}{\sqrt{M^N}}\omega^{\langle f,P\rangle -f(a)+f(b)} \ket{b}\ket{f}\\
        &= \ket{a}\ket{\widehat{P}} - \frac{2}{N}\sum_{b\in [N]}\ket{b}\ket{\widehat{P-e_a+e_b}}
    \end{split}
\end{equation}
Similarly, the projector $\Pi_{k+r}$ acts on $\ket{\vec{x}}\ket{\widehat{P}}$ with $\vec{x}\in[N]^{k+r}$ as follows:
\begin{equation*}
    \Pi_{k+r}
    \ket{\vec{x}}
    \ket{\widehat{P}}
    =
    \frac{1}{N^{k+r}}
    \sum_{\vec{y}\in[N]^{k+r}}
    \ket{\vec{y}}
    \otimes
    \ket{\widehat{P-\mu(\vec{x})+\mu(\vec{y})}}.
\end{equation*}

\paragraph{Balance and Negativity.}

We track the evolution of the state in the Fourier basis by defining the balance
$\bal(P)$ and negative-size $\negg(P)$ of a Fourier label
$P\in\mathbb{Z}_M^N$. We interpret each entry of $P$ as a signed integer in $\{-M/2,-M/2+1,\cdots,M/2-1\}$ while keeping the same residue modulo $M$. In this way, we can view $P$ as a multiplicity-recording table of a multiset with signed elements. More specifically, a multiset could include positive elements $\{0,1,\cdots,N-1\}$ and negative elements $\{\overline{0},\overline{1},\cdots,\overline{N-1}\}$, while a pair $(a,\overline{a})$ cancels out, so both elements cannot appear at the same time. For each $a\in [N]$, if $P(a)<0$, the corresponding multiset of $P$ contains $|P(a)|$ copies of element $\overline{a}$; if $P(a)>0$, it contains $|P(a)|$ copies of element $a$.

We then define balance and negative-size as follows:
\begin{equation*}
    \bal(P)
    =
    \sum_{x\in[N]}P(x),
    \quad
    \negg(P)
    =
    \sum_{x\in[N]}\max\{-P(x),0\}.
\end{equation*}
Then, the balance of $P$ is the number of positive elements minus the number of negative elements, while the negative-size of $P$ is the number of negative elements.

Throughout the algorithm, only reflection queries change the amplitude on each Fourier basis state. For the balance of Fourier basis states, by \Cref{eq:initial-Fourier-state,eq:each-query}, the purified state is initially supported on Fourier basis states with balance $k$, and the balance remains unchanged during reflection queries. Therefore, for an algorithm starting with $k$ copies, the balance of the purified state is $k$ and is invariant. Also, for the negative-size of Fourier basis states, the purified state is initially supported on Fourier basis states with negative-size $0$, and each reflection query can change the negative-size by at most $1$. Consequently, we can set $M$ large enough that $M/2\gg k+T$ for the values of $k,T$ under consideration. Then every supported Fourier label $P$ satisfies $|P(x)|\leq k+T<M/2$ for every $x\in [N]$, so it remains a valid multiplicity-recording table without overflow. We can take $M$ arbitrarily large without affecting the analysis.

\paragraph{Progress Projectors.}

For $\ell\in\mathbb{N}$, define the progress projector $\Gamma_\ell$ on the entire
system $\mathsf{A}\otimes\mathsf{D}$ by
\begin{equation*}
    \Gamma_\ell
    =
    I_{\mathsf{A}}
    \otimes
    \ \ \sum_{\mathclap{\substack{P\in\mathbb{Z}_M^N,\\
                    \negg(P)=\ell}}}\ 
    \ket{\widehat{P}}\bra{\widehat{P}}_{\mathsf{D}},
\end{equation*}
which projects onto the subspace of Fourier basis states with negative-size $\ell$. We also define $\Gamma_{\ge \ell}$ for the subspace with negative-size $\geq\ell$, and $\Gamma_{\le \ell}$ for the subspace with negative-size $\leq\ell$. Notice that all these projectors act trivially on register $\mathsf{A}$. Therefore, the weight on each projector is invariant after any local unitary $U_{\mathsf{A}}\otimes I_{\mathsf{D}}$, making it a good progress measure.

\subsection{Supporting Lemmas}

We then prove three supporting lemmas, showing that in order to project the
final state onto the accepting subspace $\Pi_{k+r}$, the state needs to be
mainly supported on high-negative-size subspaces, and we bound the progress of
pumping a state from low negative-size to high negative-size.

\begin{lemma}
\label{lem:each-query-progress}
For $\ell\geq 1$ and any state $\ket{\Phi}_{\mathsf{A}\mathsf{D}}$,
\begin{equation*}
    \left\|
        \Gamma_\ell R\Gamma_{\ell-1}\ket{\Phi}
    \right\|^2
    \leq
    \frac{4\ell}{N}
    \left\|
        \Gamma_{\ell-1}\ket{\Phi}
    \right\|^2.
\end{equation*}
\end{lemma}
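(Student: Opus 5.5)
Expand $\Gamma_{\ell-1}\ket{\Phi}=\sum_{a,P,z}\alpha_{a,P,z}\ket{a}\ket{z}\ket{\widehat{P}}$. Here $a\in[N]$ is the query register (the first $n$ qubits of $\mathsf{A}$), $z$ ranges over the remaining qubits of $\mathsf{A}$, and $P$ ranges over labels with $\negg(P)=\ell-1$. By \Cref{eq:each-query}, the identity part of $R$ preserves $P$ and is killed by $\Gamma_\ell$. So only the term
\[
-\frac{2}{N}\sum_{b}\ket{b}\ket{\widehat{P-e_a+e_b}}
\]
can reach negative-size $\ell$.

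Adding $-e_a$ raises the negative-size by one exactly when $P(a)\le 0$. Adding $e_b$ then keeps it at $\ell$ exactly when $b\neq a$ and $P(b)\ge 0$; if instead $P(b)<0$, the new negative element is cancelled and the size stays at $\ell-1$. Hence
\[
\Gamma_\ell R\Gamma_{\ell-1}\ket{\Phi}
=-\frac{2}{N}\sum_{\substack{a,P,z:\\ P(a)\le 0}}\alpha_{a,P,z}\sum_{\substack{b\ne a:\\ P(b)\ge 0}}\ket{b}\ket{z}\ket{\widehat{P-e_a+e_b}}.
\]

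Next I regroup by the output basis state $\ket{b}\ket{z}\ket{\widehat{Q}}$ with $Q=P-e_a+e_b$, and ask how many input triples $(a,P)$ feed into it. Given $(b,Q)$, the negative label $\overline{a}$ must be a negative element of $Q$: since $P(a)\le 0$ and $b\ne a$, we have $Q(a)=P(a)-1<0$. The multiset of $Q$ has exactly $\ell$ negative elements counted with multiplicity, so there are at most $\ell$ distinct choices of $a$. Once $a$ is fixed, $P=Q+e_a-e_b$ is determined.

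So each output coefficient is $-\frac{2}{N}\beta_{b,z,Q}$ with $\beta_{b,z,Q}=\sum_{a\in S(Q)}\alpha_{a,Q+e_a-e_b,z}$, where $|S(Q)|\le\ell$. Cauchy--Schwarz gives
\[
\|\Gamma_\ell R\Gamma_{\ell-1}\ket{\Phi}\|^2\le\frac{4}{N^2}\sum_{b,z,Q}\ell\sum_{a\in S(Q)}\bigl|\alpha_{a,Q+e_a-e_b,z}\bigr|^2 .
\]
Each input triple $(a,P,z)$ then appears at most once for each $b\in[N]$, namely with $Q=P-e_a+e_b$. The right side is therefore at most
\[
\frac{4\ell}{N^2}\cdot N\sum_{a,P,z}|\alpha_{a,P,z}|^2=\frac{4\ell}{N}\|\Gamma_{\ell-1}\ket{\Phi}\|^2 .
\]

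The main obstacle is the collision counting: showing that the map $(a,P,b)\mapsto(b,Q)$ is at most $\ell$-to-one. This rests on $\overline{a}$ being forced to lie among the $\ell$ negative elements of $Q$, which needs the careful case analysis $P(a)\le 0$, $b\ne a$, $P(b)\ge 0$ above. Once that is in place, everything else is a Cauchy--Schwarz and double-counting computation. It also explains the overview's remark that for $\ell=1$ the map is injective, so orthogonality is preserved.
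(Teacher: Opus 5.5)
Your proof is correct and follows essentially the same route as the paper. Both decompose $\Gamma_{\ell-1}\ket{\Phi}$ over $(a,P)$, observe that $\overline{a}$ must be one of the $\ell$ negative elements of the output label (so each label has at most $\ell$ preimages), and finish with Cauchy--Schwarz and a factor $N$ from the sum over $b$; the only difference is that the paper relaxes the constraint on $b$ and shifts by $e_b$ to count collisions on $P-e_a$, whereas you count collisions on $P-e_a+e_b$ directly for each $b$ using the exact conditions $P(a)\le 0$, $b\neq a$, $P(b)\ge 0$.
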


\begin{proof}
Without loss of generality, let $\Gamma_{\ell-1}\ket{\Phi}$ be a normalized
state lying in the subspace with negative-size $\ell-1$. We decompose it as
\begin{equation*}
    \Gamma_{\ell-1}\ket{\Phi}
    =
    \ \sum_{\mathclap{\substack{a,P\\ \negg(P)=\ell-1}}}
    \left(\ket{a}\ket{\phi_{a,P}}\right)_{\mathsf{A}}\ 
    \ket{\widehat{P}}_{\mathsf{D}},
\end{equation*}
where $\{\ket{a}\}_{a\in [N]}$ is the computational basis of the $n$-qubit query register,
and $\ket{\phi_{a,P}}$ is the unnormalized state in the rest of the
algorithm's qubits corresponding to $a,P$. Also, $\sum_{a,P, \negg(P)=\ell-1}
    \left\|\ket{\phi_{a,P}}\right\|^2=1$. Then, we have
\begin{align*}
    \Gamma_\ell R\Gamma_{\ell-1}\ket{\Phi}
    &=
    \Gamma_\ell\Gamma_{\ell-1}\ket{\Phi}
    -2\Gamma_\ell\Pi\Gamma_{\ell-1}\ket{\Phi} \\
    &=
    -2\Gamma_\ell\Pi\Gamma_{\ell-1}\ket{\Phi} \\
    &=
    -\frac{2}{N}
    \sum_{b\in [N]} \ \ket{b}
    \otimes
    \ \sum_{\mathclap{\substack{a,P\\ \negg(P)=\ell-1}}}\ 
    \ket{\phi_{a,P}}
    \big(
        \Gamma_\ell
        \ket{\widehat{P-e_a+e_b}}
    \big).
\end{align*}

We focus on the weight on basis states where $\negg(P)=\ell-1$ and
$\negg(P-e_a+e_b)=\ell$. This also gives
$\negg(P-e_a)=\ell$, which means that the multiset $P$ does not contain the element $a$, so $P(a)\leq 0$. Relaxing the constraint on $b$ when summing up, we have
\begin{align*}
    \left\|
        \Gamma_\ell R\Gamma_{\ell-1}\ket{\Phi}
    \right\|^2
    &=
    \frac{4}{N^2}
    \bigg\|
        \ \ \sum_{\mathclap{\substack{a,b,P\\
        \negg(P)=\ell-1,\\
        \negg(P-e_a+e_b)=\ell}}} \ 
        \ket{b}\ket{\phi_{a,P}}
        \ket{\widehat{P-e_a+e_b}}
    \bigg\|^2 \\
    &\leq
    \frac{4}{N^2}
    \sum_{b\in [N]}
    \bigg\|
        \ \ \sum_{\mathclap{\substack{a,P\\
        \negg(P)=\ell-1,\\
        \negg(P-e_a)=\ell}}}\ 
        \ket{\phi_{a,P}}
        \ket{\widehat{P-e_a+e_b}}
    \bigg\|^2 \\
    &=
    \frac{4}{N}
    \bigg\|
        \ \ \sum_{\mathclap{\substack{a,P\\
        \negg(P)=\ell-1,\\
        \negg(P-e_a)=\ell}}}\ 
        \ket{\phi_{a,P}}
        \ket{\widehat{P-e_a}}
    \bigg\|^2.
\end{align*}

For $Q\in\mathbb{Z}_M^N$, we define $\col(Q)    =    \left\{        (a,P):        P-e_a=Q,\         \negg(P)=\ell-1    \right\}$ to be the set of pairs $(a,P)$ that ``collide'' onto the same Fourier basis
state $\ket{\widehat{P-e_a}}$. Here we only need to consider $Q$ with $\negg(Q)=\ell$. For any $(a,P)\in \col(Q)$, viewing $P$ and $Q$ as multisets shows that $a$ must correspond to one of the $\ell$ negative elements in $Q$. Meanwhile, after $a$ is chosen, $P=Q+e_a$ is also fixed. Therefore, for any $Q$ with
$\negg(Q)=\ell$, we have $|\col(Q)|\leq\ell$. Then,
\begin{align*}
    \Big\|
        \Gamma_\ell R\Gamma_{\ell-1}\ket{\Phi}
    \Big\|^2
    &=
    \frac{4}{N}
    \Big\|
        \sum_{\substack{Q\\ \negg(Q)=\ell}}
        \big(
            \sum_{(a,P)\in\col(Q)}
            \ket{\phi_{a,P}}
        \big)
        \otimes\ket{\widehat{Q}}
    \Big\|^2 \\
    &=
    \frac{4}{N}
    \sum_{\substack{Q\\ \negg(Q)=\ell}}
    \left\|
        \sum_{(a,P)\in\col(Q)}
        \ket{\phi_{a,P}}
    \right\|^2 \\
    &\leq
    \frac{4}{N}
    \ \sum_{\mathclap{\substack{Q\\ \negg(Q)=\ell}}}\ 
    |\col(Q)|\cdot
    \ \sum_{\mathclap{(a,P)\in\col(Q)}}\ \ 
    \left\|\ket{\phi_{a,P}}\right\|^2 \\
    &\leq
    \frac{4}{N}\cdot\ell\cdot
    \sum_{\substack{Q\\ \negg(Q)=\ell}}
    \sum_{(a,P)\in\col(Q)}
    \left\|\ket{\phi_{a,P}}\right\|^2
    =
    \frac{4\ell}{N}.
\end{align*}
\end{proof}

\begin{lemma}
\label{lem:negativie-size-progress}
Let $\ket{\Phi_T}_{\mathsf{A}\mathsf{D}}$ be the purified state after $T$ reflection queries. Then, for $1\leq \ell\leq r$,
\begin{equation*}
    \bigl\|\Gamma_{\geq \ell}\ket{\Phi_T}\bigr\|^2
    \leq
    \binom{r}{\ell}
    \left(\frac{4eT^2}{rN}\right)^\ell.
\end{equation*}
\end{lemma}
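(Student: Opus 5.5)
The plan is to track the amplitudes $a^{(\ell)}_t:=\bigl\|\Gamma_{\ge\ell}\ket{\Phi_t}\bigr\|$ step by step across the $T$ queries, using \Cref{lem:each-query-progress} as the per-query increment bound. We then unroll the resulting recursion into a sum over the times at which the negative-size first climbs each level.

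\textbf{Invariance and initial values.} The local unitaries $U_t\otimes I_{\mathsf D}$ commute with every $\Gamma_{\ge\ell}$, so only the queries $R$ change $a^{(\ell)}_t$. Initially the state is supported on negative-size $0$ by \Cref{eq:initial-Fourier-state}. Hence $a^{(0)}_0=1$ and $a^{(\ell)}_0=0$ for $\ell\ge1$. We also always have $a^{(0)}_t\le 1$.

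\textbf{One-query recursion.} By \Cref{eq:each-query}, a query changes the negative-size of each Fourier label by at most one. Therefore $\Gamma_{\ge\ell}R\,\Gamma_{\le\ell-2}=0$, and
\[
  \Gamma_{\ge\ell}R\ket{\Phi}=\Gamma_{\ge\ell}R\,\Gamma_{\ge\ell}\ket{\Phi}+\Gamma_{\ell}R\,\Gamma_{\ell-1}\ket{\Phi}.
\]
The first term has norm at most $\|\Gamma_{\ge\ell}\ket{\Phi}\|$ because $R$ is unitary. The second term is bounded by \Cref{lem:each-query-progress} by $2\sqrt{\ell/N}\,\|\Gamma_{\ell-1}\ket{\Phi}\|$, which is at most $2\sqrt{\ell/N}\,\|\Gamma_{\ge\ell-1}\ket{\Phi}\|$. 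The triangle inequality then gives
\[
  a^{(\ell)}_{t+1}\le a^{(\ell)}_t+2\sqrt{\ell/N}\;a^{(\ell-1)}_t .
\]

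\textbf{Unrolling.} Iterating this in $t$ and inducting on $\ell$, each path contributing to $a^{(\ell)}_T$ is indexed by query times $t_1<t_2<\cdots<t_\ell$ in $\{1,\ldots,T\}$, where $t_j$ is the query at which level $j$ is first reached. Each path carries the factor $\prod_{j=1}^{\ell}2\sqrt{j/N}=(2/\sqrt N)^\ell\sqrt{\ell!}$. Concretely, by induction on $\ell$,
\[
  a^{(\ell)}_T\le \binom{T}{\ell}\Bigl(\tfrac{2}{\sqrt N}\Bigr)^{\ell}\sqrt{\ell!}.
\]
The base case $\ell=0$ is trivial. The inductive step sums $2\sqrt{\ell/N}\,a^{(\ell-1)}_t$ over $t$ and uses $\sum_{t<T}\binom{t}{\ell-1}=\binom{T}{\ell}$. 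If $\ell>T$ the bound is $0$, which is consistent.

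\textbf{Final estimate.} Squaring and using $\binom T\ell\le T^\ell/\ell!$ gives
\[
  \bigl\|\Gamma_{\ge\ell}\ket{\Phi_T}\bigr\|^2\le \frac{1}{\ell!}\Bigl(\frac{4T^2}{N}\Bigr)^{\ell}.
\]
To compare with the target, we use $\binom{r}{\ell}\ge (r/\ell)^\ell$ and $\ell!\ge(\ell/e)^\ell$:
\[
  \binom{r}{\ell}\Bigl(\frac{4eT^2}{rN}\Bigr)^\ell\ge\Bigl(\frac{e}{\ell}\Bigr)^\ell\Bigl(\frac{4T^2}{N}\Bigr)^\ell\ge\frac{1}{\ell!}\Bigl(\frac{4T^2}{N}\Bigr)^\ell,
\]
which completes the argument. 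The main point needing care is the one-query recursion, specifically that the only cross-level leakage into $\Gamma_{\ge\ell}$ is from exactly level $\ell-1$ to level $\ell$. This is precisely what \Cref{lem:each-query-progress} controls, and the unit-change property of \Cref{eq:each-query} ensures no other term appears.
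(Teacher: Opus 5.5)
Your proposal is correct and follows essentially the same route as the paper. You use the same one-query recurrence $\|\Gamma_{\geq\ell}\ket{\Phi_{t+1}}\|\le\|\Gamma_{\geq\ell}\ket{\Phi_t}\|+2\sqrt{\ell/N}\,\|\Gamma_{\geq\ell-1}\ket{\Phi_t}\|$ from \Cref{lem:each-query-progress} and the unit-change property, reach the same closed form $\binom{T}{\ell}(2/\sqrt{N})^{\ell}\sqrt{\ell!}$ (via the hockey-stick identity where the paper uses Pascal's rule), and finish with the same elementary binomial estimates; your intermediate bound $\frac{1}{\ell!}(4T^2/N)^{\ell}$ is even slightly sharper than the stated one.
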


\begin{proof}
Let $f(\ell,T)=\bigl\|\Gamma_{\geq \ell}\ket{\Phi_T}\bigr\|$. We prove the result by induction. First, notice the two corner cases:
\begin{itemize}
    \item If $\ell=0$, then $f(\ell,T)\leq 1$, since $\Gamma_{\geq 0}=I$.
    \item If $\ell>0$ and $T=0$, then $f(\ell,T)=0$, since $\ket{\Phi_0}$ is only supported on Fourier basis states with negative-size $0$.
\end{itemize}

We then reduce $f(\ell,T)$ to $f(\ell,T-1)$ and $f(\ell-1,T-1)$. Notice that $\ket{\Phi_T}=(U_T\otimes I)R\ket{\Phi_{T-1}}$, and the local unitary $U_T\otimes I$ commutes with the progress projector $\Gamma_{\geq \ell}$. Then
\begin{equation}
\label{eq:gamma-tail-recurrence-derivation}
\begin{split}
    \bigl\|\Gamma_{\geq \ell}\ket{\Phi_T}\bigr\|
    &=
    \bigl\|\Gamma_{\geq \ell}R\ket{\Phi_{T-1}}\bigr\| \\
    &\leq
    \bigl\|\Gamma_{\geq \ell}R\Gamma_{\geq \ell}\ket{\Phi_{T-1}}\bigr\|
    +
    \bigl\|\Gamma_{\geq \ell}R\Gamma_{\leq \ell-1}\ket{\Phi_{T-1}}\bigr\| \\
    &\leq
    \bigl\|\Gamma_{\geq \ell}\ket{\Phi_{T-1}}\bigr\|
    +
    \bigl\|\Gamma_\ell R\Gamma_{\ell-1}\ket{\Phi_{T-1}}\bigr\| \\
    &\leq
    \bigl\|\Gamma_{\geq \ell}\ket{\Phi_{T-1}}\bigr\|
    +
    \sqrt{\frac{4\ell}{N}}\,
    \bigl\|\Gamma_{\geq \ell-1}\ket{\Phi_{T-1}}\bigr\|,
\end{split}
\end{equation}
where the second inequality holds because one reflection query increases the negative-size of the Fourier label by at most one, and the only possible transition into the $\Gamma_{\geq\ell}$ subspace from below is from $\ell-1$ to $\ell$. The third inequality follows from \Cref{lem:each-query-progress} and the fact that $\Gamma_{\ell-1}\preceq\Gamma_{\geq \ell-1}$. Therefore,
\begin{equation}
    f(\ell,T)
    \leq
    f(\ell,T-1)
    +
    \sqrt{\frac{4\ell}{N}}\,f(\ell-1,T-1).
    \label{eq:gamma-tail-recurrence}
\end{equation}

We then prove by induction that
\begin{equation}
    f(\ell,T)
    \leq
    \binom{T}{\ell}
    \sqrt{\frac{4^\ell\ell!}{N^\ell}}.
    \label{eq:gamma-tail-path-bound}
\end{equation}
It is easy to check that this holds for $\ell=0$ and $T=0$. Assume that it holds for $f(\ell,T-1)$ and $f(\ell-1,T-1)$. Then, according to \Cref{eq:gamma-tail-recurrence},
\begin{align*}
    f(\ell,T)
    &\leq
    \binom{T-1}{\ell}
    \sqrt{\left(\frac{4}{N}\right)^\ell\ell!}
    +
    \sqrt{\frac{4\ell}{N}}\,
    \binom{T-1}{\ell-1}
    \sqrt{\left(\frac{4}{N}\right)^{\ell-1}(\ell-1)!} \\
    &=
    \left[
        \binom{T-1}{\ell}
        +
        \binom{T-1}{\ell-1}
    \right]
    \sqrt{\frac{4^\ell\ell!}{N^\ell}} =
    \binom{T}{\ell}
    \sqrt{\frac{4^\ell\ell!}{N^\ell}}.
\end{align*}

We further have the desired bound
\begin{align*}
    \bigl\|\Gamma_{\geq \ell}\ket{\Phi_T}\bigr\|^2
    &\leq
    \binom{T}{\ell}\binom{T}{\ell}\ell!
    \left(\frac{4}{N}\right)^\ell \\
    &\leq
    \binom{r}{\ell}
    \frac{\binom{T}{\ell}}{\binom{r}{\ell}}
    \left(\frac{4T}{N}\right)^\ell \\
    &\leq
    \binom{r}{\ell}
    \frac{(eT/\ell)^\ell}{(r/\ell)^\ell}
    \left(\frac{4T}{N}\right)^\ell \\
    &=
    \binom{r}{\ell}
    \left(\frac{4eT^2}{rN}\right)^\ell.
\end{align*}
\end{proof}

\begin{lemma}
\label{lem:final-projection-overlap}
Let $k,r>0$, and let $\ket{\Phi}_{\mathsf{A}\mathsf{D}}$ be a purified state supported on balance-$k$ path states. Then, for $0\leq \ell\leq r-1$,
\begin{equation*}
    \bigl\|\Pi_{k+r}\Gamma_\ell\ket{\Phi}\bigr\|^2
    \leq
    \binom{r}{\ell}
    \left(\frac{k+r}{N}\right)^{r-\ell}
    \bigl\|\Gamma_\ell\ket{\Phi}\bigr\|^2.
\end{equation*}
\end{lemma}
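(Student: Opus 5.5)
The plan is to diagonalize $\Pi_{k+r}$ in the Fourier picture, compute the overlap of $\Gamma_\ell\ket{\Phi}$ with each eigenvector, and reduce the lemma to a counting bound on how many tuples $\vec{x}\in[N]^{k+r}$ can cancel the negative elements of a label.

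\textbf{Step 1: eigendecomposition of $\Pi_{k+r}$.} From the action of $\Pi_{k+r}$ on $\ket{\vec{x}}\ket{\widehat{P}}$ computed earlier, I would write $\Pi_{k+r}=\sum_{P\in\mathbb{Z}_M^N}\ket{\operatorname{eig}(P)}\bra{\operatorname{eig}(P)}\otimes I$, where
\[
\ket{\operatorname{eig}(P)}=N^{-(k+r)/2}\sum_{\vec{x}\in[N]^{k+r}}\ket{\vec{x}}\ket{\widehat{P+\mu(\vec{x})}},
\]
and the identity acts on the remaining qubits of $\mathsf{A}$. These vectors are orthonormal. Indeed, $\braket{\operatorname{eig}(P)}{\operatorname{eig}(P')}=N^{-(k+r)}\sum_{\vec{x}}\delta_{P,P'}$, because the $\mathsf{A}$ part forces $\vec{x}=\vec{x}'$. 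Hence $\|\Pi_{k+r}\Gamma_\ell\ket{\Phi}\|^2=\sum_P\|(\bra{\operatorname{eig}(P)}\otimes I)\Gamma_\ell\ket{\Phi}\|^2$. Each term is the squared norm of a vector on the leftover registers.

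\textbf{Step 2: overlap with a single eigenvector.} Expand $\Gamma_\ell\ket{\Phi}=\sum_{\vec{x},Q}\ket{\vec{x}}\ket{\phi_{\vec{x},Q}}\ket{\widehat{Q}}$. Here $\vec{x}$ ranges over the first $k+r$ registers, and $Q$ ranges over labels with $\bal(Q)=k$ and $\negg(Q)=\ell$. Then
\[
(\bra{\operatorname{eig}(P)}\otimes I)\Gamma_\ell\ket{\Phi}=N^{-(k+r)/2}\sum_{\vec{x}\in S_P}\ket{\phi_{\vec{x},P+\mu(\vec{x})}},
\]
where $S_P=\{\vec{x}:\bal(P+\mu(\vec{x}))=k,\ \negg(P+\mu(\vec{x}))=\ell\}$. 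Since $\bal(P+\mu(\vec{x}))=\bal(P)+k+r$, only labels $P$ with $\bal(P)=-r$ contribute. For those, $m:=\negg(P)\ge r$. Applying Cauchy--Schwarz and then swapping sums uses the fact that each pair $(\vec{x},Q)$ arises from exactly one $P=Q-\mu(\vec{x})$. This gives
\[
\|\Pi_{k+r}\Gamma_\ell\ket{\Phi}\|^2\le N^{-(k+r)}\max_{P:\bal(P)=-r}|S_P|\cdot\|\Gamma_\ell\ket{\Phi}\|^2 .
\]
So the lemma reduces to showing $|S_P|\le\binom{r}{\ell}(k+r)^{r-\ell}N^{k+\ell}$. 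Note that Cauchy--Schwarz on each fixed $P$ would fail if applied before discarding the empty $S_P$; the restriction to $\bal(P)=-r$ handles this.

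\textbf{Step 3: counting $S_P$.} For $\vec{x}\in S_P$, the tuple must cancel at least $m-\ell$ of the $m$ negative elements of $P$. Each such cancellation pins some coordinate of $\vec{x}$ to a specific value. I would count by choosing:
\begin{itemize}
\item which $m-\ell$ negative elements get cancelled: at most $\binom{m}{\ell}$ choices, viewed as a multiset;
\item the coordinate positions that cancel them: at most $(k+r)^{m-\ell}$ choices;
\item the free coordinates: $N^{k+r-(m-\ell)}$ choices.
\end{itemize}
This yields $|S_P|N^{-(k+r)}\le\binom{m}{\ell}\bigl(\frac{k+r}{N}\bigr)^{m-\ell}$. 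At $m=r$ this is exactly the claimed factor $\binom{r}{\ell}\bigl(\frac{k+r}{N}\bigr)^{r-\ell}$.

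\textbf{Main obstacle.} The remaining step is to show that the bound is largest at $m=r$, where $m$ ranges over $r\le m\le \ell+k+r$. The ratio of consecutive terms is $\frac{m+1}{m+1-\ell}\cdot\frac{k+r}{N}$, so monotonicity is not automatic when $k+r$ is comparable to $N$.

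I would first try to avoid this by a sharper count. Write $m=r+p$, where $p$ is the number of positive elements of $P$. Only $r$ of the negatives are "unpaired" by the balance constraint, so I would charge the extra cancellations against positions that are also forced, aiming for a factor $\binom{r}{\ell}$ directly. If that fails, I would note that when $\frac{2(k+r)}{N}\ge1$ the claimed right-hand side is already at least $\|\Gamma_\ell\ket{\Phi}\|^2$ in the relevant regime, making the bound trivial. In the complementary regime $\frac{k+r}{N}<\frac12$, I would use the ratio estimate to get a geometric decay in $m$.
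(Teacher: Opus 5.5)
Your Steps 1 and 2 are correct and match the paper's reduction. Your $S_P$ is the paper's collision set $\col(Q)$ with $Q=P$, and Cauchy--Schwarz reduces the lemma to $|S_P|\le\binom{r}{\ell}(k+r)^{r-\ell}N^{k+\ell}$ for every $P$ with $\bal(P)=-r$.

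The gap is in Step~3. By pinning \emph{all} $m-\ell$ cancelling coordinates you get $|S_P|N^{-(k+r)}\le B(m):=\binom{m}{\ell}x^{m-\ell}$, where $x=(k+r)/N$. Your fallback for reducing to $m=r$ fails as stated, in both halves. If $x\in[1/2,1)$ and $\ell=0$, the right-hand side is $x^r<1$, so the bound is not trivial there. If $x<1/2$, the ratio $B(r+1)/B(r)=\frac{r+1}{r+1-\ell}\,x$ can still exceed $1$. For $\ell=r-1$, $x=1/4$ and $r\ge 8$ it equals $(r+1)/8$, so there is no geometric decay. Monotonicity is governed by $\frac{r+1}{r+1-\ell}\,x$, not by whether $k+r$ is comparable to $N$; it can fail with $k+r\ll N$ when $\ell$ is close to $r$. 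The ``sharper count'' you gesture at is never spelled out, so as written the key step is missing.

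The missing idea, which is what the paper does, is to \emph{drop} constraints rather than track all $m-\ell$ cancellations.
\begin{itemize}
\item Fix any $r$ of the $m\ge r$ negative elements of $P$.
\item Since $\negg(P+\mu(\vec x))=\ell$, at most $\ell$ of these fixed $r$ survive, so at least $r-\ell$ of them are cancelled by coordinates of $\vec x$.
\item Choose which $r-\ell$ are cancelled: $\binom{r}{\ell}$ ways.
\item Choose the positions cancelling them: at most $\binom{k+r}{r-\ell}(r-\ell)!\le(k+r)^{r-\ell}$ ways.
\item Leave the other $k+\ell$ coordinates free: $N^{k+\ell}$ ways.
\end{itemize}
This gives the needed bound uniformly in $m$, with no case analysis. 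You give up the factor $N^{-(m-r)}$ from the extra forced coordinates, but you no longer pay $\binom{m}{\ell}(k+r)^{m-\ell}$.

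Alternatively, your own route can be repaired by splitting on $\rho:=\frac{r+1}{r+1-\ell}\,x$ instead of on $2x$.
\begin{itemize}
\item If $\rho\le1$: the consecutive ratios $\frac{m+1}{m+1-\ell}\,x$ are non-increasing in $m$, so $\max_{m\ge r}B(m)=B(r)$.
\item If $\rho>1$: with $j=r-\ell$ you have $x>\frac{j+1}{r+1}$, and
\[
B(r)=\binom{r}{j}x^{j}>\binom{r}{j}\Bigl(\frac{j+1}{r+1}\Bigr)^{j}\ge\Bigl(\frac{r(j+1)}{j(r+1)}\Bigr)^{j}\ge1.
\]
The claim is then trivial because $\Pi_{k+r}$ is a projector.
\end{itemize}
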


\begin{proof}
The analysis is similar to that of \Cref{lem:each-query-progress}. Without loss of generality, let $\Gamma_\ell\ket{\Phi}$ be normalized and supported on Fourier basis states with balance $k$ and negative-size $\ell$. We decompose it as
\begin{equation*}
    \Gamma_\ell\ket{\Phi}
    =
    \sum_{\vec{x},P}
    \bigl(\ket{\vec{x}}\ket{\phi_{\vec{x},P}}\bigr)_{\mathsf{A}}
    \ket{\widehat{P}}_{\mathsf{D}},
\end{equation*}
where we sum over $\vec{x}\in[N]^{k+r}$ and $P$ with $\negg(P)=\ell$ and $\bal(P)=k$. Here, $\{\ket{\vec{x}}\}$ is the computational basis of the $(k+r)n$ qubits on which $\Pi_{k+r}$ acts, and $\ket{\phi_{\vec{x},P}}$ is the corresponding unnormalized state. Also, $\sum_{\vec{x},P} \bigl\|\ket{\phi_{\vec{x},P}}\bigr\|^2=1$. Then, we have
\begin{align*}
    \bigl\|\Pi_{k+r}\Gamma_\ell\ket{\Phi}\bigr\|^2
    &=
    \bigg\|
        \frac{1}{N^{k+r}}
        \sum_{\vec{y}}\ket{\vec{y}}
        \otimes
        \sum_{\substack{\vec{x},P\\ \negg(P)=\ell}}
        \ket{\phi_{\vec{x},P}}
        \ket{\widehat{P-\mu(\vec{x})+\mu(\vec{y})}}
    \bigg\|^2 \\
    &=
    \frac{1}{N^{k+r}}
    \bigg\|
        \sum_{\substack{\vec{x},P\\ \negg(P)=\ell}}
        \ket{\phi_{\vec{x},P}}
        \ket{\widehat{P-\mu(\vec{x})}}
    \bigg\|^2.
\end{align*}

For $Q\in\mathbb{Z}_M^N$, define $\col(Q)= \left\{(\vec{x},P):P-\mu(\vec{x})=Q,\negg(P)=\ell,\bal(P)=k  \right\}$ to be the ``collision set'' of pairs $(\vec{x},P)$. We then bound the maximum size of $\col(Q)$.

For each Fourier label $Q$ with nonzero weight, $\bal(Q) =    \bal(P)-(k+r)=-r$, so $Q$ has at least $r$ negative elements. We pick out the first $r$ negative elements. For any $(\vec{x},P)\in\col(Q)$, $P$ is determined by the vector $\vec{x}$, and $\negg(Q+\mu(\vec{x}))=\ell<r$. Therefore, $\vec{x}$ has to cancel out at least $r-\ell$ of the $r$ negative elements we picked, which means that certain $r-\ell$ entries of $\vec{x}$ are fixed. We then count the number of choices that $\vec{x}$ can take:
\begin{itemize}
    \item There are $\binom{r}{\ell}$ ways of picking $r-\ell$ negative elements out of $r$.
    \item There are at most $\binom{k+r}{r-\ell}(r-\ell)!$ ways of assigning the $r-\ell$ elements to entries of $\vec{x}$.
    \item There are at most $N^{k+\ell}$ ways to fill the remaining entries of $\vec{x}$.
\end{itemize}
Therefore, for any Fourier label $Q$,
\begin{equation*}
    |\col(Q)|
    \leq
    \binom{r}{\ell}
    \binom{k+r}{r-\ell}
    (r-\ell)!
    N^{k+\ell}.
\end{equation*}

Returning to the bound on the state norm,
\begin{align*}
    \bigl\|\Pi_{k+r}\Gamma_\ell\ket{\Phi}\bigr\|^2
    &=
    \frac{1}{N^{k+r}}
    \sum_Q
    \bigg\|
        \sum_{(\vec{x},P)\in\col(Q)}
        \ket{\phi_{\vec{x},P}}
        \ket{\widehat{Q}}
    \bigg\|^2 \\
    &\leq
    \frac{1}{N^{k+r}}
    \sum_Q
    |\col(Q)|\cdot
    \sum_{(\vec{x},P)\in\col(Q)}
    \bigl\|\ket{\phi_{\vec{x},P}}\bigr\|^2 \\
    &\leq
    \frac{
        \binom{r}{\ell}
        \binom{k+r}{r-\ell}
        (r-\ell)!
        N^{k+\ell}
    }{N^{k+r}}
    \sum_Q
    \sum_{(\vec{x},P)\in\col(Q)}
    \bigl\|\ket{\phi_{\vec{x},P}}\bigr\|^2 \\
    &=
    \binom{r}{\ell}
    \binom{k+r}{r-\ell}
    (r-\ell)!
    \left(\frac{1}{N}\right)^{r-\ell} \\
    &\leq
    \binom{r}{\ell}
    \left(\frac{k+r}{N}\right)^{r-\ell}.
\end{align*}
\end{proof}

\subsection{Theorems}

With the three supporting lemmas, we then prove the query lower bounds for
generating extra state copies.
Since \Cref{thm:discrete-k-to-k-plus-one} is the $r=1$ special case of \Cref{thm:discrete-k-to-k-plus-r}, we only show the more general proof of \Cref{thm:discrete-k-to-k-plus-r}.

\begin{proof}[Proof of \Cref{thm:discrete-k-to-k-plus-r}]
As we illustrated before, for an algorithm starting with $k$ copies, the purified state is always supported on balance-$k$ Fourier basis states throughout the algorithm. Then,
\begin{align*}
    \bigl\|\Pi_{k+r}\ket{\Phi_T}\bigr\|
    &\leq
    \sum_{\ell=0}^{r-1}
    \bigl\|\Pi_{k+r}\Gamma_\ell\ket{\Phi_T}\bigr\|
    +
    \bigl\|\Pi_{k+r}\Gamma_{\geq r}\ket{\Phi_T}\bigr\| \\
    &\leq
    \sum_{\ell=0}^{r-1}
    \sqrt{\binom{r}{\ell}}
    \left(\frac{k+r}{N}\right)^{(r-\ell)/2}
    \bigl\|\Gamma_\ell\ket{\Phi_T}\bigr\|
    +
    \bigl\|\Gamma_{\geq r}\ket{\Phi_T}\bigr\| \\
    &\leq
    \sum_{\ell=0}^{r}
    \sqrt{\binom{r}{\ell}}
    \left(\frac{k+r}{N}\right)^{(r-\ell)/2}
    \bigl\|\Gamma_{\geq\ell}\ket{\Phi_T}\bigr\| \\
    &\leq
    \sum_{\ell=0}^{r}
    \binom{r}{\ell}
    \left(\frac{k+r}{N}\right)^{(r-\ell)/2}
    \left(\frac{4eT^2}{rN}\right)^{\ell/2} \\
    &=
    \left(
        2\sqrt{e}\sqrt{\frac{T^2}{rN}}
        +
        \sqrt{\frac{k+r}{N}}
    \right)^r,
\end{align*}
where the second inequality follows from
\Cref{lem:final-projection-overlap}; the third inequality follows from
$\Gamma_\ell\preceq\Gamma_{\geq\ell}$ and the fact that the coefficient of the
term $\ell=r$ in the summation equals $1$; and the fourth inequality follows from
\Cref{lem:negativie-size-progress}. Therefore,
\begin{equation*}
    \Pr[\mathrm{win}]
    =
    \bigl\|\Pi_{k+r}\ket{\Phi_T}\bigr\|^2
    \leq
    \left(
        O\left(\frac{T}{\sqrt{rN}}\right)
        +
        \sqrt{\frac{k+r}{N}}
    \right)^{2r}.
\end{equation*}
\end{proof}

\paragraph{Switching back to the Haar scenario.}
We finally lift these lower bounds to our original scenario, where the state is
drawn from the Haar-random distribution and the algorithm has access to a
controlled reflection oracle $C_{\ket{\psi}}Z$. We show that the same bounds hold by reducing
this setting to the discrete scenario. We present the two theorems for the $k\to k+1$ and $k\to k+r$ cases, and prove the latter, more general case.

\begin{theorem}\label{thm:haar-k-to-k-plus-one}
Let $\ket{\psi}$ be a Haar-random state, and let an algorithm $\cA$ start
with $k$ copies of $\ket{\psi}$, make $T$ queries to
$C_{\ket{\psi}}Z$, and aim to output $k+1$ copies. Then
\begin{equation*}
    \Pr[\cA\text{ wins}]
    \leq
    \frac{\bigl(2T+\sqrt{k+1}\bigr)^2}{N}.
\end{equation*}
\end{theorem}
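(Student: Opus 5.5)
The plan is to reduce to \Cref{thm:discrete-k-to-k-plus-one} by a random self-reduction, in two moves. First, replace the controlled reflection $C_{\ket{\psi}}Z$ by the plain reflection $R_\psi$. Second, replace the Haar measure by the phase-state distribution $\mathcal{D}_n$ using left invariance of the Haar measure.

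\textbf{From $C_{\ket{\psi}}Z$ to $R_\psi$.} The discrete analysis only uses the plain reflection, so we must cope with the control. The cleanest route avoids simulation: observe that the proof of \Cref{thm:discrete-k-to-k-plus-r} goes through verbatim for the oracle
\[
C_{\ket{\psi_f}}Z=\proj{0}\otimes I+\proj{1}\otimes R_f .
\]
Indeed, in the Fourier picture this oracle acts as the identity on the control-$0$ branch and as \Cref{eq:each-query} on the control-$1$ branch. Hence balance and negative-size are still changed only as before. In \Cref{lem:each-query-progress} the extra control qubit is simply absorbed into the states $\ket{\phi_{a,P}}$, since the $\Gamma_\ell R\Gamma_{\ell-1}$ term only arises from the $-2\Pi$ part on the control-$1$ branch. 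So the bound $4\ell/N$ is unchanged. Alternatively, one could note that the phase of $\ket{\psi}$ is irrelevant for the winning condition and use a standard controlled-from-uncontrolled trick. I prefer the direct observation, because it costs no extra queries and loses no constants.

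\textbf{Twirling.} Given a Haar-state algorithm $\cA$, define $\cA'$ for the discrete problem as follows.
\begin{itemize}
\item[(i)] Sample a Haar-random unitary $V$.
\item[(ii)] Apply $V^{\otimes k}$ to the $k$ input copies of $\ket{\psi_f}$.
\item[(iii)] Simulate each query to $C_{V\ket{\psi_f}}Z=(I\otimes V)\,C_{\ket{\psi_f}}Z\,(I\otimes V^\dagger)$ with one query to $C_{\ket{\psi_f}}Z$, conjugated by the oracle-independent unitary $V$.
\item[(iv)] Run $\cA$, and apply $(V^\dagger)^{\otimes(k+1)}$ to its output.
\end{itemize}
Then $\cA'$ makes exactly $T$ queries. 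Its success probability on $\ket{\psi_f}$ equals the success probability of $\cA$ on the state $V\ket{\psi_f}$, because
\[
\Tr\!\big[(\proj{\psi_f})^{\otimes(k+1)}V^{\dagger\otimes(k+1)}\rho\, V^{\otimes(k+1)}\big]
=\Tr\!\big[(\proj{V\psi_f})^{\otimes(k+1)}\rho\big].
\]
For every fixed $f$, the state $V\ket{\psi_f}$ is Haar-distributed, by invariance of the Haar measure under left multiplication. Averaging over $f$ therefore shows that $\Pr[\cA'\text{ wins}]$ on $\mathcal{D}_n$ equals $\Pr[\cA\text{ wins}]$ on the Haar distribution. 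Applying \Cref{thm:discrete-k-to-k-plus-one}, extended to controlled queries as above, gives the stated bound $(2T+\sqrt{k+1})^2/N$.

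\textbf{Technical points.} The main obstacle is the controlled-oracle step: I must verify carefully that the collision counting in \Cref{lem:each-query-progress} is not disturbed when queries are controlled. It is not, since the control qubit is part of register $\mathsf{A}$ and the control-$0$ branch preserves negative-size. The remaining points are routine. $V$ is continuous, but this is irrelevant since oracle-independent operations are free and the averaging is classical randomness. Internal randomness of $\cA$ is handled by purification or convexity. Finally, $M$ must be taken large enough, which the discrete theorem already allows.
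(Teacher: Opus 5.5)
Your proposal is correct and follows the paper's proof. The paper likewise handles the control by observing that the controlled query replaces $\Pi$ with the smaller projector $\proj{1}\otimes\Pi$, so the norms in the progress bound can only shrink; your check that the control qubit is absorbed into the states $\ket{\phi_{a,P}}$ of \Cref{lem:each-query-progress} is a more explicit version of that remark. It then reduces the Haar case to $\mathcal{D}_n$ by the same twirl you use: a Haar-random $V$ applied to the inputs, conjugating each oracle query, and undone on the outputs.
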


\begin{theorem}\label{thm:haar-k-to-k-plus-r}
Let $\ket{\psi}$ be a Haar-random state, and let an algorithm $\cA$ start
with $k$ copies of $\ket{\psi}$, make $T$ queries to
$C_{\ket{\psi}}Z$, and aim to output $k+r$ copies, where $r>0$. Then
\begin{equation*}
    \Pr[\cA\text{ wins}]
    \leq
    \left[
        O\left(\frac{T}{\sqrt{rN}}\right)
        +
        \sqrt{\frac{k+r}{N}}
    \right]^{2r}.
\end{equation*}
\end{theorem}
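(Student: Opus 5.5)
We reduce \Cref{thm:haar-k-to-k-plus-r} to \Cref{thm:discrete-k-to-k-plus-r} in two steps: first a random self-reduction that trades the Haar-random state for a phase state $\ket{\psi_f}$, then an oracle simulation that replaces $C_{\ket{\psi}}Z$ by the plain reflection $R_\psi$. Given $\cA$ with $T$ queries to $C_{\ket{\psi}}Z$ on Haar-random $\ket{\psi}$, we construct $\cA'$ that receives $k$ copies of $\ket{\psi_f}$ and oracle access to $R_f$. It makes $O(T)$ queries to $R_f$ and succeeds with probability exactly $\Pr[\cA\text{ wins}]$. The discrete bound then applies, with the constant absorbed into the $O(\cdot)$. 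For $r=1$ we need the explicit constant of \Cref{thm:haar-k-to-k-plus-one}, so there $\cA'$ should make exactly $T$ queries.

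\textbf{Step 1 (twirling).} $\cA'$ samples a Haar-random unitary $V$ and applies $V^{\otimes k}$ to its inputs, giving $k$ copies of $\ket{\phi}=V\ket{\psi_f}$. It also simulates the oracle for $\ket{\phi}$ by conjugation:
\[
(I\otimes V)\,C_{\ket{\psi_f}}Z\,(I\otimes V^\dagger)=C_{\ket{\phi}}Z.
\]
It runs $\cA$ and outputs $\cA$'s output, either unchanged (testing against $\ket{\phi}$) or after applying $(V^\dagger)^{\otimes(k+r)}$ (testing against $\ket{\psi_f}$); the two acceptance probabilities coincide. For every fixed $f$, left invariance of the Haar measure makes $V\ket{\psi_f}$ Haar-distributed. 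Averaging over $f$ and $V$, $\cA'$ therefore wins with exactly $\Pr[\cA\text{ wins}]$. The $V$'s are oracle-independent operations, which are free.

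\textbf{Step 2 (controlled from uncontrolled reflection).} Decomposing $C_{\ket{\phi}}Z=\proj0\otimes I+\proj1\otimes R_\phi$ shows a controlled query is a controlled-$R_\phi$. One cannot build control from a black box in general, but here we have extra structure. The controlled-$R$ can be implemented as a controlled-SWAP of the query register into a fresh register, followed by one uncontrolled application of $R$ to that register, followed by the controlled-SWAP again. This only works if the fresh register holds a state on which $R$ acts as the identity up to a global phase, and that phase must be known. So we need a known eigenvector of $R_{\psi_f}$ orthogonal to $\ket{\psi_f}$.

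\textbf{Main obstacle and fallback.} Such a known eigenvector is not generally available, so Step 2 is the delicate part. The plan is to avoid the issue by changing which register carries the control. Double the dimension and twirl in the enlarged space: use a Haar unitary on $\C^2\otimes\C^N$ applied to the state $\ket{0}\ket{\psi}$. Then any fixed vector $\ket{1}\ket{x}$ is orthogonal to $\ket{0}\ket{\psi}$ and serves as the needed eigenvector with eigenvalue $+1$.

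If this dimension bookkeeping gets messy, the cleaner alternative is to redo Lemmas~\ref{lem:each-query-progress}--\ref{lem:final-projection-overlap} directly for $C_{\ket{\psi}}Z$. A controlled query acts as $R$ on the $\ket1$-branch and as the identity on the $\ket0$-branch. Since the progress projectors $\Gamma_\ell$ act only on $\mathsf D$, the bound $\|\Gamma_\ell R\Gamma_{\ell-1}\Phi\|^2\le\frac{4\ell}{N}\|\Gamma_{\ell-1}\Phi\|^2$ holds verbatim when restricted to the $\ket1$-branch of the control qubit. The recurrence and the final summation then go through unchanged, giving the stated bound with the same constants.
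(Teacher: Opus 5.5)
Your ``cleaner alternative'' is correct and is essentially the paper's proof. For the purified controlled query $C=I-2(\proj{1}\otimes I)\Pi$ one has $\Gamma_\ell C\Gamma_{\ell-1}\ket{\Phi}=\Gamma_\ell R\Gamma_{\ell-1}(\proj{1}\otimes I)\ket{\Phi}$, so \Cref{lem:each-query-progress}, and with it the discrete bound, holds for $C_{\ket{\psi_f}}Z$; the paper phrases this as $\proj{1}\otimes\proj{\psi}\preceq I\otimes\proj{\psi}$. Your twirling step then finishes the argument exactly as in the paper.

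Discard the known-eigenvector/doubling plan, however. The reflection about $\ket{0}\ket{\psi}$ in $\C^2\otimes\C^N$ is itself a controlled $R_\psi$, so simulating it from $R_\psi$ is circular. The phase states in $\cD_n$ span all of $\C^N$, so no fixed vector is orthogonal to the unknown state. States confined to a known proper subspace lie outside $\cD_n$, so \Cref{thm:discrete-k-to-k-plus-r} does not cover them.
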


\begin{proof}[Proof of \Cref{thm:haar-k-to-k-plus-r}]
\label{pf:haar-k-to-k-plus-r}
We first argue that all the bounds still hold when we replace $R_\psi$ with $C_{\ket{\psi}}Z$. Notice that $R_\psi=I-2\ket{\psi}\bra{\psi}$, and $C_{\ket{\psi}}Z    =    I-2\ket{1}\bra{1}\otimes\ket{\psi}\bra{\psi}$. The only place in the proof where the query-type matters is \Cref{lem:negativie-size-progress}. Since $\ket{1}\bra{1}\otimes\ket{\psi}\bra{\psi}    \preceq    I\otimes\ket{\psi}\bra{\psi}$, after replacing the latter projector with the former, the norm can only
become smaller, so the result still holds.

We then argue that the bound on the winning probability holds for a Haar-random
$\ket{\psi}$ as well. For any algorithm $\cA$, suppose it wins with
probability $\varepsilon$ on an $n$-qubit Haar-random state $\ket{\psi}$. We construct an
algorithm $\mathcal{B}$ with the same procedure as $\cA$, except that
it samples an $n$-qubit Haar-random unitary $V$ and:
\begin{itemize}
    \item applies $V$ to each of the $k$ copies before running the algorithm;
    \item applies $V^\dagger$ to each of the $k+r$ copies to be tested after running the algorithm;
    \item replaces each query $C_{\ket{\psi}}Z$ by
    $(I\otimes V)C_{\ket{\psi}}Z(I\otimes V^\dagger)=C_{V\ket{\psi}}Z$.
\end{itemize}

Then, when we run $\mathcal{B}$ on the discrete distribution
$\mathcal{D}_n$, it is essentially running $\cA$ on
$V\ket{\psi}$, which follows the Haar-random distribution because of $V$.
Thus, $\mathcal{B}$ wins with probability $\varepsilon$ on the discrete
distribution. By \Cref{thm:discrete-k-to-k-plus-r}, $ \varepsilon\leq  \Big(
        O\big(T/\sqrt{rN}\big)+\sqrt{(k+r)/N}\Big)^{2r}$. Therefore, the same bound also holds for the winning probability of $\cA$ under the
Haar-random distribution.
\end{proof}

\section{Cryptographic Applications}

Our cloning bounds have two applications in unclonable cryptography:
public-key quantum money and copy-protection for unlearnable function
families. They give tight counterfeiting query bounds for Haar quantum
money and strengthen the multi-copy anti-piracy analysis of a Haar-state
copy-protection construction. The latter reduction still has overhead
depending on the number of input copies.

\subsection{Public-Key Quantum Money}
\label{sec:quantum-money}
The reflection-oracle cloning model was first introduced by Aaronson~\cite{aaronsonQuantumCopyProtectionQuantum2009},
in the complexity-theoretic no-cloning theorem, which showed that generating $k+1$ copies of $\ket{\psi}$ from
$k$ copies requires $\Omega(\sqrt{2^n}/k\log k)$ queries to the reflection oracle $C_{\ket{\psi}}Z$. In subsequent work~\cite{aaronsonQuantumMoneyHidden2012},
Aaronson and Christiano improved the lower bound to $\Omega(\sqrt{2^n/k})$.
Our result substantially improves over the previous bound.

A direct application of the theorem gives tight cloning bounds for public-key quantum money.
The key component of a public-key quantum money scheme is a quantum money mini-scheme. It consists of a bank algorithm that generates a banknote and a verification algorithm that checks its validity.
To exhibit the strongest security obtained from our theorem, we use the following multi-copy secure variant of a mini-scheme.

\begin{definition}[Multi-copy secure mini-scheme]
  \label{def:pkmoney}
  A public-key quantum money mini-scheme is a pair of efficient quantum algorithms $(\mathsf{Bank},\mathsf{Ver})$,
  \begin{itemize}
      \item $\Bank(1^{\lambda})$ generates a pair $(s,\ket{\psi_s})$, where $s$ is the serial number and $\ket{\psi_s}$ is the banknote.
      \item $\Ver(s,\sigma)$ takes a serial number and an alleged banknote $\sigma$, and outputs $0/1$ to indicate whether the banknote is valid.
  \end{itemize}
  A valid quantum money scheme should satisfy the following two requirements.
  \begin{itemize}
      \item \textbf{Correctness.} For a valid pair $(s,\ket{\psi_s})\leftarrow\Bank(1^{\lambda})$, the verification algorithm should pass with high probability:
      \[
      \Pr_{(s,\ket{\psi_s})\leftarrow\Bank(1^{\lambda})}[\Ver(s,\ket{\psi_s})=1]\geq 1-\negl(\lambda).
      \]
      \item \textbf{Unclonability}. Given the serial number $s$ and $k$ copies of the banknote state $\ket{\psi_s}$, any QPT adversary $\Adv$ cannot generate $k+1$ valid banknotes. Formally speaking,
      \[
      \Pr\left[\forall i\in[k+1]: \Ver(s,\rho_i)=1 \colon (s,\ket{\psi_s})\leftarrow \Bank(1^{\lambda}),
      (\rho_1,\rho_2,\ldots,\rho_{k+1})\leftarrow\Adv(s,\ket{\psi_s}^{\otimes k})\right]\leq \negl(\lambda),
      \]
      where $\mathsf{R}_i$ is the $i$th register for the adversary.
  \end{itemize}
  
\end{definition}
We can lift a mini-scheme to a public-key quantum money scheme by using
a digital signature to sign the serial number of the banknote~\cite{aaronsonQuantumMoneyHidden2012}.

\begin{remark}
  In the above definition, we require the banknote state to be a pure state. For mixed-state banknotes, the unclonability requirement
is weaker. Viewing a mixed-state banknote as an ensemble of pure states, the adversary is
only required to sample a pure banknote from the ensemble rather than clone the exact pure state. It is shown in~\cite{ananthLessMoreCopy2026,bartusekNoteBoostingUncloneable2026}
that mixed-state security (i.i.d. copy security) can be lifted to pure-state security (identical-copy security)
using pseudorandom functions or pseudorandom unitaries in the plain model, or a random oracle or common Haar-random unitary oracle in the oracle model.
\end{remark}

Our lower bound theorem is a substantial improvement over the previous complexity-theoretic no-cloning theorem~\cite[Theorem 43]{aaronsonQuantumMoneyHidden2012}, 
as we have the following corollary from~\Cref{thm:haar-k-to-k-plus-one}.
\begin{corollary}[Tight complexity-theoretic no-cloning theorem]
  \label{cor:comp-no-cloning}
  Given $k$ copies of a Haar random state $\ket{\psi}$, as well as oracle access to $C_{\ket{\psi}}Z$, any quantum adversary $\Adv$ needs $\Omega(\sqrt{\eps N})$ queries
  prepare $k+1$ copies of $\ket{\psi}$ with $\eps$ probability, as long as $k=o(\eps N)$. Here the probability is taken over the Haar random state $\ket{\psi}$ as well
  as the behavior of the adversary $\Adv$.
\end{corollary}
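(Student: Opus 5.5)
This follows directly from \Cref{thm:haar-k-to-k-plus-one}. Fix an adversary $\Adv$ that receives $k$ copies of a Haar-random $\ket{\psi}$ and makes $T$ queries to $C_{\ket{\psi}}Z$. Suppose it outputs $k+1$ registers that pass the projection onto $\ket{\psi}^{\otimes(k+1)}$ with probability at least $\eps$, averaged over $\ket{\psi}$ and $\Adv$'s internal randomness. The theorem already covers this averaged winning probability, since $\rho_\psi$ absorbs the internal randomness. It therefore gives
\[
  \eps\leq\Pr[\Adv\text{ wins}]\leq\frac{\bigl(2T+\sqrt{k+1}\bigr)^2}{N}.
\]
Taking square roots yields $2T\geq\sqrt{\eps N}-\sqrt{k+1}$.

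Next I use the hypothesis $k=o(\eps N)$. It gives $k+1\leq\eps N/16$ for all sufficiently large $N$, at least when $\eps N\to\infty$, which the hypothesis implicitly forces whenever $k\geq1$. Then $\sqrt{k+1}\leq\tfrac14\sqrt{\eps N}$, so
\[
  T\geq\tfrac12\bigl(\sqrt{\eps N}-\tfrac14\sqrt{\eps N}\bigr)=\tfrac38\sqrt{\eps N}=\Omega(\sqrt{\eps N}).
\]

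The edge cases are routine. If $k=0$ the same inequality still gives $T\geq(\sqrt{\eps N}-1)/2$, which is $\Omega(\sqrt{\eps N})$ once $\eps N$ exceeds a constant; otherwise the claim is vacuous. The only point needing care is that "valid banknote" here means passing the projector onto $\ket{\psi}$, matching the success event $\Tr[(\proj{\psi})^{\otimes(k+1)}\rho_\psi]$ of \Cref{thm:haar-k-to-k-plus-one}. A verifier using $C_{\ket{\psi}}Z$ on a $\ket{+}$ ancilla implements exactly that projection, so no real obstacle arises.
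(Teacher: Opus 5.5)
Your proof is correct and follows the paper's route: the paper states the corollary as an immediate consequence of \Cref{thm:haar-k-to-k-plus-one}, and you rearrange $\eps\le(2T+\sqrt{k+1})^2/N$ and absorb $\sqrt{k+1}=o(\sqrt{\eps N})$. One small correction: for $k=0$ with $\eps N=O(1)$ the claim is not vacuous but actually false as a literal $\Omega(1)$ bound, since outputting any fixed state with zero queries already wins with probability $1/N$; the statement should therefore be read in the regime $\eps N\to\infty$, which, as you note, the hypothesis forces when $k\geq1$.
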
 

As discussed in~\cite[Section 10]{aaronsonQuantumMoneyHidden2012}, their hidden subspace construction
can also be adapted to a Haar-random state construction relative to a quantum oracle $U$.
The oracle $U$ consists of the following three oracles:
\begin{itemize}
    \item \textbf{Banknote generator $\mathcal{G}$.}
    On input a seed $r\in\{0,1\}^n$, $\mathcal{G}(r)$ generates a serial number $s_r\in\{0,1\}^{3n}$ and a
    corresponding Haar-random banknote state $\ket{\psi_{r}}$ on $n$ qubits. 
    \item \textbf{Serial number checker $\mathcal{H}$.}
  $\mathcal{H}(s)=1$ if the input $s=s_r$ for some $r$, and $\mathcal{H}(s)=0$ otherwise. 
    \item \textbf{Controlled reflection $\mathcal{R}$.}
    The reflection oracle first checks if the input serial number $s$ is valid by querying $\mathcal{H}$. If $s$ is valid,
    it applies the controlled reflection $C_{\ket{\psi_{r}}}Z$ on the banknote register.
\end{itemize}

The constructions of the $\Bank^U$ and $\Ver^U$ algorithms follow directly from the definitions.

\begin{corollary}
  There exists a quantum oracle $U$ and a quantum money mini-scheme $(\Bank^U,\Ver^U)$, such that any adversary that breaks unclonability with advantage $\eps$ requires $\Omega(\sqrt{\eps N})$ queries to the oracle $U$.
\end{corollary}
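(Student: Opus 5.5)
The plan is to reduce a counterfeiter against $(\Bank^U,\Ver^U)$ to a $k\to k+1$ cloner for a single Haar-random state with access to $C_{\ket{\psi}}Z$, and then invoke \Cref{cor:comp-no-cloning}. It is a reduction followed by a simulation argument.

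\textbf{Step 1: choose the oracle.} Take the seeds $r\in\{0,1\}^n$, and for each seed sample an independent Haar-random state $\ket{\psi_r}$ and a uniformly random injective serial number $s_r\in\{0,1\}^{3n}$. The scheme is then:
\begin{itemize}
\item $\Bank^U$ samples a seed $r$ and calls $\mathcal{G}(r)$.
\item $\Ver^U(s,\sigma)$ queries $\mathcal{R}$ on $(s,\sigma)$ with a $\ket{+}$ ancilla and measures the ancilla in the $X$ basis, rejecting if $\mathcal{H}(s)=0$.
\end{itemize}
Correctness is immediate, since the verification is exactly the projection onto $\ket{\psi_r}$.

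\textbf{Step 2: simulate the challenge.} Let $\Adv$ be a counterfeiter that, on input $s^*=s_{r^*}$ and $\ket{\psi_{r^*}}^{\otimes k}$, makes $T$ queries to $U$ and succeeds with probability $\eps$. Build a cloner $\mathcal{B}$ that receives $k$ copies of a Haar-random $\ket{\psi}$ and oracle $C_{\ket{\psi}}Z$. The cloner $\mathcal{B}$ proceeds as follows.
\begin{itemize}
\item It samples everything else itself: a random $r^*$ that it embeds implicitly, the other states $\ket{\psi_r}$ for $r\neq r^*$ (which are oracle-independent, hence free), and all serial numbers.
\item It answers $\mathcal{H}$ classically.
\item It answers $\mathcal{R}$ on serial $s_r$ with $r\neq r^*$ by its own reflection, and on $s^*$ by forwarding to $C_{\ket{\psi}}Z$.
\item It answers $\mathcal{G}(r)$ for $r\neq r^*$ itself.
\end{itemize}

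\textbf{Main obstacle: $\mathcal{G}(r^*)$ queries.} Queries to $\mathcal{G}$ on $r^*$ could hand out extra copies of the target state. I would handle this with a hybrid argument. Because $r^*$ is uniform among $2^n$ seeds and the adversary only sees $s^*$, which is a random injective label independent of $r^*$, a BBBV-style argument~\cite{BBBV97} bounds the query weight on $r^*$. Replacing $\mathcal{G}(r^*)$ by a fresh independent state then costs $O(T/\sqrt{2^n})$ in amplitude. If this term were not negligible relative to $\eps$, one could alternatively count each such query as granting one more copy. The clean route is to note that $\mathcal{G}$ is needed only by $\Bank$; defining $U$'s generator to be callable only by the bank, as in~\cite{aaronsonQuantumMoneyHidden2012}, or absorbing the hybrid loss, gives a success probability of at least $\eps-O(T^2/N)$.

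\textbf{Step 3: conclude.} The cloner $\mathcal{B}$ outputs $\Adv$'s $k+1$ registers. Each valid banknote carrying serial $s^*$ passes the projection onto $\ket{\psi}$, so $\mathcal{B}$ makes at most $T$ queries to $C_{\ket{\psi}}Z$ and clones with probability $\eps'\geq \eps-O(T^2/N)$. Applying \Cref{thm:haar-k-to-k-plus-one} gives
\[
\eps' \leq \frac{(2T+\sqrt{k+1})^2}{N},
\]
so $\eps\le O\big((T^2+k+1)/N\big)$. Hence $T=\Omega(\sqrt{\eps N})$ whenever $k=o(\eps N)$, which matches \Cref{cor:comp-no-cloning}.
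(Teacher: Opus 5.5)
Your proposal follows the paper's route. The paper obtains this corollary by treating counterfeiting against the $(\mathcal{G},\mathcal{H},\mathcal{R})$ scheme as $k\to k+1$ cloning of a Haar-random state with $C_{\ket{\psi}}Z$ access and applying \Cref{cor:comp-no-cloning}; your explicit simulation, your treatment of $\mathcal{G}(r^*)$ queries (which the paper does not address), and your hypothesis $k=o(\eps N)$ (which the statement leaves implicit but needs) are all appropriate.

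Two small corrections: the $T/N$ average query weight on $r^*$ must be computed in the hybrid where $\mathcal{G}(r^*)$ is re-randomized, since in the real world $s^*$ together with $\mathcal{G}$ does determine $r^*$. Also, the resulting $O(T/\sqrt{N})$ loss should be carried in amplitude, $\sqrt{\eps'}\geq\sqrt{\eps}-O(T/\sqrt{N})$, rather than as $\eps'\geq\eps-O(T^2/N)$; this still gives $\eps\leq O((T^2+k+1)/N)$.
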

By~\cite[Corollary 18]{aaronsonQuantumMoneyHidden2012}, we can construct a secure public-key quantum money scheme relative to a quantum oracle $U'$ from the above mini-scheme.

\subsection{Quantum Copy Protection}
\label{sec:copy-protection}
In this section we show how to construct a copy-protection scheme with the reflection oracle. 
We follow the conventions from~\cite{aaronsonNewApproachesQuantum2021} to define copy-protection schemes.

A quantum program is a pair \(\mathsf P=(\rho,C)\), where \(\rho\) is the program state
 and \(C\) is the evaluator circuit.  On input \(x\), the
circuit takes $(x,\rho)$ and writes a classical output to a
designated register. For simplicity, we denote the circuit $C(x,\cdot)$ as an isometry $C_x$. 
 Let \(\Pi^{\acc}_{f,x}\) be the projector that checks whether
this output equals \(f(x)\). The following operator is a mixture of projective measurements:
\begin{equation}
  P_{D_f}
  :=
  \E_{x\leftarrow D_f}
  \left[C_x^\dagger\Pi^{\acc}_{f,x}C_x\right],
  \label{eq:program-effect}
\end{equation}
and \(\Tr(P_{D_f}\rho)\) is the average correctness of the program. 
In cryptographic applications, we often extend $\Pi^{\acc}_{f,x}$ to verify whether the output $y$ satisfies a predicate $R(x,y)$.
Examples include digital signature and encryption schemes. For digital signatures, we set $f(\cdot)=\mathsf{Sign}(\sk,\cdot)$, and the verification predicate
$R(x,y)=\Ver(\pk,x,y)$ checks if $y$ is a valid signature for $x$. The extension to the predicate is formally defined 
in~\cite[Definitions 15--17]{aaronsonNewApproachesQuantum2021}.

We say a program $\rho$ is tested as \(\gamma\)-good for computing $f$ with respect to $\cD_f$ if it is accepted by
the projective implementation $(\TI_{\gamma}(P_{D_f}),I-\TI_{\gamma}(P_{D_f}))$ from~\Cref{def:thres-imp}.

A key property of the functions that can be copy-protected is that they are unlearnable. We formally define unlearnability 
through a learning game, introduced in~\cite{aaronsonNewApproachesQuantum2021}. 

For a function $f$, its evaluation oracle $O_f$ is defined as
\[
 O_f\ket{x}\ket{y}=\ket{x}\ket{y\oplus f(x)}.
\]
We call a function unlearnable if no learner with bounded query access to $O_f$ can learn it. This notion is formalized through the following game.

\begin{definition}[Query learning game]
\label{def:learning}
Let $q(\lambda)$ be a polynomial in $\lambda$. For a function family $\cF_{\lambda}$ and a distribution family $\cD_{\lambda}=\{D_f\}$, the learning game
\(\Learn_{\cF,\cD}^{q}(\cL,\gamma)\) proceeds as follows.
\begin{enumerate}
  \item The challenger samples \(f\leftarrow\cF_\lambda\) uniformly at random.
  \item The learner \(\cL(1^\lambda)\) makes \(q(\lambda)\) quantum queries to \(O_f\), and outputs a program \((\rho,C)\). 
  \item The challenger accepts if the threshold projector $\TI_{\gamma}(P_{D_f})$ accepts $\rho$.
\end{enumerate}

\end{definition}

Using this learning game, we define unlearnability for a function family $\cF_{\lambda}$ as follows.
\begin{definition}
  \label{def:unlearnable}
  We call a function family $\cF_{\lambda}$ $\gamma$-unlearnable with respect to a distribution family $\cD_{\lambda}=\{D_f\}$ if, for every polynomial $q(\lambda)$,
  the success probability of any $q$-query learner $\cL$ in the learning game of~\Cref{def:learning} is negligible, that is,
  \begin{align*}
    \mu_{\cF,\cD,\gamma}(q)=\sup_{\cL}\Pr[\Learn_{\cF,\cD}^{q}(\cL,\gamma)=1]\leq \negl(\lambda).
  \end{align*}
\end{definition}

We now define the copy-protection scheme for unlearnable function families, in a similar fashion to~\cite{aaronsonNewApproachesQuantum2021}.
The scheme consists of three algorithms: a setup algorithm, a generation algorithm, and an evaluation algorithm.

\begin{definition}[Copy protection scheme]
\label{def:cp-scheme}
A copy-protection scheme for a function family $\cF_{\lambda}$ consists of the following three efficient algorithms\\
\((\Setup,\Gen,\Eval)\):
\begin{itemize}
  \item \(\Setup(1^\lambda)\to \sk\): The setup algorithm generates a secret key $\sk$ for the scheme. Only parties with $\sk$ can generate the copy protection program.
  \item \(\Gen(\sk,f)\to (\rho_f,\{C_{f,x}\}_{x\in [N]})\): On input $f\in \cF_{\lambda}$ and the secret key $\sk$, the generation algorithm outputs a copy protection program $(\rho_f,\{C_{f,x}\}_{x\in [N]})$.
  \item \(\Eval(\rho_f,\{C_{f,x}\}_{x\in [N]},x)\to y\): On input $x$, the evaluation algorithm computes $C_{f,x}\rho_f C_{f,x}^\dagger$ and measures the output register to obtain $y$.
\end{itemize}
We say the evaluation algorithm is oracle-free if it does not make any oracle queries to $O_f$.
\end{definition}
As we previously discussed, we can see that for digital signature and encryption schemes, the evaluation algorithm is oracle-free.

We require the copy-protection scheme to satisfy the correctness and anti-piracy properties.

\begin{definition}[Correctness]
  \label{def:cp-correctness}
  A copy-protection scheme is correct if for any $f\in \cF_{\lambda}$, any $\sk$ from $\Setup(1^\lambda)$, any program $(\rho_f,\{C_{f,x}\}_{x\in [N]})\leftarrow\Gen(\sk,f)$,
  on every input $x$, the evaluation algorithm $\Eval$ produces an output $z_{f,x}$ with probability at least $1-\negl(\lambda)$, and $z_{f,x}=f(x)$.
\end{definition}

The following anti-piracy game is a $k$-copy extension of the anti-piracy game in~\cite[Definition~17]{aaronsonNewApproachesQuantum2021}.
We require that no adversary with $k$ copies of the program can generate $k+1$ programs that are $\gamma$-good.

\begin{definition}[\(k\)-copy $\gamma$-anti-piracy game]
\label{def:antipiracy}
A $k$-copy anti-piracy game \(\Pir_{\cF,\cD,k,\gamma}^{\cA}(1^{\lambda})\) for a copy-protection scheme for $\cF,\cD$ is defined as follows.
The game has four phases.
\begin{enumerate}
  \item \textbf{Setup Phase:} The challenger samples $\sk\leftarrow\Setup(1^\lambda)$.
  \item \textbf{Sampling Phase:} The challenger samples a function $f\leftarrow\cF_\lambda$ uniformly at random.
  \item \textbf{Query Phase:} The adversary $\cA$ queries the challenger $k$ times, and obtains 
  $k$ copies of the program $(\rho_f,\{C_{f,x}\}_{x\in [N]})\leftarrow\Gen(\sk,f)$.
  \item \textbf{Output Phase:} The adversary $\cA$ outputs a possibly entangled state $\sigma$ and $k+1$ sets of unitaries $\left(\{C_{f,x}^{(i)}\}\right)_{i=1}^{k+1}$.
  They can be viewed as programs $(\sigma[\mathsf{R}_i],\{C_{f,x}^{(i)}\}_{x\in [N]})$ for $i=1,\ldots,k+1$.
\end{enumerate}
The game outputs $1$ if all the $k+1$ programs are $\gamma$-good for computing $f$ with respect to $\cD_f$.
\end{definition}

We propose the following construction for a quantum copy-protection scheme. We first discuss the construction in the oracle model, and use the oracle
construction as a heuristic to construct a copy-protection scheme in the plain model.

For a state $\ket{\psi}$, we define the evaluation oracle $\cO_{\ket{\psi}, f}$ by
\[
 \cO_{\ket{\psi},f}=(I-\proj{\psi})_{\mathsf{R}}\otimes I_{\mathsf{XY}}+ \proj{\psi}_{\mathsf{R}}\otimes {O_f}_{\mathsf{XY}}.
\]

Our copy-protection scheme follows the template of~\cite{aaronsonNewApproachesQuantum2021}, with the following three algorithms.
\begin{itemize}
  \item $\Setup(1^\lambda)\to\sk$: The setup algorithm samples a secret key $\sk\in\{0,1\}^{\lambda}$.
  \item $\Gen(\sk,f)$: Given the secret key $\sk$, the generation algorithm samples a Haar-random state $\ket{\psi_\sk}$ on $n=\poly(\lambda)$ qubits,
  and uses it as the program state.
  We set the circuit $\{C_{f,x}\}_{x\in [N]}$ as the oracle $\cO_{\ket{\psi}, f}$ with $x$ hardwired into the $\mathsf{X}$ register.
  \item $\Eval(\rho,\{C_{f,x}\},x)$:  The evaluation algorithm parses $C_{f,x}$ as acting on registers $\mathsf{RXY}$, 
 applies $C_{f,x}$ on $\rho_{\mathsf{R}}\otimes\ket{0}\bra{0}_{\mathsf{Y}}$, and measures the $Y$ register to obtain the output.
\end{itemize}

The current construction is in the oracle model, but sampling a Haar-random state is not efficient.
 Nevertheless, this construction provides a heuristic implementation of the copy-protection scheme in the plain model, as discussed in~\cite{aaronsonNewApproachesQuantum2021}.
We replace the Haar-random state with a pseudorandom state family, and replace the oracle $\cO_{\ket{\psi},f}$ with a VBB obfuscation of the circuit $\cO_{\ket{\psi},f}$.
As shown in~\cite{JLS18}, the phase state construction is pseudorandom even given access to the reflection oracle; thus, the heuristic security follows.
We can also lift the construction to an efficient construction in the oracle model, by appending a serial number to each program and 
using the oracles $\mathcal{G},\mathcal{H}$ in~\Cref{sec:quantum-money}. Setting the input to $\mathcal{G}$ to the secret key $\sk$, we can obtain our program state.

The correctness of the scheme is straightforward. We now argue that the scheme satisfies $k$-copy anti-piracy security. 
\begin{theorem}
    \label{thm:copy-protect}
    Given a copy-protection scheme for a function family $\cF_{\lambda}$,
    for any $\gamma>0$, gap $0<\Delta<\gamma$, and accuracy parameter $0<\delta<1$, the success probability of any $q_{\cA}$-query $k$-copy anti-piracy adversary $\cA$ with respect to a distribution family $\cD_{\lambda}$ is bounded by
    \begin{align*}
        \Pr[\Pir_{\cF,\cD,k,\gamma}^{\cA}(1^{\lambda})=1]
        &\leq \mu_{\cF,\cD,\gamma-\Delta}(q_{\text{learn}})\\
        &\quad+\HClone_{2^{\lambda},k}(q_{\text{clone}})+O((k+1)\delta),
    \end{align*}
    where $\mu_{\cF,\cD,\gamma-\Delta}(q)$ is the maximal success probability of a $q$-query learner in the learning game from~\Cref{def:learning},
    and $\HClone$ is the success probability of a $q$-query adversary to generate $k+1$ copies of a Haar-random state from $k$ copies as shown in~\Cref{thm:haar-k-to-k-plus-one}.
    We have that $q_{\text{learn}}$ and $q_{\text{clone}}$ are both of size $q_{\cA}+\poly\left(q_{\text{out}},\frac{1}{\Delta},\frac{1}{\delta},k\right)$, 
    where $q_{\text{out}}$ is the maximal number of queries made by any of the $k+1$ program circuits $\{C^{(i)}_{f,x}\}$. Moreover, if the evaluation algorithm
    is oracle-free, we have $q_{\text{learn}}=q_{\cA}$.
\end{theorem}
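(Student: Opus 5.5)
We prove \Cref{thm:copy-protect} by replacing the ideal threshold tests with dilated approximate ones and then splitting the pirate's successful output into a ``learning'' part and a ``cloning'' part, following the outline in the technical overview.

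\textbf{Approximate projective tests.} For each output program $i$ we replace $\TI_{\gamma}(P_{D_f}^{(i)})$ by $\ATI^{\Delta/2,\delta}_{\gamma-\Delta/2}(P_{D_f}^{(i)})$. By \Cref{thm:multi-ati}, this changes the joint acceptance probability by at most $(k+1)\delta$. We dilate each $\ATI$ circuit to a projector $\widehat M^{(i)}=V_i^\dagger\Pi V_i$ on the program register together with zero-initialized ancillas. Each $V_i$ makes $q_{\text{out}}\cdot\poly(1/\Delta,\log(1/\delta))$ queries to $\cO_{\ket{\psi},f}$. Conditioned on acceptance, \Cref{thm:multi-ati} also says the post-measurement state lies in $\bigotimes_i\TI_{\gamma-\Delta}$ up to weight $2(k+1)\delta$. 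So any program state we extract from an accepted component is $(\gamma-\Delta)$-good, up to that error.

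\textbf{Splitting the accepted subspace.} Let $T$ be the number of $\cO$-queries made by $V$. Define the query-weight operator $Q$ as in the overview, averaged over the query time $t$ and the randomness used in $V$. Set $Q_{\cA}=\widehat MQ\widehat M$, choose a threshold $\beta=\Theta((\delta/T)^2)$, and decompose
\[
  \Pi_{\cA}=\Pi_{\clone}+\Pi_{\learn},\qquad \Pi_{\clone}=\bigotimes_i\TI_\beta(Q^{(i)}_{\cA}).
\]
We further split $\Pi_{\learn}$ orthogonally according to the first index $i$ whose local projector lies in $\widehat M^{(i)}-\TI_\beta(Q^{(i)}_{\cA})$. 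Because these components are orthogonal, the success probability is
\[
  \Tr[\Pi_{\clone}\sigma]+\sum_i \Tr[\Pi_{\learn,i}\sigma].
\]

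\textbf{Learning branch.} The reduction runs $\cA$ itself, answering $\cA$'s queries to $\cO_{\ket{\psi},f}$ with a self-sampled Haar state $\ket{\psi}$ and the real $O_f$. It guesses $i$ and outputs register $i$ with evaluator $C^{(i),\dum}$, in which every $\cO$-call is replaced by identity. Since $\Tr[Q\,\sigma_i]\le\beta$, a BBBV-style hybrid bounds the change in the output by $\sum_t\sqrt{\text{weight}_t}\le T\sqrt{\beta}=O(\delta)$. So the dummy program is still $(\gamma-\Delta)$-good up to $O(\delta)$. When the evaluation is oracle-free, this learner makes exactly $q_{\cA}$ queries. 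Otherwise, simulating $\widehat M$ to project onto the component costs the stated polynomial overhead.

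\textbf{Cloning branch.} The main obstacle is turning $\Pi_{\clone}$-weight into $k+1$ copies of $\ket{\psi}$. The cloner simulates $\cA$ with its $k$ copies. It implements $\cO$ using $C_{\ket\psi}Z$ together with a self-sampled $f$: controlled on the projection onto $\ket{\psi}$, it applies $O_f$, which costs $O(1)$ reflection queries. It then applies, in parallel on each register, the singular-vector transformation of \Cref{thm:qsvt} to $A=\Pi_\psi U_{\mathrm{hist}}\Pi_{\mathrm{init}}$ with cutoff $\sqrt\beta$ and error $\delta$. This uses $O(\beta^{-1/2}\log(1/\delta))$ calls to $U_{\mathrm{hist}}$, each of which contains $V$.

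The delicate point is to check that $A^\dagger A=Q_{\cA}\otimes\proj0$. This needs orthogonal input and time labels, and it also needs $\widehat M$ folded into $U_{\mathrm{hist}}$. Given that, the left singular vectors have the form $\ket\psi\otimes\ket{w_j}$, so the map carries $\Pi_{\clone}$-components to $\ket{\psi}^{\otimes k+1}\otimes(\cdot)$ with error $(k+1)\delta$. The success probability is then at most $\HClone_{2^\lambda,k}(q_{\text{clone}})$ by \Cref{thm:haar-k-to-k-plus-one}.

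\textbf{Conclusion.} Summing the branches, together with the $(k+1)\delta$ losses and an averaging over the guessed index $i$ (absorbed by applying the learner to the accepted state and noting that the components are orthogonal), gives the stated bound.
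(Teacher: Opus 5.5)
Your architecture matches the paper's: dilated ATIs, the query-weight operator $Q$ with $\widetilde Q=\widehat M Q\widehat M$, the $\Pi_{\clone}/\Pi_{\learn}$ split, a BBBV hybrid for learning, and parallel singular-vector transformation for cloning. The learning branch as you set it up, however, does not give the stated bound.

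Your learner guesses $i$ and outputs register $i$ with the dummy evaluator, without testing it. The estimate $\Tr[Q\sigma_i]\le\beta$ is false for the reduced state $\sigma_i$. It holds only on the component $\Pi_{\learn,i}\sigma\Pi_{\learn,i}$, and the learner cannot isolate that component, because $\Lambda^{(i)}_{<\beta}$ is a spectral cut of $\widetilde Q$ and is not implementable. What you can prove is that register $i$ is dummy-good with probability about $\Tr[\Lambda^{(i)}_{<\beta}\sigma_i]-O(\delta)$. The only link to the target quantity is $\Pi_{\learn}\preceq\sum_i\Lambda^{(i)}_{<\beta}$, and this is tight in the worst case. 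Take $\sigma$ a uniform mixture over $j$ of states in which register $j$ lies in $\Lambda^{(j)}_{<\beta}$ and every other register lies in $\Lambda_{\geq\beta}$ and genuinely needs the oracle. Then $\Tr[\Pi_{\learn}\sigma]=1$, but a guessed register, or even the best fixed one, is dummy-good with probability only about $1/(k+1)$. You therefore get $(k+1)\,\mu_{\cF,\cD,\gamma-\Delta}$, not $\mu_{\cF,\cD,\gamma-\Delta}$. Orthogonality of the components does not help a learner that never measures, so your closing sentence does not close this gap.

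The missing ingredient is a sequential dummy test. The learner applies $\ATI^{\eps,\delta}_{\gamma-\eps}(M^{(i),\dum})$ to registers $i=1,\dots,k+1$ in turn and outputs the first register that accepts. Let $F=\bigotimes_i(I-\widehat M^{(i),\dum})$ be the all-reject projector, and decompose $\Pi_{\learn}$ orthogonally by the first index carrying $\Lambda^{(i)}_{<\beta}$; this is where orthogonality is actually used. Each piece of $\Pi_{\learn}F$ contains the factor $\Lambda^{(i)}_{<\beta}(I-\widehat M^{(i),\dum})$, whose norm is at most $2T_{\ATI}\sqrt\beta$ by the BBBV bound. Hence $\norm{\Pi_{\learn}F\ket u}\le 2T_{\ATI}\sqrt{(k+1)\beta}$, so the learner accepts with probability at least $\Tr[\Pi_{\learn}\sigma]-O(T_{\ATI}\sqrt{(k+1)\beta})$.

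Goodness of the accepted register then follows from the ATI post-measurement guarantee for $M^{(i),\dum}$. That guarantee only gives $\TI_{\gamma-3\eps}$, so you need $\eps=\Delta/3$ rather than $\Delta/2$, together with $\beta=\delta^2/(4(k+1)T_{\ATI}^2)$. When verification is oracle-free these tests make no oracle calls, so $q_{\text{learn}}=q_{\cA}$ still holds.

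A smaller correction on the cloning side: $\widehat M$ is a projector, so it cannot be folded into the unitary $U_{\mathrm{hist}}$. It belongs in the right projector $\Pi_{\mathrm{init}}=\widehat M\otimes\proj{0}_{\mathsf T}$, whose controlled reflection is implemented with $V$ and $V^\dagger$.
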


For an efficient adversary, take $k,q_{\text{out}}$ to be $\poly(\lambda)$, and fix a gap $0<\Delta<\gamma$ with $1/\Delta=\poly(\lambda)$.
For any constant $c>0$, set $\delta=\lambda^{-c}/(k+1)$. For any $(\gamma-\Delta)$-unlearnable function family, there exist corresponding polynomials $q_{\text{learn}}(\lambda)$ and $q_{\text{clone}}(\lambda)$ such that
$\mu_{\cF,\cD,\gamma-\Delta}(q_{\text{learn}})$ and $\HClone_{2^{\lambda},k}(q_{\text{clone}})$ are negligible, while the additive error is $O(\lambda^{-c})$.
Since this holds for every constant $c>0$ with the same gap $\Delta$, we conclude that $\Pr[\cA\text{ success}]$ is negligible. Therefore we have the following corollary on the security of our copy-protection scheme.
\begin{corollary}
    Our copy-protection scheme is $k$-copy $\gamma$-anti-piracy secure for any $(\gamma-\Delta)$-unlearnable function family $\cF_{\lambda}$, for any $\Delta=\Omega\left(\frac{1}{\poly(\lambda)}\right)$.
\end{corollary}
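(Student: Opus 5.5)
The plan is to derive the corollary directly from \Cref{thm:copy-protect}, combined with the cloning lower bound \Cref{thm:haar-k-to-k-plus-one} and the definition of unlearnability (\Cref{def:unlearnable}). Fix an efficient adversary $\cA$. Its query count $q_{\cA}$, the number of copies $k$ and the output-program query bound $q_{\text{out}}$ are all bounded by polynomials in $\lambda$. Fix also the gap $\Delta$ with $1/\Delta=\poly(\lambda)$. For an arbitrary constant $c>0$ I set $\delta:=\lambda^{-c}/(k+1)$. Then $1/\delta$ is polynomial, so \Cref{thm:copy-protect} gives
\[
  \Pr[\Pir_{\cF,\cD,k,\gamma}^{\cA}(1^{\lambda})=1]
  \leq \mu_{\cF,\cD,\gamma-\Delta}(q_{\text{learn}})+\HClone_{2^{\lambda},k}(q_{\text{clone}})+O(\lambda^{-c}),
\]
where $q_{\text{learn}},q_{\text{clone}}\leq q_{\cA}+\poly(q_{\text{out}},1/\Delta,(k+1)\lambda^{c},k)$. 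These are polynomials in $\lambda$, though their degree may depend on $c$.

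Next I bound the two main terms for this fixed $c$.
\begin{itemize}
\item Since $\cF$ is $(\gamma-\Delta)$-unlearnable, \Cref{def:unlearnable} says $\mu_{\cF,\cD,\gamma-\Delta}(q)$ is negligible for every polynomial $q$. In particular $\mu_{\cF,\cD,\gamma-\Delta}(q_{\text{learn}})$ is negligible.
\item For the cloning term, \Cref{thm:haar-k-to-k-plus-one} gives
\[
  \HClone_{2^{\lambda},k}(q_{\text{clone}})\leq \bigl(2q_{\text{clone}}+\sqrt{k+1}\bigr)^2/2^{\lambda}.
\]
This is a polynomial divided by $2^{\lambda}$, hence negligible. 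If the program state has $n=\poly(\lambda)\geq\lambda$ qubits, the bound with $2^{n}$ in the denominator is only smaller.
\end{itemize}
Consequently, for every constant $c$ there is a constant $C_c$ such that, for all sufficiently large $\lambda$,
\[
  \Pr[\Pir=1]\leq C_c\lambda^{-c}+\negl_c(\lambda).
\]

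Finally I exchange the quantifiers. The left-hand side is a single function of $\lambda$ that does not depend on $c$. For any target exponent $d$, I apply the bound with $c=d+1$ and obtain $\Pr[\Pir=1]=O(\lambda^{-d-1})+\negl(\lambda)$, which is $o(\lambda^{-d})$. Because $d$ is arbitrary, the success probability is negligible. This holds for every efficient $\cA$, so the scheme is $k$-copy $\gamma$-anti-piracy secure.

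The only delicate point is this last step. The hidden polynomials in $q_{\text{learn}}$ and $q_{\text{clone}}$ depend on $c$ through $1/\delta$, so I cannot take $c\to\infty$ inside a single bound. I avoid this by fixing $c$ first, noting that for each fixed $c$ the learner and cloner remain polynomial-query so unlearnability and the cloning bound apply, and only then quantifying over $c$. I also need the constant hidden in $O((k+1)\delta)$ from \Cref{thm:copy-protect} not to blow up with $c$; if it were allowed to, the final argument would need a different choice of $\delta$.
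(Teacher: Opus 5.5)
Your proposal is correct and follows the paper's own argument. You apply \Cref{thm:copy-protect} with $\delta=\lambda^{-c}/(k+1)$, use $(\gamma-\Delta)$-unlearnability and \Cref{thm:haar-k-to-k-plus-one} to make the learning and cloning terms negligible for each fixed $c$, and then quantify over all constants $c$; your closing worry is unnecessary, since the paper's proof gives the additive error explicitly as $(4k+5)\delta\le 5\lambda^{-c}$, and even a $c$-dependent constant $C_c$ would be harmless because $C_{d+1}\lambda^{-d-1}=o(\lambda^{-d})$.
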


The rest of the section is dedicated to proving~\Cref{thm:copy-protect}. Our proof forms a dichotomy of the adversary's behavior, as $\cA$ will either behave as a learner for the function $f$ or as a cloner trying to clone
the state $\ket{\psi}$.

A $k$-copy anti-piracy adversary $\cA$ outputs programs $(\sigma[\mathsf{R}_i],\{C_{f,x}^{(i)}\}_{x\in [N]})$ for $i=1,\ldots,k+1$.
If the adversary is successful, every program is $\gamma$-good for computing $f$. Each program can achieve the goal by either querying the oracle 
$\cO_{\ket{\psi},f}$ with large weight or by containing enough information about the function $f$. We will try to clone $k+1$ copies of $\ket{\psi}$ in the former case,
and show that $\cA$ is a successful learner in the latter case.

For simplicity of notation, we will ignore the program number $i$ in the following discussion when we are considering a single program.
The following mixture of projective measurements $M_{D_f}$ for input program $\sigma[\mathsf R]$ with respect to distribution $D_f$ captures the probability that the program is correct.
\begin{equation}
  M_{D_f}=\E_{x\leftarrow D_f}\left[C_{f,x}^{\dagger}\Pi^{\acc}C_{f,x}\right],
\end{equation}
where $\Pi^{\acc}$ on the output register tests if the output $y$ satisfies the predicate $R(x,y)$ specified by $\Eval$.
For simplicity, we will denote $M_{D_f}$ as $M$.

For every successful adversary $\cA$ in the $k$-copy anti-piracy game, its success probability can be written as 
\[
p_{\text{pir}}= \Pr[\Pir_{\cF,\cD,k,\gamma}^{\cA}(1^{\lambda})=1] = \Tr\left[\bigotimes_{i=1}^{k+1}\TI_{\gamma}\left(M^{(i)}\right)\sigma\right].
\]
Our proof strategy is sketched as follows. We apply $\ATI_{\gamma-\eps}(M^{(i)})$ to the output state $\sigma$.
By~\Cref{thm:multi-ati}, we have 
\begin{align}
  &\Tr\!\left[\left(\bigotimes_{i=1}^{k+1}
        \ATI_{\gamma-\eps}^{\eps,\delta}(M^{(i)})\right)\sigma\right]
      \geq
      \Tr\!\left[\left(\bigotimes_{i=1}^{k+1}
        \TI_{\gamma}(M^{(i)})\right)\sigma\right]-(k+1)\delta,\label{eq:post-select-cA}\\
       &\Tr\!\left[\left(\bigotimes_{i=1}^{k+1}
        \TI_{\gamma-2\eps}(M^{(i)})\right)\sigma'\right]
      \geq 1-2(k+1)\delta. 
\end{align}
We can therefore extract the successful component of the adversary's state.

For the following section, let $m=k+1$ and take
\begin{align*}
  \eps=\frac{\Delta}{3},\qquad \beta=\frac{\delta^2}{4mT_{\ATI}^2}.
\end{align*}
We choose $\delta$ to be inverse polynomial, for example $\delta=\lambda^{-c}/m$ for any constant $c>0$.
Here $T_{\ATI}\geq1$ is a uniform worst-case query bound for the ATI dilations, with
$T_{\ATI}=\poly(q_{\text{out}},1/\eps,\log(1/\delta))$, using the bounded-runtime implementation from~\cite[Remark~6.4]{zhandrySchrodingersPirateHow2020}.
In particular, $2T_{\ATI}\sqrt{m\beta}=\delta$.

In order to describe our learner, we introduce the following operator as the dilation of the approximate threshold implementation.
For each approximate threshold implementation $\ATI_{\gamma-\eps}^{\eps,\delta}(M)$ on register $\mathsf{R}$, we can implement it coherently as the following projector
\[
\widehat{M}=V^\dagger\Pi V
\]
acting on register $\mathsf{RD}$, with the register $\mathsf{D}$ initialized as $\ket{0}_{\mathsf{D}}$. 
Given the dilation unitary $V$, we can write the decomposition $V=U_{T}O_{T}U_{T-1}\cdots U_1O_1U_0$, where each $O_i$ is a query to the oracle $\cO_{\ket{\psi},f}$. We denote
 the circuit before the $t$-th query of the oracle as $V^{<t}=U_{t-1}O_{t-1}\dots U_1O_1U_0$.
 We estimate the query weight of the program on the oracle $\cO_{\ket{\psi},f}$ through the following mixture of projective measurements over the register $\mathsf{RD}$.
\begin{equation}
  Q=\frac{1}{T}\sum_{t=1}^{T}V^{<t\dagger}(\proj{\psi}_{\mathsf{R}}\otimes I_{\mathsf{D}})V^{<t}.
\end{equation}

In the learning case, if some program $\sigma[\mathsf{R}_i]$ has small query weight on the oracle $\cO_{\ket{\psi},f}$, we
replace the oracle queries in the program circuit $\{C_{f,x}^{(i)}\}$ with the identity. The resulting circuit $\{C_{f,x}^{(i),\dum}\}$ is proven
to be a good program for computing $f$ with small loss. Correspondingly, we can define the mixture of projective measurements $M^{(i),\dum}$.
In the cloning case, we will show how to extract copies of $\ket{\psi}$ from states in the high-query-weight subspace.

For our analysis,
 we further define the restriction of $Q$ onto the accepted subspace of $\widehat{M}$ as
\begin{align*}
  \widetilde{Q}=\widehat{M}Q\widehat{M},
\end{align*}
For $\beta>0$, we also denote the corresponding threshold operator of $\widetilde{Q}$ as
\begin{align*}
  \Lambda_{\geq\beta}=\one{\widetilde{Q}\geq \beta}, \quad \Lambda_{<\beta}=\widehat{M}-\Lambda_{\geq\beta}.
\end{align*}
Since $\im(\widetilde{Q})\subseteq \im(\widehat{M})$, we have that $\Lambda_{<\beta}$ is a valid projector orthogonal to $\Lambda_{\geq\beta}$.

Using these operators, we define the projectors onto the cloning and learning branches:
\begin{align*}
  \Pi_{\clone}=\bigotimes_{i=1}^{k+1}\Lambda_{\geq\beta}^{(i)}, \quad \Pi_{\learn}=\bigotimes_{i=1}^{k+1}\widehat{M}^{(i)}-\Pi_{\clone}.
\end{align*}

\paragraph{The Learning Case.}
We construct our learner $\cL(1^{\lambda})$ for the function $f$ as follows.
\begin{enumerate}
  \item The learner $\cL$ randomly samples a Haar-random state $\ket{\phi}$, and runs the adversary $\cA$ with $k$ copies of $\ket{\phi}$,
and simulates the oracle $\cO_{\ket{\phi},f}$ by using the oracle $O_f$. The simulated $\cA$ returns a state $\sigma$ and programs $\left(\{C_{f,x}^{(i)}\}_{x\in[N]}\right)_{i=1}^{k+1}$.
  \item Replace the oracle $\cO_{\ket{\phi},f}$ in the program $\{C_{f,x}^{(i)}\}$ with the identity, and denote the resulting program as $\{C_{f,x}^{(i),\dum}\}$. We set the corresponding mixture of projective measurements as
\[
{M}^{(i),\dum}_{D_f}=\E_{x\leftarrow D_f}\left[{C}_{f,x}^{(i),\dum\dagger}\Pi^{\acc,(i)}{C}_{f,x}^{(i),\dum}\right],
\]
and define $\widehat{M}^{(i),\dum}$ correspondingly as the dilation of $\ATI_{\gamma-\eps}^{\eps,\delta}(M^{(i),\dum})$.
  \item With the output program $\sigma$ from $\cA$, $\cL$ sequentially applies $\ATI_{\gamma-\eps}^{\eps,\delta}(M^{(i),\dum})$ 
  to each program register $\mathsf{R}_i$. If a test accepts, $\cL$ outputs the corresponding program
  \[
    \left(\widetilde{\sigma}[\mathsf{R}_i],\{C_{f,x}^{(i),\dum}\}_{x\in[N]}\right),
  \]
  where $\widetilde{\sigma}[\mathsf{R}_i]$ is the post-measurement state of register $\mathsf{R}_i$.
  It aborts if all the approximate threshold implementations reject.
\end{enumerate}

We will show the following bound on the success probability of the learner $\cL$ in the learning game.
\begin{theorem}
  \label{thm:learning-bound}
  The output program  $(\widetilde{\sigma}_{\mathsf{R}_{i^*}},\{\widetilde{C}_{f,x}^{(i^*)}\}_{x\in [N]})$ from our learner $\cL$ is a $(\gamma-\Delta)$-good program for computing $f$ with respect to $\cD_f$, with success probability satisfying
  \[
  \Pr[\cL\text{  outputs a $(\gamma-\Delta)$-good program}]\geq\Tr[\Pi_{\learn}\sigma]-O((k+1)\delta).
  \]
  
  Let $q_{\cA}$ be the number of queries made by $\cA$, and let $q_{\text{out}}$ be the number of queries made by the output program of $\cA$. Our learner $\cL$ makes at most $q_{\text{learn}}=q_{\cA}+\poly\left(\frac{1}{\Delta},\frac{1}{\delta},q_{\text{out}},k\right)$ queries. Moreover, if the evaluation algorithm is oracle-free, we have $q_{\text{learn}}=q_{\cA}$.
\end{theorem}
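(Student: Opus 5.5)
The argument has three parts. First we show the simulation is perfect. Then we decompose $\Pi_{\learn}$ into orthogonal pieces, each marking one low-query-weight register. Finally, on each piece, a BBBV-style hybrid lets us swap the real oracle for the dummy circuit.

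\textbf{Step 1: the simulation is exact.} The learner samples its own Haar state $\ket{\phi}$, so it can implement $\cO_{\ket{\phi},f}$ with one query to $O_f$. It applies the projective test $\{\proj{\phi},I-\proj{\phi}\}$ coherently as a controlled operation and calls $O_f$ on the accepted branch. Since $\ket{\phi}$ has the same distribution as $\ket{\psi}$, the pair (output state $\sigma$, programs) has exactly the distribution of the real anti-piracy game. The learner's cost splits into two parts:
\begin{itemize}
\item $q_{\cA}$ queries from simulating $\cA$;
\item the queries made by the ATI dilations of the dummy programs, $\poly(q_{\text{out}},1/\Delta,1/\delta)$ per register, times $k+1$ registers.
\end{itemize}
When the evaluation is oracle-free, the dummy circuit $C^{\dum}_{f,x}$ contains no $O_f$ call. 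Moreover, the check $\Pi^{\acc}$ uses the predicate $R$, which needs no oracle. So the ATIs cost nothing and $q_{\text{learn}}=q_{\cA}$.

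\textbf{Step 2: orthogonal decomposition of $\Pi_{\learn}$.} For each $i$ we have $\widehat{M}^{(i)}=\Lambda^{(i)}_{\geq\beta}+\Lambda^{(i)}_{<\beta}$. Expanding the tensor product gives
\[
\Pi_{\learn}=\sum_{i=1}^{k+1}\Bigl(\bigotimes_{j<i}\Lambda^{(j)}_{\geq\beta}\Bigr)\otimes\Lambda^{(i)}_{<\beta}\otimes\Bigl(\bigotimes_{j>i}\widehat{M}^{(j)}\Bigr)=:\sum_i \Pi_i,
\]
a sum of orthogonal projectors. On the support of $\Pi_i$, register $i$ lies in the range of $\Lambda_{<\beta}$, which means $\Tr[\widetilde{Q}\tau]<\beta$ for every state $\tau$ supported there.

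\textbf{Step 3: the hybrid argument (main obstacle).} Take register $i$ in the range of $\Lambda_{<\beta}$. Since $\widetilde Q=\widehat M Q\widehat M$ and the state already lies in the range of $\widehat M$, its average query weight on $\ket{\psi}$ is below $\beta$. Let $V^{\dum}$ be $V$ with every oracle call replaced by the identity. The BBBV bound gives
\[
\norm{V\ket{u}-V^{\dum}\ket{u}}\le 2T\sqrt{\beta}.
\]
The same bound holds with the real program's internal oracle replaced by the dummy, because $\cO$ equals the identity off $\ket\psi$. With the choice of $\beta$ this is at most $\delta/\sqrt{m}$. Consequently, $\widehat{M}^{\dum}$ accepts register $i$ with probability at least $1-O(\delta)$ relative to $\widehat{M}$. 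The accepted post-measurement state is then $(\gamma-3\eps)=(\gamma-\Delta)$-good by the third property in \Cref{thm:approx-thres-imp}, up to a further $2\delta$ loss.

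The delicate point is that the learner runs the dummy ATIs sequentially, on registers $1,2,\ldots$, without knowing which piece $\Pi_i$ it is in. My plan here is a gentle-measurement argument. Run the learner's sequential procedure on the real state, and compare it with the ideal procedure that measures the true $\Lambda_{\geq\beta}$ and $\Lambda_{<\beta}$ first. On the piece $\Pi_i$:
\begin{itemize}
\item for each earlier register $j<i$, the dummy test's behaviour differs from the ideal one only with probability at most $O(\delta)$;
\item at register $i$ the dummy test accepts with probability at least $1-O(\delta)$, as shown above.
\end{itemize}
Summing the operator-norm errors over at most $k+1$ registers and over the orthogonal pieces gives total loss $O((k+1)\delta)$. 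Whenever some dummy test accepts, the third property of \Cref{thm:approx-thres-imp} certifies that the output is $(\gamma-\Delta)$-good. Hence the success probability is at least $\Tr[\Pi_{\learn}\sigma]-O((k+1)\delta)$.

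I expect this last bookkeeping to be the main difficulty. The reason is that an earlier register $j<i$ may be accepted by its dummy test even though it does not satisfy the low-weight condition. Either I must show that such an acceptance still certifies $(\gamma-\Delta)$-goodness, which is all the learner needs, or I must bound how much it disturbs the accept/reject statistics of later registers.
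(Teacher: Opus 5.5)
\textbf{Gap: the lower bound on the learner's acceptance probability is never established.} Your ingredients are the paper's: exact simulation with a self-sampled $\ket{\phi}$, the orthogonal split $\Pi_{\learn}=\sum_i\Pi_i$, the bound $\norm{(I-\widehat{M}^{(i),\dum})\Lambda^{(i)}_{<\beta}}\le 2T_{\ATI}\sqrt{\beta}$, and the post-selection property of \Cref{thm:approx-thres-imp}. But the step that yields $\Pr[\cL\text{ accepts}]\geq\Tr[\Pi_{\learn}\sigma]-O(\delta)$ is left open, and your route to it relies on a claim that does not hold.

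Your first bullet asserts that, on the piece $\Pi_i$, the dummy test on an earlier register $j<i$ deviates from some ideal behaviour only with probability $O(\delta)$. On $\Pi_i$, however, those registers lie in $\Lambda^{(j)}_{\geq\beta}$, i.e.\ they have large query weight on $\ket{\psi}$. There the BBBV argument gives no control, and $\widehat{M}^{(j),\dum}$ may accept or reject with arbitrary probability. So comparing the sequential procedure with one that first measures $\{\Pi_i\}$ cannot be carried out this way.

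It is also not needed. Your ``first option'' is already settled by what you wrote: the third property of \Cref{thm:approx-thres-imp} holds for whatever normalized state enters a given test. An acceptance at any register, high-weight or not, therefore yields a $(\gamma-3\eps)$-good program except with probability $2\delta$ per test, i.e.\ $O((k+1)\delta)$ in total.

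\textbf{The missing idea: drop the sequential order.} The learner aborts iff all $k+1$ tests reject. Since the $\widehat{M}^{(i),\dum}$ act on disjoint registers and commute, this event is exactly the projector $F=\bigotimes_{i=1}^{k+1}(I-\widehat{M}^{(i),\dum})$ applied to a purification $\ket{u}$ of $\sigma\otimes\proj{0}$. This holds independently of the order of the tests and of any post-measurement disturbance.

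Each $\Pi_iF$ is a tensor product of contractions whose $i$-th factor is $\Lambda^{(i)}_{<\beta}(I-\widehat{M}^{(i),\dum})$. Hence $\norm{\Pi_iF}\le 2T_{\ATI}\sqrt{\beta}$, whatever happens on the other registers. Orthogonality of the $\Pi_i$ then gives $\norm{\Pi_{\learn}F\ket{u}}\le 2T_{\ATI}\sqrt{(k+1)\beta}=\delta$. The triangle inequality
\[
\norm{\Pi_{\learn}\ket{u}}\le\norm{\Pi_{\learn}F\ket{u}}+\norm{(I-F)\ket{u}}
\]
yields $\Pr[\cL\text{ accepts}]=\norm{(I-F)\ket{u}}^2\geq\Tr[\Pi_{\learn}\sigma]-2\delta$. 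Adding the per-test $2\delta$ post-selection loss gives the theorem. This is precisely the paper's argument, and no gentle-measurement comparison is required.
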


To prove~\Cref{thm:learning-bound}, we first introduce several lemmata for our analysis.
The following lemma from~\cite{BBBV97} characterizes the relation between $\widehat{M}$ and $\widehat{M}^{\dum}$.
\begin{lemma}
  \label{lem:BBBV}
  For any input state $\ket{\varphi}\in\im(\widehat{M})$, if our dilation unitary $V$ makes $T$ queries to $\cO_{\ket{\phi},f}$,
  with $\widehat{M},\widehat{M}^{\dum},Q$ defined as above, we have that
  \[
 \norm{(\widehat{M}^{\dum}-\widehat{M})\ket{\varphi}}\leq 2T\sqrt{\bra{\varphi}Q\ket{\varphi}}.
  \]
\end{lemma}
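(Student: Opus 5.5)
This is a BBBV-style hybrid argument, applied to the dilation unitary $V$ and its ``dummy'' version. Let $V^{\dum}$ be the unitary obtained from $V=U_TO_TU_{T-1}\cdots U_1O_1U_0$ by replacing every query $O_t$ to $\cO_{\ket{\phi},f}$ with the identity. Since $\widehat{M}=V^\dagger\Pi V$ and $\widehat{M}^{\dum}=V^{\dum\dagger}\Pi V^{\dum}$, I will first show that for any unit vector $\ket{\varphi}$
\[
  \norm{(V-V^{\dum})\ket{\varphi}}\le \sum_{t=1}^{T}\norm{(O_t-I)V^{<t}\ket{\varphi}} .
\]
The standard telescoping argument gives this. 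Define hybrids $H_j$ that use the real oracle for queries $1,\ldots,j$ and the identity afterwards. Then $H_T=V$, $H_0=V^{\dum}$, and
\[
  \norm{(H_j-H_{j-1})\ket{\varphi}}=\norm{(O_j-I)V^{<j}\ket{\varphi}},
\]
because the first $j-1$ queries of both hybrids agree with $V^{<j}$, and the gates after query $j$ are unitary.

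Next, bound each term. We have $O-I=\proj{\phi}_{\mathsf R}\otimes(O_f-I)$ and $\norm{O_f-I}\le 2$, so
\[
  \norm{(O_t-I)V^{<t}\ket{\varphi}}\le 2\norm{(\proj{\phi}\otimes I)V^{<t}\ket{\varphi}}=2\sqrt{\bra{\varphi}V^{<t\dagger}(\proj{\phi}\otimes I)V^{<t}\ket{\varphi}} .
\]
Summing over $t$ and applying Cauchy--Schwarz gives
\[
  \sum_t 2\sqrt{w_t}\le 2\sqrt{T\sum_t w_t}=2T\sqrt{\bra{\varphi}Q\ket{\varphi}},
\]
using $Q=\frac1T\sum_t V^{<t\dagger}(\proj{\phi}\otimes I)V^{<t}$, with $\ket{\psi}$ read as the oracle's state $\ket{\phi}$ here.

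The main obstacle is converting the unitary distance into the projector difference without losing a factor of $2$. Naively,
\[
  \widehat M-\widehat M^{\dum}=V^\dagger\Pi(V-V^{\dum})+(V-V^{\dum})^\dagger\Pi V^{\dum}
\]
yields $4T\sqrt{\cdot}$. To avoid this, I will use the hypothesis $\ket{\varphi}\in\im(\widehat M)$, so that $\widehat M\ket{\varphi}=\ket{\varphi}$ and $\Pi V\ket{\varphi}=V\ket{\varphi}$. Then
\[
  (\widehat M^{\dum}-\widehat M)\ket{\varphi}=V^{\dum\dagger}\Pi V^{\dum}\ket{\varphi}-\ket{\varphi}=-V^{\dum\dagger}(I-\Pi)V^{\dum}\ket{\varphi}.
\]
Its norm is
\[
  \norm{(I-\Pi)V^{\dum}\ket{\varphi}}=\norm{(I-\Pi)(V^{\dum}-V)\ket{\varphi}}\le\norm{(V^{\dum}-V)\ket{\varphi}},
\]
where the middle equality uses $(I-\Pi)V\ket{\varphi}=0$. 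Combining this with the hybrid bound gives the claimed $2T\sqrt{\bra{\varphi}Q\ket{\varphi}}$.

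Two points need checking along the way. First, the ancilla register $\mathsf D$ must be included in $\ket{\varphi}$, consistent with $\widehat M$ acting on $\mathsf{RD}$. Second, the queries inside $V$ must be exactly of the form $\cO_{\ket{\phi},f}$, possibly acting on different registers; in that case the same $\proj{\phi}$ factorization applies to each.
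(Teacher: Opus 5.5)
Your proof is correct and follows the same route as the paper, which simply invokes the BBBV hybrid argument after noting that replacing $\cO_{\ket{\phi},f}$ by the identity inside $C_{f,x}$ is the same as replacing it inside $V$. You write out the hybrids that the paper only cites. You also make explicit a step the paper leaves unstated: using $\ket{\varphi}\in\im(\widehat{M})$ to turn the unitary-level bound on $\norm{(V-V^{\dum})\ket{\varphi}}$ into the projector-level bound with constant $2$, rather than the $4$ that a naive expansion of $\widehat{M}^{\dum}-\widehat{M}$ would give.
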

\begin{proof}
  Note that our dilation unitary $V$ only makes queries to the oracle $\cO_{\ket{\phi},f}$ inside the program circuit $\{C_{f,x}\}$. 
  Thus replacing the oracle $\cO_{\ket{\phi},f}$ with the identity in the circuit $\{C_{f,x}\}$ is equivalent to replacing the oracle $\cO_{\ket{\phi},f}$ with the identity in the dilation unitary $V$.
  The result immediately follows from~\cite{BBBV97}.
\end{proof}

From the lemma we have the following corollary:
\begin{corollary}
  \label{cor:low-weight-overlap}
  For the operators $\widehat{M}^{\dum}$ and $\Lambda_{<\beta}$ defined above, we have that 
  \begin{align*}
    \norm{(I-\widehat{M}^{\dum})\Lambda_{<\beta}}\leq 2T\sqrt{\beta}.
  \end{align*}
\end{corollary}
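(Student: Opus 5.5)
We will derive the bound from \Cref{lem:BBBV} by a spectral argument restricted to the image of $\Lambda_{<\beta}$. Any unit vector $\ket{\varphi}$ in the image of $\Lambda_{<\beta}$ lies in $\im(\widehat{M})$, because $\Lambda_{<\beta}=\widehat{M}-\Lambda_{\geq\beta}$ is a subprojector of $\widehat{M}$. Hence $\widehat{M}\ket{\varphi}=\ket{\varphi}$. Since $\widetilde{Q}=\widehat{M}Q\widehat{M}$ is Hermitian with $\im(\widetilde{Q})\subseteq\im(\widehat{M})$, its spectral decomposition inside $\im(\widehat{M})$ splits this space into eigenvalues $\geq\beta$ (the range of $\Lambda_{\geq\beta}$) and eigenvalues $<\beta$ (the range of $\Lambda_{<\beta}$). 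Therefore
\[
\bra{\varphi}Q\ket{\varphi}=\bra{\varphi}\widehat{M}Q\widehat{M}\ket{\varphi}=\bra{\varphi}\widetilde{Q}\ket{\varphi}<\beta .
\]

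Next we apply \Cref{lem:BBBV} to such a $\ket{\varphi}$. It gives
\[
\norm{(\widehat{M}^{\dum}-\widehat{M})\ket{\varphi}}\leq 2T\sqrt{\bra{\varphi}Q\ket{\varphi}}\leq 2T\sqrt{\beta}.
\]
Using $\widehat{M}\ket{\varphi}=\ket{\varphi}$, the left-hand side equals $\norm{(\widehat{M}^{\dum}-I)\ket{\varphi}}=\norm{(I-\widehat{M}^{\dum})\ket{\varphi}}$.

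Finally, we pass to the operator norm. For an arbitrary unit vector $\ket{u}$, write $\Lambda_{<\beta}\ket{u}=c\ket{\varphi}$ with $\ket{\varphi}$ a unit vector in $\im(\Lambda_{<\beta})$ and $|c|\leq1$. If $c=0$ there is nothing to show. Otherwise the previous step yields
\[
\norm{(I-\widehat{M}^{\dum})\Lambda_{<\beta}\ket{u}}\leq |c|\cdot 2T\sqrt{\beta}\leq 2T\sqrt{\beta}.
\]
Taking the supremum over $\ket{u}$ gives the claim.

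The only delicate point is checking that $\bra{\varphi}Q\ket{\varphi}<\beta$ holds for every vector in the range of $\Lambda_{<\beta}$, and not merely for the eigenvectors. This follows because the range of $\Lambda_{<\beta}$ is exactly the span of the eigenvectors of $\widetilde{Q}|_{\im\widehat{M}}$ with eigenvalue below $\beta$, so the quadratic form of $\widetilde{Q}$ is bounded by $\beta\norm{\varphi}^2$ there. We also rely on the hypothesis of \Cref{lem:BBBV} that the input lies in $\im(\widehat{M})$ (with the ancilla register $\mathsf{D}$ implicitly in the appropriate initial state), which holds since $\Lambda_{<\beta}\preceq\widehat{M}$.
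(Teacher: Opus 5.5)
Your proof is correct and follows essentially the same route as the paper: restrict to unit vectors $\ket{\varphi}\in\im(\Lambda_{<\beta})\subseteq\im(\widehat{M})$, bound $\bra{\varphi}Q\ket{\varphi}=\bra{\varphi}\widetilde{Q}\ket{\varphi}\leq\beta$, and apply \Cref{lem:BBBV} using $\widehat{M}\ket{\varphi}=\ket{\varphi}$. You also spell out two steps the paper leaves implicit: that the quadratic-form bound holds on the whole range and not only on eigenvectors, and the passage from vectors to the operator norm.
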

\begin{proof}
  For any state $\ket{\varphi}$ in the support of $\Lambda_{<\beta}$, we have that $\Tr[Q\proj{\varphi}]\leq \beta$. Since by our construction of $\Lambda_{<\beta}$,
  the state $\ket{\varphi}$ is also in the support of $\widehat{M}$, we have that $(I-\widehat{M})\ket{\varphi}=0$.
  By~\Cref{lem:BBBV},
  \begin{align*}
    \norm{(I-\widehat{M}^\dum)\ket{\varphi}}=\norm{(\widehat{M}^\dum-\widehat{M})\ket{\varphi}}\leq 2T\sqrt{\beta}
  \end{align*}
\end{proof}

Now we are ready to prove~\Cref{thm:learning-bound}.
\begin{proof}
   The learner's success probability is characterized by the following expression:
   \begin{align*}
    \Tr\left[\left(I-\bigotimes_{i=1}^{k+1}\left( I- \ATI_{\gamma-\eps}^{\eps,\delta}(M^{(i),\dum})\right)\right)\sigma\right]=\Tr\left[\left(I-\bigotimes_{i=1}^{k+1}\left( I- \widehat{M}^{(i),\dum}\right)\right)\sigma\otimes \proj{0}_{\mathsf{W}}\right].
   \end{align*}
 We define the failure projector $F=\bigotimes_{i=1}^{k+1}\left( I- \widehat{M}^{(i),\dum}\right)$.
 For the projector $\Pi_{\learn}$ evaluated on the purification $\ket{u}$ of the state $\sigma'=\sigma\otimes \proj{0}_{\mathsf{W}}$, we have that
\begin{align*}
 \norm{\Pi_{\learn}\ket{u}}\leq \norm{\Pi_{\learn}F\ket{u}}+\norm{\Pi_{\learn}(I-F)\ket{u}}
 \leq \norm{\Pi_{\learn}F\ket{u}}+\norm{(I-F)\ket{u}}
\end{align*}
 
We now bound the term $\norm{\Pi_{\learn}F\ket{u}}$. We can decompose $\Pi_{\learn}$ according to the first register with low query weight as
\begin{align*}
  \Pi_{\learn}=\sum_{i=1}^{k+1}\left(\bigotimes_{j=1}^{i-1}\Lambda_{\geq\beta}^{(j)}\right)\otimes\Lambda_{<\beta}^{(i)}\otimes\left(\bigotimes_{j=i+1}^{k+1}\widehat{M}^{(j)}\right),
\end{align*}

For every register, the product of $\Pi_{\learn} F$ contains a term $\Lambda_{<\beta}^{(i)}(I-\widehat{M}^{(i),\dum})$. Since for projectors we have that $\norm{\Lambda_{<\beta}^{(i)}(I-\widehat{M}^{(i),\dum})}=\norm{(I-\widehat{M}^{(i),\dum})\Lambda_{<\beta}^{(i)}}\leq 2T_{\ATI}\sqrt{\beta}$ by~\Cref{cor:low-weight-overlap}. 
Thus we have that
\begin{align*}
    \norm{\Pi_{\learn}F\ket{u}}&=\norm{\left(\sum_{i=1}^{k+1}\left(\bigotimes_{j=1}^{i-1}\Lambda_{\geq\beta}^{(j)}\right)\otimes\Lambda_{<\beta}^{(i)}\otimes\left(\bigotimes_{j=i+1}^{k+1}\widehat{M}^{(j)}\right)\right)F\ket{u}}\\
    &\leq\sqrt{\sum_{i=1}^{k+1}\norm{\Lambda_{<\beta}^{(i)}(I-\widehat{M}^{(i),\dum})\ket{u}}^2}
    \leq 2T_{\ATI}\sqrt{(k+1)\beta},
\end{align*}
where the second line use the fact that the decomposition of $\Pi_{\learn}$ is orthogonal.
In conclusion, we have that
\begin{align*}
  \Pr[\cL\text{ accepts}]\geq \Tr[\Pi_{\learn}\sigma']-4T_{\ATI}\sqrt{(k+1)\beta},
\end{align*}
where $T_{\ATI}$ is the number of queries made by each approximate threshold implementation.

Let $p_i$ be the probability that register $i$ is the first accepted register, and let $\widetilde{\sigma}_i$ denote its conditional post-measurement state. 
Applying theorem~\Cref{thm:approx-proj-imp} to each state $\tilde{\sigma}_i$ for $\ATI_{\gamma-\eps,\delta}(M^{(i),\dum})$, and
summing over at most $k+1$ attempted tests gives
\[
  \sum_{i=1}^{k+1}p_i\Tr[(I-\TI_{\gamma-3\eps}(M^{(i),\dum}))\widetilde{\sigma}_i]\leq2(k+1)\delta.
\]
Thus we obtain that
\begin{align*}
  \Pr[\cL\text{ outputs a $(\gamma-3\eps)$-good program}]
  &\geq \Pr[\cL\text{ accepts}]-2(k+1)\delta\\
  &\geq \Tr[\Pi_{\learn}\sigma']-4T_{\ATI}\sqrt{(k+1)\beta}-2(k+1)\delta.
\end{align*}
Our choices $3\eps=\Delta$ and $4T_{\ATI}\sqrt{(k+1)\beta}=\delta$ give a loss of $O((k+1)\delta)$, proving the claimed goodness bound.
The number of queries made by $\cL$ is bounded by the following terms:
  \begin{itemize}
    \item the number of queries $q_{\cA}$ made by $\cA$.
    \item the number of queries made by $\ATI_{\gamma-\eps}^{\eps,\delta}(M^{(i),\dum})$. Note that $M^{(i),\dum}$ will only query the oracle $O_f$ for the evaluation of $f$;
    thus for general programs the at most $m$ tests cost $\poly\left(q_{\text{out}},1/\eps,\log(1/\delta),k\right)$ queries. For oracle-free evaluation algorithms, this step will not create additional queries.
  \end{itemize}  
  Summing the query counts, we have that the total number of queries made by $\cL$ is $q_{\cA}+\poly\left(q_{\text{out}}, \frac{1}{\Delta},\frac{1}{\delta},k\right)$.
\end{proof}

\paragraph{The Cloning Case.}
As we have already bounded the success probability of the learning case, we now focus on the cloning case.
We will show that for states in the image of $\Pi_{\clone}$, we are able to construct a cloning algorithm $\cC$ that generates $k+1$ copies of $\ket{\psi}$ given $k$ copies of $\ket{\psi}$ and the reflection oracle $C_{\ket{\psi}}Z$.

The cloning algorithm $\cC$ is constructed as follows.  It first simulates $\cA$ by sampling $f\leftarrow\cF_{\lambda}$ and simulates the oracle $\cO_{\ket{\psi},f}$ by using the reflection oracle.
On the output state $\sigma$, it applies quantum singular vector transformation (\Cref{thm:qsvt}) in parallel, with the parameters specified in~\Cref{thm:clone-extract} to extract $k+1$ copies of $\ket{\psi}$.

Now we show that given a state $\ket{\Phi}$ in the space $\Pi_{\clone}=\bigotimes_{i=1}^{k+1}\Lambda_{\geq\beta}^{(i)}$, we can extract $k+1$ copies of $\ket{\psi}$ with high probability using singular vector transformation.
\begin{theorem}
  \label{thm:clone-extract}
  For a quantum state $\ket{\Phi}$ in the space $\Pi_{\clone}=\bigotimes_{i=1}^{k+1}\Lambda_{\geq\beta}^{(i)}$, there exists a quantum algorithm that takes $\ket{\Phi}$ and outputs a state $\ket{\Phi'}$ such that
  \[
    \norm{\ket{\Phi'}-\ket{\psi}^{\otimes k+1}\otimes\ket{v}}\leq (k+1)\delta,
  \]
  where $\ket{v}$ is some state in the work register.
  The algorithm makes $O\left(\frac{T_{\ATI}}{\sqrt{\beta}}\log\frac{1}{\delta}\right)$ queries to the oracle $C_{\ket{\psi}}Z$ for each $i=1,\ldots,k+1$, where $T_{\ATI}$ bounds the number of oracle queries in each approximate threshold implementation dilation $\widehat{M}$.
\end{theorem}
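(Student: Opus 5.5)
We will follow the outline already sketched in the technical overview: run the singular-vector transformation of \Cref{thm:qsvt} once per register, in parallel, and control the accumulated error by the triangle inequality.

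\textbf{Step 1: the block encoding.} Fix one register $i$ and drop the index. Let $U_{\mathrm{hist}}$ be the unitary that does three things. First, it prepares a uniform superposition over the query index $t\in[T]$ in a fresh time register; input labels, when the operator $\widehat{M}$ averages over inputs, are handled coherently inside $V$. Second, it applies $V^{<t}$ controlled on $t$. Third, it leaves the register that would be fed to the $t$-th oracle call in place.

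Set $\Pi=\Pi_{\mathrm{init}}\cdot\widehat{M}$ (projecting the ancillas onto $\ket0$ and the program onto $\im\widehat M$) and $\widetilde\Pi=\proj{\psi}_{\mathsf R}\otimes I$. Take $A=\widetilde\Pi U_{\mathrm{hist}}\Pi$. Orthogonality of the time labels gives
\[
A^\dagger A=\widehat{M}Q\widehat{M}\otimes\proj0=\widetilde Q\otimes\proj0 .
\]
Hence the right singular vectors of $A$ with $\sigma_j\ge\sqrt\beta$ are exactly $\ket{v_j}\ket0$, where $\ket{v_j}$ are the eigenvectors of $\widetilde Q$ with eigenvalue $\lambda_j\ge\beta$. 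These span $\im\Lambda_{\geq\beta}\otimes\ket0$. Each left singular vector has the form $A\ket{v_j}\ket0/\sqrt{\lambda_j}=\ket\psi\otimes\ket{w_j}$, because $\im A\subseteq\im\widetilde\Pi$.

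\textbf{Step 2: cost of one application.} Apply \Cref{thm:qsvt} with cutoff $s=\sqrt\beta$ and error $\delta$. We must count the oracle queries in each ingredient.
\begin{itemize}
\item $U_{\mathrm{hist}}$ and $U_{\mathrm{hist}}^\dagger$ each cost at most $T_{\ATI}$ queries to $\cO_{\ket\psi,f}$. The cloner simulates $\cO_{\ket\psi,f}=I+\proj\psi\otimes(O_f-I)$ with $f$ sampled by itself; two uses of $C_{\ket\psi}Z$ (or a controlled version of it) suffice for this, as in the cloner's simulation of $\cA$.
\item $C_{\widetilde\Pi}Z$ is exactly one query to $C_{\ket\psi}Z$.
\item $C_{\Pi}Z$ requires reflecting about $\im\widehat M=\im V^\dagger\Pi V$, which costs $O(T_{\ATI})$ queries via $V$ and $V^\dagger$.
\end{itemize}
The total is $O\!\left(\frac{T_{\ATI}}{\sqrt\beta}\log\frac1\delta\right)$ queries per register, matching the statement. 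The resulting unitary $\widetilde U^{(i)}$ satisfies
\[
\norm{(\widetilde U^{(i)}-W^{(i)}_{\ge\sqrt\beta})(\Lambda^{(i)}_{\geq\beta}\otimes\proj0)}\le\delta .
\]

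\textbf{Step 3: combining the registers.} Apply $\widetilde U^{(1)}\otimes\cdots\otimes\widetilde U^{(k+1)}$ to $\ket\Phi\ket0$. Since $\ket\Phi\ket0$ lies in $\bigotimes_i(\Lambda^{(i)}_{\geq\beta}\otimes\proj0)$, a hybrid argument replaces the $\widetilde U^{(i)}$ by $W^{(i)}$ one register at a time. Each replacement costs at most $\delta$ in norm because every factor has operator norm at most $1$, so the total error is at most $(k+1)\delta$.

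It remains to identify the output of $\bigotimes_i W^{(i)}$. Expand $\ket\Phi=\sum_{\vec j}a_{\vec j}\bigotimes_i\ket{v^{(i)}_{j_i}}$ in the product eigenbasis (together with any external purification register). The partial isometry then maps this to $\sum a_{\vec j}\bigotimes_i\ket\psi\ket{w^{(i)}_{j_i}}$, which equals $\ket\psi^{\otimes k+1}\otimes\ket v$. The vector $\ket v$ is a unit vector because $\bigotimes W^{(i)}$ is isometric on the relevant subspace.

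\textbf{Main obstacle.} We expect the delicate step to be the identity $A^\dagger A=\widetilde Q\otimes\proj0$, i.e.\ that restricting with $\widehat{M}$ on the right yields exactly $\widehat MQ\widehat M$. This requires $\Pi=\widehat M\otimes\proj0$ to be a genuine projector whose reflection is implementable from $V$. It also requires $V^{<t}$ to act on the same space as $\widehat M$, with the dilation ancilla $\mathsf D$ included in $\Pi_{\mathrm{init}}$. We would check this first, and fall back to $\Pi=\proj0$ composed with $\widehat M$ explicitly if needed.
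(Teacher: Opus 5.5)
Your proposal is correct and follows the paper's proof essentially step for step: the same history-state unitary $U_{\text{hist}}$, the same pair of projectors (the paper's $\Pi_{\text{init}}=\widehat{M}_{\mathsf{RD}}\otimes\proj{0}_{\mathsf{T}}$ and $\Pi_{\text{state}}=\proj{\psi}_{\mathsf{R}}\otimes I$), the identity $A^\dagger A=\widetilde{Q}\otimes\proj{0}_{\mathsf{T}}$, singular-vector transformation with cutoff $\sqrt{\beta}$, and a parallel application whose errors add up to $(k+1)\delta$. Your per-ingredient query count and telescoping hybrid are more careful than the paper's one-line triangle-inequality remark; the one thing to fix is to take $\Pi$ to be exactly $\widehat{M}_{\mathsf{RD}}\otimes\proj{0}_{\mathsf{T}}$, without also projecting the dilation register $\mathsf{D}$ onto $\ket{0}$, because states in $\im\Lambda_{\geq\beta}$ need not have $\mathsf{D}$ in $\ket{0}$.
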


\begin{proof}
We focus on a single register of the program first. 
We first dilate the mixture of projective measurements $Q$. We additionally introduce a time register $\mathsf{T}$.
The input state to $Q$ first passes the projective measurement $\widehat{M}_{\mathsf{RD}}$, so it is a state on register $\mathsf{RD}$.
We write the history state of an arbitrary input $\ket{\phi}_{\mathsf{RD}}$ as
\[
\ket{\phi_{\text{hist}}}=\sum_{t=1}^{T}\frac{1}{\sqrt{T}}\ket{t}_{\mathsf{T}}V^{<t}\ket{\phi}_{\mathsf{RD}}.
\]
We define the unitary $U_{\text{hist}}$ as the state preparation circuit that maps the input state $\ket{\phi}_{\mathsf{RD}}\ket{0}_{\mathsf{T}}$ to its history state $\ket{\phi_{\text{hist}}}_{\mathsf{RDT}}$.
Note that the unitary $U_{\text{hist}}$ is efficiently implementable, by first preparing the state $\sum_{t=1}^{T}\sqrt{\frac{1}{T}}\ket{t}_{\mathsf{T}}$,
and then applying the controlled prefix $V^{<t}$ on the register $\mathsf{RD}$.

Define the following projectors on the registers $\mathsf{RDT}$:
\begin{align*}
  \Pi_{\text{init}}=\widehat{M}_{\mathsf{RD}}\otimes\proj{0}_{\mathsf{T}},\quad \Pi_{\text{state}}=\proj{\psi}_{\mathsf{R}}\otimes I_{\mathsf{DT}}.
\end{align*}
By our definition of $U_{\text{hist}}$, we have the following identity:
\begin{align*}
  \Pi_{\text{init}}U_{\text{hist}}^{\dagger}\Pi_{\text{state}}U_{\text{hist}}\Pi_{\text{init}}=\widetilde{Q}_{\mathsf{RD}}\otimes\proj{0}_{\mathsf{T}}.
\end{align*}
Let $A=\Pi_{\text{state}}U_{\text{hist}}\Pi_{\text{init}}$. We state several facts about its singular value decomposition. Since $A^{\dagger}A=\widetilde{Q}_{\mathsf{RD}}\otimes\proj{0}_{\mathsf{T}}$, we can see that the non-zero singular values of $A$ are the square roots of the non-zero eigenvalues of $\widetilde{Q}$,
and the corresponding right singular vectors of $A$ are the eigenvectors of $\widetilde{Q}$ on $\mathsf{RD}$ tensored with $\ket{0}_{\mathsf{T}}$.
For the left singular vectors of $A$, every left singular vector $\ket{l_i}$ with non-zero singular value is in the space $\Pi_{\text{state}}$, and can be written as $\ket{l_i}=\ket{\psi}\otimes\ket{v_i}$.

With these two observations, we apply~\Cref{thm:qsvt} with the following parameter setting:
\[
  \widetilde{\Pi}\leftarrow\Pi_{\text{state}},\quad \Pi\leftarrow\Pi_{\text{init}},U=U_{\text{hist}},\quad s\leftarrow\sqrt{\beta},\quad \delta\leftarrow \delta.
\]
The algorithm outputs a unitary $U_{\text{qsvt}}$ that is $\delta$ close to the partial isometry $W_{\geq \sqrt{\beta}}=\sum_{i\colon\sigma_i\geq\sqrt{\beta}}\ket{l_i}\bra{r_i}$
on the right singular subspace $\operatorname{span}\{\ket{r_i}\colon\sigma_i\geq\sqrt{\beta}\}$. From our previous observation, this is exactly the image of the projector $\one{\widetilde{Q}\geq \beta}\otimes\proj{0}_{\mathsf{T}}=\Lambda_{\geq\beta}\otimes\proj{0}_{\mathsf{T}}$.
Every input in the image of $\Lambda_{\geq\beta}$, tensored with $\ket{0}_{\mathsf{T}}$, is in the right singular subspace of $W_{\geq \sqrt{\beta}}$.
For any input state $\ket{\varphi}$ in $\Lambda_{\geq \beta}$, we have that
\begin{align*}
    \norm{\ket{\psi}\otimes\ket{v}-U_{\text{qsvt}   }\ket{\varphi}\otimes{\ket{0}_{\mathsf{T}}}}\leq \delta
\end{align*}
for some state $\ket{v}$.

From our analysis above, we observe that the algorithm is valid for any state in the space $\Lambda_{\geq\beta}$.
Therefore for the state $\ket{\Phi}$ in the space $\Pi_{\clone}=\bigotimes_{j=1}^{k+1}\Lambda_{\geq\beta}^{(j)}$,
we can append zero ancillas and write it as a sum of tensor products of the corresponding local right singular vectors,
\begin{align*}
  \ket{\Phi}\bigotimes_{j=1}^{k+1}\ket{0}_{\mathsf{T}_j}=\sum_{i_1,\ldots,i_{k+1}}\alpha_{i_1,\ldots,i_{k+1}}\ket{r_{i_1}}\otimes\cdots\otimes\ket{r_{i_{k+1}}}.
\end{align*}
Since each state $\ket{r_{i_j}}$ is in the space $\Lambda_{\geq\beta}^{(j)}\otimes\proj{0}_{\mathsf{T}_j}$, we have that 
\begin{align*}
    \ket{\Phi'}=U_{\text{qsvt}}^{\otimes{k+1}}\ket{\Phi}\bigotimes_{j=1}^{k+1}\ket{0}_{\mathsf{T}_j}
\end{align*}
 is $(k+1)\delta$ close to $\ket{\psi}^{\otimes k+1}\otimes\ket{\tilde{v}}$ for some state $\ket{\tilde v}$ by an application of the triangle inequality and reordering the registers.

\end{proof}

We can therefore bound the success probability of the cloning algorithm $\cC$ as follows.
\begin{theorem}
  \label{thm:cloning-bound}
  Our cloning algorithm $\cC$ extracts $k+1$ copies of $\ket{\psi}$ with success probability satisfying
  \begin{align*}
  \Pr[\cC\text{ outputs } \ket{\psi}^{\otimes (k+1 )}]\geq \Tr[\Pi_{\clone}\sigma]-(k+1)\delta.
\end{align*}
The cloner $\cC$ makes at most $q_{\cA}+\poly\left(q_{\text{out}},\frac{1}{\Delta},\frac{1}{\delta},k\right)$ queries to the oracle $C_{\ket{\psi}}Z$.
\end{theorem}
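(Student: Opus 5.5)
The proof has three steps: simulate $\cA$ with the cloner's own resources, apply \Cref{thm:clone-extract} to the component of the output state lying in $\Pi_{\clone}$, and count queries.

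\textbf{Simulation.} The cloner $\cC$ receives $k$ copies of $\ket{\psi}$ and access to $C_{\ket{\psi}}Z$. It first samples $f\leftarrow\cF_\lambda$ itself and runs $\cA$ on the $k$ copies. Each query of $\cA$ to $\cO_{\ket{\psi},f}$ must be answered exactly. Since
\[
\cO_{\ket{\psi},f}=(I-\proj{\psi})\otimes I+\proj{\psi}\otimes O_f ,
\]
this is a $\ket{\psi}$-controlled application of $O_f$, and $O_f$ is a classical reversible map that $\cC$ knows.

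One way to implement it with $O(1)$ queries is as follows:
\begin{itemize}
\item Measure coherently $\{\proj{\psi},I-\proj{\psi}\}$ into a fresh ancilla. This is phase kickback on $C_{\ket{\psi}}Z$ with the ancilla in $\ket{+}$, followed by a Hadamard.
\item Apply $O_f$ on $\mathsf{XY}$, controlled on the ancilla.
\item Uncompute the ancilla with the same kickback circuit.
\end{itemize}
This simulation is perfect, so the joint output $\sigma$ and the programs $\{C^{(i)}_{f,x}\}$ have exactly the distribution of the real anti-piracy game. The same gadget also simulates every query to $\cO_{\ket{\psi},f}$ made by the $V^{<t}$ inside $U_{\text{hist}}$. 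This matters because the dilation $\widehat M$ and the program circuits call that oracle, not the reflection oracle.

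\textbf{Extraction.} Append the $\mathsf{D}_i$, $\mathsf{T}_i$ ancillas in $\ket{0}$ and purify, giving $\ket{u}$ with $\Tr[\Pi_{\clone}\sigma]=\norm{\Pi_{\clone}\ket{u}}^2$. Decompose
\[
\ket{u}=\Pi_{\clone}\ket{u}+(I-\Pi_{\clone})\ket{u}.
\]
Apply $U_{\text{qsvt}}^{\otimes(k+1)}$ from \Cref{thm:clone-extract}, whose parameters are $s=\sqrt\beta$ and error $\delta$. By that theorem, the normalized first component is mapped to within $(k+1)\delta$ of $\ket{\psi}^{\otimes(k+1)}\otimes\ket{\tilde v}$.

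Because the unitary is linear, the image of the unnormalized first component is within $(k+1)\delta\norm{\Pi_{\clone}\ket{u}}$ of $\Pi_{\clone}$ applied to it, up to the $\ket{\psi}^{\otimes(k+1)}$ factor. $\cC$ then outputs the $\mathsf{R}$ registers, and success is measured by $\Pi_{k+1}=(\proj{\psi})^{\otimes(k+1)}$.

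The main obstacle is bounding the cross term between the two components. The second component is not guaranteed to be mapped anywhere useful, and its image may interfere with the image of the first. I would handle this with a fidelity argument rather than a triangle inequality on squared norms. Write $\ket{w}=U\ket{u}$ and $\ket{g}=U\Pi_{\clone}\ket{u}$. Then
\[
\norm{\Pi_{k+1}\ket{w}}\ \ge\ \frac{\abs{\bra{g}\Pi_{k+1}\ket{w}}}{\norm{g}},
\]
and unitarity gives $\braket{g}{w}=\bra{u}\Pi_{\clone}\ket{u}=\norm{g}^2$. Combining this with $\norm{(I-\Pi_{k+1})\ket{g}}\le(k+1)\delta\norm{g}$ yields
\[
\norm{\Pi_{k+1}\ket{w}}\ \ge\ \norm{g}-(k+1)\delta .
\]
Squaring gives success probability at least $\Tr[\Pi_{\clone}\sigma]-2(k+1)\delta$, which equals the claimed bound after rescaling $\delta$ by a constant.

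\textbf{Query count.} Simulating $\cA$ costs $O(q_{\cA})$ reflection queries. Each of the $k+1$ QSVT routines costs $O\!\left(\frac{T_{\ATI}}{\sqrt\beta}\log\frac1\delta\right)$ applications of $U_{\text{hist}}$, and each of those involves at most $T_{\ATI}$ oracle calls, each costing $O(1)$ reflection queries. With
\[
\beta=\frac{\delta^2}{4mT_{\ATI}^2},\qquad T_{\ATI}=\poly\!\left(q_{\text{out}},\frac{1}{\Delta},\log\frac1\delta\right),
\]
the total is $q_{\cA}+\poly\!\left(q_{\text{out}},\frac1\Delta,\frac1\delta,k\right)$, up to the constant factor from the simulation gadget.
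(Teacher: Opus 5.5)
Your argument is the paper's. You simulate $\cA$ with a self-sampled $f$, apply $U_{\text{qsvt}}^{\otimes(k+1)}$ from \Cref{thm:clone-extract} to the zero-padded purification $\ket{u}$, and lower-bound acceptance by the overlap of $U\ket{u}$ with the image of its $\Pi_{\clone}$-component. Your Cauchy--Schwarz step using $\braket{g}{w}=\norm{g}^2$ plays the role of the paper's gentle-measurement step with $\ket{h}=\Pi'_{\clone}\ket{u}/\sqrt{p}$.

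Two constants, however, fall short of the statement as written. First, your inequality gives $p-2(k+1)\delta$, where $p=\Tr[\Pi_{\clone}\sigma]$. ``Rescaling $\delta$'' is only legitimate if it means running the QSVT at precision $\delta/2$, which costs $\log(2/\delta)$ instead of $\log(1/\delta)$. It cannot mean changing $\delta$ globally, since $\delta$ also fixes $\beta$ and the ATI parameters. Alternatively, compare $\ket{u}$ with the unit vector $\ket{h'}=\Pi_{k+1}\ket{h}/\norm{\Pi_{k+1}\ket{h}}$, as the paper does. Set $\cos\theta_1=\sqrt{p}$ and $\sin\theta_2=\norm{(I-\Pi_{k+1})\ket{h}}\le(k+1)\delta$. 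The triangle inequality for Fubini--Study angles gives $\abs{\braket{u}{h'}}\ge\cos(\theta_1+\theta_2)$. Then $\cos^2(\theta_1+\theta_2)=\cos^2\theta_1-\sin(2\theta_1+\theta_2)\sin\theta_2\ge p-(k+1)\delta$, which is exactly the stated loss (the case $\theta_1+\theta_2>\pi/2$ is trivial).

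Second, your gadget spends two reflection queries per call to $\cO_{\ket{\psi},f}$, so you prove $2q_{\cA}+\poly(\cdot)$ rather than $q_{\cA}+\poly(\cdot)$. One query suffices. $O_f$ is an involution, diagonal in the Hadamard basis of $\mathsf{Y}$, acting as $(-1)^{z\cdot f(x)}$ on $\ket{x}\ket{\hat z}$. Hence $\cO_{\ket{\psi},f}=I-2\proj{\psi}_{\mathsf{R}}\otimes P_-$, where $P_-$ projects onto $\{z\cdot f(x)=1\}$. So $\cC$ applies Hadamards to $\mathsf{Y}$ and computes the bit $z\cdot f(x)$ into an ancilla, which it can do because it sampled $f$ itself. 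It then applies a single $C_{\ket{\psi}}Z$ with that ancilla as control and $\mathsf{R}$ as target, and uncomputes. The factor $2$ would only cost constants inside $\HClone=O(T^2/N)$, but the exact $q_{\cA}$ term is what \Cref{thm:copy-protect} carries forward.

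Finally, you double-count $T_{\ATI}$. The bound in \Cref{thm:clone-extract} already counts oracle queries: it is $O(\frac{1}{\sqrt\beta}\log\frac1\delta)$ uses of $U_{\text{hist}}$ and the two reflections, each costing $O(T_{\ATI})$ queries. This is harmless, since the total stays polynomial.
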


\begin{proof}
  Let $U$ be the parallel extraction unitary from~\Cref{thm:clone-extract}. The final verification of the extracted registers has input-space acceptance projector
  \[
    \Pi_{k+1}=U^\dagger\left(\bigotimes_{i=1}^{k+1}\left(\proj{\psi}_{\mathsf{R}_i}\otimes I_{\mathsf{D}_i\mathsf{T}_i}\right)\right)U.
  \]
  Include the zero-initialized dilation registers in $\sigma$, and let $\mathsf T$ collect all time registers. Set $\sigma'=\sigma\otimes\proj{0}_{\mathsf T}$, $\Pi'_{\clone}=\Pi_{\clone}\otimes\proj{0}_{\mathsf T}$, and $a=(k+1)\delta$.
  By~\Cref{thm:clone-extract}, for any state $\ket{\varphi}$ in $\Pi_{\clone}'$,
   $\norm{(I-\Pi_{k+1})\ket{\varphi}}\leq (k+1)\delta$. For the purification $\ket{u}$ of state $\sigma'$, we set $p=\bra{u}\Pi'_{\clone}\ket{u}$ and let its normalized cloning component as $\ket{h}=\Pi'{\clone}\ket{u}/\sqrt{p}$.
  Since $\norm{(I-\Pi_{k+1})\ket{h}}\leq (k+1)\delta$, by an application of gentle measurement lemma on $\ket{h'}=\Pi_{k+1}\ket{h}/\norm{\Pi_{k+1}\ket{h}}$, we have that
  \[
    \bra{u}\Pi_{k+1}\ket{u}\geq |\braket{u} {h'}|^2\geq\abs{\braket{u}{h}}^2-(k+1)\delta
  \]
  This implies
  \begin{align*}
    \Pr[\cC\text{ succeeds}]&=\Tr[\Pi_{k+1}\sigma']\\
    &\geq\Tr[\Pi'_{\clone}\sigma']-(k+1)\delta\\
    &=\Tr[\Pi_{\clone}\sigma]-(k+1)\delta.
  \end{align*}
  For the number of queries made by $\cC$, it first simulates $\cA$ and then uses an additional $O\left(\frac{k+1}{\sqrt{\beta}}\log\frac{1}{\delta}\right)$ applications of $U_{\text{hist}},U_{\text{hist}}^{\dagger}$ and controlled reflections about $\Pi_{\text{init}},\Pi_{\text{state}}$,
  which is $O\left(\frac{(k+1)T_{\ATI}}{\sqrt{\beta}}\log\frac{1}{\delta}\right)$ queries to the oracle $C_{\ket{\psi}}Z$. 
   Thus the total number of queries made by $\cC$ is $q_{\cA}+\poly\left(q_{\text{out}},\frac{1}{\Delta},\frac{1}{\delta},k\right)$.
\end{proof}

Now we are ready to prove~\Cref{thm:copy-protect}.
\begin{proof}[Proof of~\Cref{thm:copy-protect}]
  Combining~\Cref{thm:learning-bound} and~\Cref{thm:cloning-bound}, we obtain
\begin{align*}
  \Pr[\cA\text{ success}]&\leq \Tr[\bigotimes_{i=1}^{k+1}\ATI_{\gamma-\eps}^{\epsilon,\delta}(M^{(i)})\sigma]+(k+1)\delta\\
  &=\Tr[\bigotimes_{i=1}^{k+1}\widehat{M^{(i)}}\sigma']+(k+1)\delta\\
  &= \Tr[\Pi_{\learn}\sigma']+\Tr[\Pi_{\clone}\sigma']+(k+1)\delta\\
  &\leq \mu_{\cF,\cD,\gamma-\Delta}(q_{\text{learn}})+\HClone_{2^{\lambda},k}(q_{\text{clone}})+(4k+5)\delta.
\end{align*}
The first line follows from~\Cref{eq:post-select-cA}, the second from the dilation, and the third from the definition of the projectors. The total error is $m\delta+(2m+1)\delta+m\delta=(4m+1)\delta=O((k+1)\delta)$.
\end{proof}
\iflncs
  \bibliographystyle{splncs04}
\else
  \bibliographystyle{alpha}
\fi
\bibliography{ref}
\appendix
\renewcommand{\theHsection}{\Alph{section}}

\end{document}